\newtheorem{theorem}{Theorem}
\newtheorem{task}{Task}
\newtheorem{observation}{Observation}
\newtheorem{lemma}{Lemma}
\newtheorem{prop}{Proposition}
\newtheorem{definition}{Definition}
\newcommand{\mc}{\mathcal}
\newcommand{\mb}{\mathbf}
\newcommand{\mbb}{\mathbb}
\newcommand{\id}{\mathbb{I}}
\newcommand{\comments}[1]{}
\newcommand{\bk}[1]{\mbox{$\left\langle #1 \right\rangle$}}
\begin{document}

\let\oldacl\addcontentsline
\renewcommand{\addcontentsline}[3]{}

\title{Minimal Clifford Shadow Estimation by Mutually Unbiased Bases}

\author{Qingyue Zhang}
\affiliation{Key Laboratory for Information Science of Electromagnetic Waves (Ministry of Education), Fudan University, Shanghai 200433, China}
\author{Qing Liu~\orcidlink{0000-0002-1576-1975}}
\affiliation{Key Laboratory for Information Science of Electromagnetic Waves (Ministry of Education), Fudan University, Shanghai 200433, China}
\author{You Zhou~\orcidlink{0000-0003-0886-077X}}
\email{you\_zhou@fudan.edu.cn}
\affiliation{Key Laboratory for Information Science of Electromagnetic Waves (Ministry of Education), Fudan University, Shanghai 200433, China}
\affiliation{Hefei National Laboratory, Hefei 230088, China}

\begin{abstract}
Predicting properties of large-scale quantum systems is crucial for the development of quantum science and technology. Shadow estimation is an efficient method for this task based on randomized measurements, where many-qubit random Clifford circuits are used for estimating global properties like quantum fidelity. Here we introduce the minimal Clifford measurement (MCM) to reduce the number of possible random circuits to the minimum, while keeping the effective post-processing channel in shadow estimation. In particular, we show that MCM requires $2^n+1$ distinct Clifford circuits, and it can be realized by Mutually Unbiased Bases (MUB), with $n$ as the total qubit number. By applying the Z-Tableau formalism, this ensemble of circuits can be synthesized to the $\mathrm{-S-CZ-H-}$ structure, which can be decomposed to $2n-1$ \emph{fixed} circuit modules, and the total circuit depth is at most $n+1$. Compared to the original Clifford measurements, our MCM reduces the circuit complexity and the compilation costs. In addition, we find the sampling advantage of MCM on estimating off-diagonal operators, and extend this observation to the biased-MCM scheme to enhance the sampling improvement further.

\end{abstract}

\maketitle

\twocolumngrid
\section{Introduction}
Learning quantum systems is of both fundamental and practical interests for quantum physics and quantum information processing \cite{gebhart2023learning,anshu2023survey}. 
With the increase on the qubit-number in various platforms, from quantum networks to quantum simulators and computers \cite{PRXQSimulator,PRXQComputer}, it is crucial to develop efficient tools to benchmark them \cite{Eisert2020certification,Kliesch2021Certification}, from characterizing quantum noises \cite{emerson2005scalable} to measuring interesting properties, such as entanglement entropy \cite{Brydges2019Probing} and out-of-time-ordered correlator \cite{mi2021information}.
Shadow tomography \cite{huang2022learning} is a recently proposed framework to predict many properties of quantum objects, like quantum states and channels with randomized measurements \cite{elben2023randomized}. Compared to traditional quantum tomography aiming to reconstruct quantum objects \cite{haah2017sample,flammia2012quantum}, the key point of shadow estimation is to build a few classical snapshots of the original quantum object \cite{huang2020predicting}, which are utilized to estimate many observables in a multiplex way.

\begin{figure}[htbp]
   \centering \includegraphics[width=0.44\textwidth]{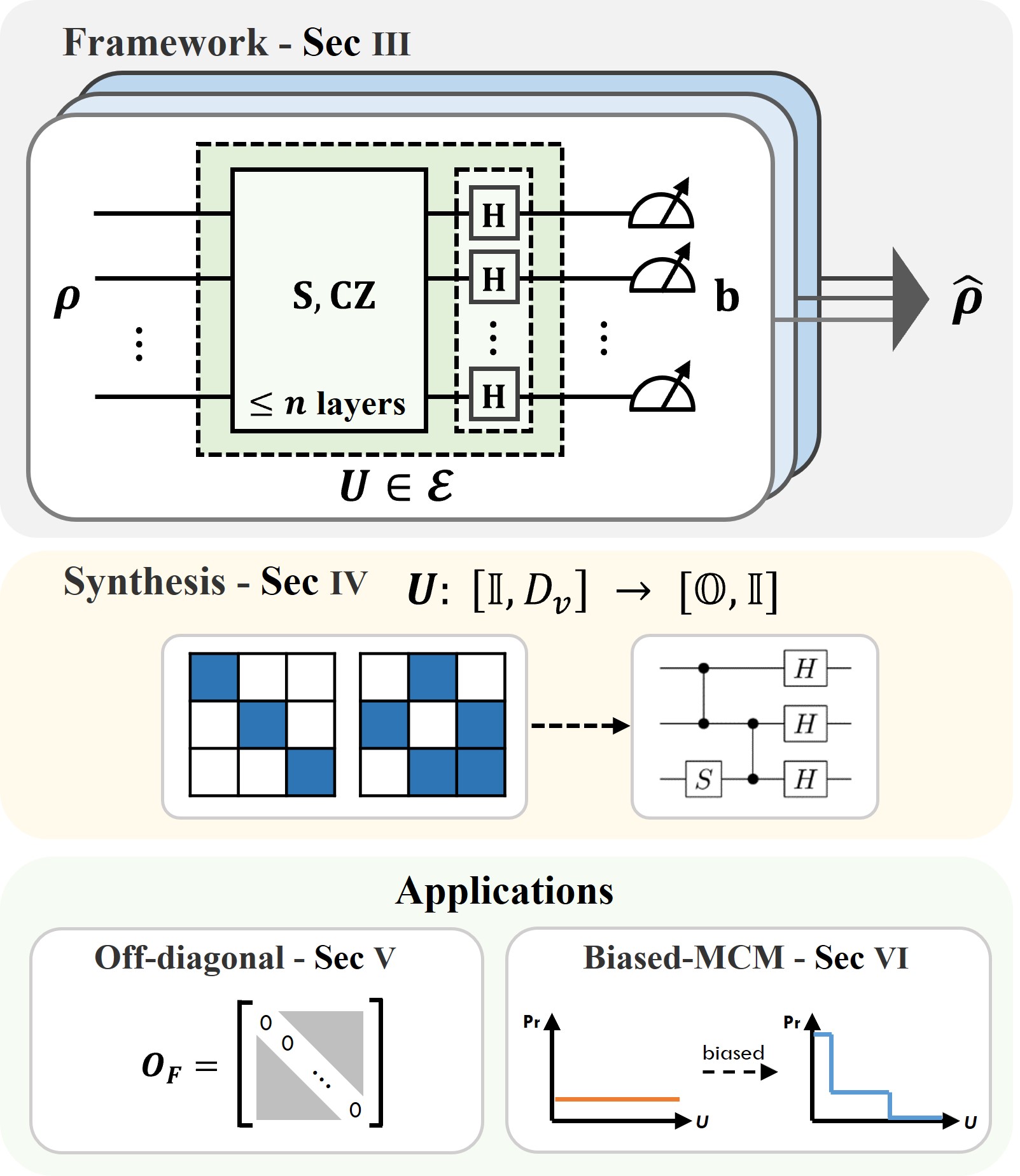}
    \caption{The outline of MCM shadow estimation.}
    \label{fig:main}
\end{figure}

The performance of shadow estimation depends on the applied random unitary evolution on the unknown state and the observables to be predicted \cite{huang2020predicting,hu2023classical,Hu2022Hamiltonian}. There are two primary random unitary ensembles \cite{huang2020predicting}. Pauli measurements enabled by independent single-qubit random unitary rotation is efficient to predict local observables with a constant non-trivial support. On the other hand, Clifford measurements using random Clifford evolution on all $n$ qubits, are efficient for global low-rank observables, such as the fidelity to some entangled states. The Pauli measurement and its variants are more feasible to realize \cite{huang2021efficient,hadfield2022measurements,wu2023overlapped,van2022hardware,zhou2023performance,zhou2022hybrid}, with a few experimental demonstrations \cite{zhang2021experimental,Stricker2022experimental,an2023efficient}.  

In contrast, the development and realization of the Clifford measurement is more challenging, which is mainly due to the complex structure of Clifford group with an astronomical number about $O(4^{n^2})$ of elements \cite{ozols2008clifford,bravyi20226}. In addition, it generally needs $O(n)$-depth quantum circuit with considerable sampling \cite{koenig2014efficiently,van2021simple} and compiling efforts \cite{maslov2022depth}, which further hinders the applications on the near-term quantum platform. There are positive progresses, for example, applying random local quantum gates sequentially to approximate full Clifford measurements \cite{hu2023classical,bertoni2022shallow} or using emergent-design with the assistance of large number of ancilla qubits \cite{tran2023measuring,mcginley2022shadow}. However, the performance is generally guaranteed in the average-case \cite{bertoni2022shallow}, and the classical post-processing is in an approximate and empirical manner \cite{hu2023classical}.

To accelerate the application of Clifford measurement, in this work, we propose the Minimal Clifford measurement (MCM) framework for shadow estimation, to avoid the exhaustion of the full Clifford group. More specifically, we reduce the number of sampled Clifford circuits to its minimum, and meanwhile the post-processing maintains the simple form as the original one. In particular, we prove that such minimal set should contain $2^n+1$ elements, which can be realized by Mutually Unbiased Bases (MUB) (Sec.~\ref{sec:MCM}). We further give a general routine to synthesize these Clifford circuits with the help of Tableau formalism, and the final circuit structure is in the $\mathrm{-S-CZ-H-}$ form with the depth at most $n+1$ (actually can be further reduced to $n/2$ \cite{maslov2022depth}) over the all-to-all architecture. Even though the circuit depth still scales with $n$, 
$\mathrm{S}$ and $\mathrm{CZ}$ gates are diagonal in the computational basis which may be realized simultaneously \cite{bremner2016average}, potentially with the help of Ising-type Hamiltonian \cite{nakata2017efficient}. Most importantly, the $\mathrm{-S-CZ-}$ part can be further decomposed to $2n-1$ fixed modules with an explicit sub-circuit structure. Such module decomposition enables the sampling and synthesis of random Clifford circuits on the module-level in a unified manner, which is more feasible to benchmark and improve experimentally \cite{liu2023group} (Sec.~\ref{sec:circuit}). We give a thorough performance analysis of MCM shadow analytically and numerically. In particular, we find our approach shows some advantage for estimating off-diagonal observables, and relate the variance of the estimation to the \emph{coherence} of the observable (Sec.~\ref{sec:performance}). Furthermore, we develop the biased-MCM protocol as a mimic of biased-Pauli measurement \cite{huang2021efficient,hadfield2022measurements} in the Clifford scenario and demonstrate its advantages. In particular, we show that the optimal variance of biased-MCM is directly quantified by the stabilizerness-norm \cite{campbell2011catalysis} of the observable, a useful measure of magic \cite{Howard2017Resource,leone2022stabilizer}, which may lead to potential applications in fidelity estimation (Sec.~\ref{sec:biased-MCM}).                                       

\section{Preliminaries for shadow estimation and Clifford measurements}\label{sec:review}
Here we first give a brief review \cite{huang2020predicting} on the paradigm of shadow estimation and Clifford group \cite{aaronson2004improved}. For the quantum experiment, an unknown $n$-qubit quantum state $\rho \in \mc{H}_d$ with $d=2^n$ is evolved under an unitary $U$, which is randomly selected from some ensemble $\mc{E}$, to $\rho \mapsto U\rho U^{\dag}$. After that, it is measured in the computational basis to get the result $\mb{b}\in \{0,1\}^n$.  According to Born's rule, the corresponding probability is $\Pr(\mb{b}|U)=\langle \mb{b}| U\rho U^{\dag} |\mb{b}\rangle$. Note that both $U$ and $\mb{b}$ are random variables and the whole process is indeed a random one. 

For the classical post-processing, one `prepares' $U^{\dag} \ket{\mb{b}}\bra{\mb{b}}U$ on the classical computer, and the effective process can be written as a quantum channel.
For a fixed $U$, that is, by taking the expectation first on $\mb{b}$, we denote the effective channel as 
\begin{equation}\label{MUchannel}
\begin{aligned}
\mathcal{M}(\rho|U):&=\mbb{E}_{\mb{b}}\ U^{\dag} \ket{\mb{b}} \bra{\mb{b}}U\\
&=\sum_{ \mb{b}}\tr[\rho U^{\dag} \ket{\mb{b}} \bra{\mb{b}}U] U^{\dag} \ket{\mb{b}} \bra{\mb{b}}U,
\end{aligned}
\end{equation}
where the conditional probability $\Pr(\mb{b}|U)$ is in the trace form for later convenience. Thus the whole channel is obtained by further taking the expectation on $U$,
\begin{equation}\label{Mchannel}
\mathcal{M}_\mc{E}(\rho):= \mbb{E}_{U\in \mc{E}}\ {\mathcal{M}(\rho|U)}.
\end{equation}
If the measurement is tomographically (over-) complete, i.e., one takes sufficient distinct $U$ for evolution, the whole information of $\rho$ should preserve. In other words, the channel $\mathcal{M}_\mc{E}$ can be reversed mathematically, and one can construct the classical snapshot as 
\begin{equation}\label{eq:shadowbasic}
\hat{\rho}=\mathcal{M}_\mc{E}^{-1}\ (U^{\dag} \ket{\mb{b}} \bra{\mb{b}}U).
\end{equation}
It is direct to check that $\mbb{E} \hat{\rho}=\rho$ by Eq.~\eqref{MUchannel} and \eqref{Mchannel}, and $\hat{O}:=\tr[\hat{\rho}O]$ is an unbiased estimator of $\bar{O}:=\tr[\rho O]$. For applications, one can construct the shadow set containing a few of these independent snapshots $\{\hat{\rho}_{i}\}$ to predict many properties $\{O_j\}$.

There are two prominent unitary ensembles, that is, $n$-qubit random Clifford ensemble $\mc{E}_{\mathrm{Cl}}$ and the tensor-product of random single-qubit Clifford gate ensemble $\mc{E}_{\mathrm{Pauli}}$. The Pauli measurement by the ensemble $\mc{E}_{\mathrm{Pauli}}$ can be realized efficiently in experiments, but it works poorly on estimating global observables, such as the fidelity to some many-qubit entangled states. Clifford measurement $\mc{E}_{\mathrm{Cl}}$ is good at this task, with the effective channel and its inverse for Clifford measurement being
\begin{equation}\label{Ch&Inv}
\begin{aligned}
&\mc{M}_{Cl}(A)=(2^n+1)^{-1}[A+\tr(A)\id],\\
&\mc{M}_{Cl}^{-1}(A)=(2^n+1)A-\tr(A)\id.
\end{aligned}
\end{equation}

Every coin has two sides.
Random Clifford unitary is challenging to realize on current quantum platforms, as it requires quite a few two-qubit gates. Current methods evenly sample and compile a Clifford to $O(n)$-depth quantum circuit with time complexity $O(n^2)$ \cite{bravyi2021hadamard,maslov2018shorter}. 
Note that the synthesis of Clifford circuits is intricately tied to the qubit connectivity of platforms. Over Linear-Nearest-Neighbour architecture, the upper bound of two-qubit circuit depth is $7n-4$, while for all-to-all architectures, it suggests $1.5n+O(\log^2(n))$ \cite{maslov2023cnot}. In the following Result parts, Sec.~\ref{sec:MCM} and \ref{sec:circuit}, we aim to simplify the circuit construction of Clifford measurement to its minimal form with the help of MUB for the all-to-all architecture.

At the end of this section, we recast the essentials of Clifford unitary and Pauli group for later use. The Pauli group for $n$-qubit quantum system is $\mathbb{P}^n=\{\pm \sqrt{\pm{1}}\otimes\{\id_2,X,Y,Z\}\}^{\otimes n}$, with $\id_2$, $X,Y,Z$ being the identity and Pauli operators for single-qubit. For convenience, we denote the quotient group without the phase as $\textbf{P}^n=\mathbb{P}^n/\pm \sqrt{\pm 1}$, and all non-identity Pauli operator as $\textbf{P}_*^n=\textbf{P}^n \setminus \id$. 
Clifford unitary is the normalizer of $\mbb{P}^n$, i.e., $U^{\dag}PU\in \mbb{P}^n, \forall P\in \mathbb{P}^n$, and a specific Clifford unitary is determined by its action on $U^{\dag}X_iU$ and $U^{\dag}Z_iU$ for $i\in[n]$, as $X_i$ and $Z_i$
can generate all Pauli operators. Hereafter we use $Z_i(X_i)$ for short to denote an $n$-qubit operator $Z_i\otimes\id_{[n]/\{i\}}$ ($X_i\otimes\id_{[n]/\{i\}}$) with the $i$-th qubit being non-idenity.

\section{Minimal Clifford Measurement from Mutually Unbiased Bases}\label{sec:MCM}
To simplify the full Clifford measurement in the original shadow estimation, we raise the following task which aims to minimize the size of the Clifford group while preserving the effective quantum channel $\mathcal{M}_{Cl}$. 

\begin{task}
\label{task:main}
 Suppose $\mc{E}$ is a subset of the full Clifford group $\mc{E}_{\mathrm{Cl}}$, the task reads
 \begin{equation}
\begin{split}
    & \textbf{min:  }  |\mc{E}| \\
& \textbf{s.t.:  } \forall \rho, 
 \frac{1}{|\mc{E}|} \sum_{U\in \mc{E}}{\mc{M}(\rho|U)}=\mc{M}_{Cl}(\rho),
\end{split}
\label{Equ: opt_task}
\end{equation}
where $\mc{M}_{Cl}(\rho)$ is the effective quantum channel shown in Eq.~\eqref{Ch&Inv}.
\end{task}

First of all, we show a lower bound of $|\mc{E}|$ by introducing a lemma that accounts for the effect of individual Clifford unitary. After the Clifford $U$ rotation, the final measurement is conducted in the computational basis, that is, the $Z$-basis. As a result, we only need to care about the action of $U$ on the $Z$-basis. Define the generators in the Heisenberg picture as 
\begin{equation}\label{eq:generator}
g_{i}={{U}^{\dag}}Z_iU,\ i\in[n].
\end{equation}
In other words, different Clifford $U$ with the same (ignoring the phase) $\bk{g_{i}}_{i=0}^{n-1}$ lead to the same measurement setting in the shadow estimation, as  shown explicitly in the following Lemma \ref{th:ClF_decom}.

\begin{lemma}\label{th:ClF_decom}
Denote the operators generated by $\bk{g_{i}}_{i=0}^{n-1}$ given in Eq.~\eqref{eq:generator} as $S_{\mb{m}}=\prod_{i=0}^{n-1} g_{i}^{{m}_{i}}$ with $\mb{m}$ an $n$-bit vector, the quantum channel $\mathcal{M}(\rho|U)$ defined in Eq.~\eqref{MUchannel} shows 
\begin{equation}\label{MUchannelS}
\mathcal{M}\left( \rho|U \right)={2}^{-n}\underset{{\mb{m}}\in {\{ 0,1\}^n }}{\mathop \sum }\tr( \rho S_{\mb{m}}){S_{\mb{m}}}.
\end{equation}
\end{lemma}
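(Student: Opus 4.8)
The plan is to reduce the whole object to the action of $U$ on $Z$-type Pauli operators, since after the conjugation only the computational-basis statistics matter. First I would expand the rank-one projector onto a computational basis vector in the Pauli basis,
\begin{equation}
\ket{\mb{b}}\bra{\mb{b}}=\prod_{i=0}^{n-1}\frac{\id+(-1)^{b_i}Z_i}{2}=2^{-n}\sum_{\mb{m}\in\{0,1\}^n}(-1)^{\mb{m}\cdot\mb{b}}\,Z_{\mb{m}},
\end{equation}
with $Z_{\mb{m}}:=\prod_{i=0}^{n-1}Z_i^{m_i}$ and $\mb{m}\cdot\mb{b}:=\sum_i m_ib_i$. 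Conjugating by $U$ and using that $P\mapsto U^{\dag}PU$ is a group homomorphism on $\mbb{P}^n$, one gets $U^{\dag}Z_{\mb{m}}U=\prod_i(U^{\dag}Z_iU)^{m_i}=\prod_i g_i^{m_i}=S_{\mb{m}}$, hence $U^{\dag}\ket{\mb{b}}\bra{\mb{b}}U=2^{-n}\sum_{\mb{m}}(-1)^{\mb{m}\cdot\mb{b}}S_{\mb{m}}$.

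Next I would insert this expansion into the two copies of $U^{\dag}\ket{\mb{b}}\bra{\mb{b}}U$ appearing in Eq.~\eqref{MUchannel}. This turns $\mc{M}(\rho|U)$ into $2^{-2n}\sum_{\mb{b}}\sum_{\mb{m},\mb{m}'}(-1)^{(\mb{m}\oplus\mb{m}')\cdot\mb{b}}\tr(\rho S_{\mb{m}})\,S_{\mb{m}'}$. Carrying out the $\mb{b}$-summation first and using the character identity $\sum_{\mb{b}\in\{0,1\}^n}(-1)^{(\mb{m}\oplus\mb{m}')\cdot\mb{b}}=2^n\delta_{\mb{m},\mb{m}'}$ collapses the double sum to a single sum over $\mb{m}$, and the surviving prefactor $2^{-2n}\cdot 2^n=2^{-n}$ reproduces Eq.~\eqref{MUchannelS}.

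The argument is essentially mechanical, so the only thing requiring attention is phase/ordering bookkeeping. I would check that the $g_i$ commute pairwise --- they are images of the commuting $Z_i$ under a Clifford conjugation, which preserves commutators --- so that $S_{\mb{m}}$ is unambiguously defined, and that each $g_i=U^{\dag}Z_iU$ is a genuine (Hermitian, eigenvalues $\pm1$) element of $\mbb{P}^n$ rather than an $\pm i$-phased one, so that $U^{\dag}Z_{\mb{m}}U=S_{\mb{m}}$ holds exactly with no residual sign. I do not expect any obstacle beyond these routine checks.
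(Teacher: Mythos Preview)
Your proposal is correct and follows essentially the same route as the paper: expand $\ket{\mb{b}}\bra{\mb{b}}$ in the $Z_{\mb{m}}$ basis, conjugate by $U$ to obtain $\Phi_{U,\mb{b}}=2^{-n}\sum_{\mb{m}}(-1)^{\mb{m}\cdot\mb{b}}S_{\mb{m}}$, substitute both copies into Eq.~\eqref{MUchannel}, and collapse the double sum via the character identity $\sum_{\mb{b}}(-1)^{(\mb{m}+\mb{m}')\cdot\mb{b}}=2^n\delta_{\mb{m},\mb{m}'}$. Your extra remarks on commutativity of the $g_i$ and the absence of $\pm i$ phases are sound sanity checks that the paper leaves implicit.
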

The proof is left in 
Appendix \ref{ap:Lemma1}, which transforms the summation of computational basis $\mb{b}$ to the operators $S_{\mb{m}}$. In fact, $S_{\mb{m}}$ are the stabilizers for the state $\ket{\Psi_0}=U^{\dag}\ket{\mb{0}}$ with $S_{\mb{m}}\ket{\Psi_0}=\ket{\Psi_0}, \forall \mb{m}$. $g_i$ and $S_{\mb{m}}$ are Hermitian operators by definition, thus the possible phase does not affect the measurement setting  in Eq.~\eqref{MUchannelS}. 
Hereafter, we ignore the global phase of them, that is $g_i,S_{\mb{m}}\in \textbf{P}^n$.

\begin{prop}\label{th:Mini_UB}
The cardinality of the unitary ensemble $\mc{E}$ in Task \ref{task:main} is lower bounded by $|\mc{E}| \geq 2^n+1$.
\end{prop}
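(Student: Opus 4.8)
The plan is to count the number of distinct Pauli operators that can possibly appear in the expansion of $\mc{M}_{Cl}(\rho)$ and argue that a single Clifford $U$ can only contribute a $2^n$-dimensional stabilizer subgroup worth of them, so one needs enough Cliffords to cover everything. Concretely, by Eq.~\eqref{Ch&Inv} and linearity, expanding $\rho$ in the Pauli basis shows $\mc{M}_{Cl}(\rho)=(2^n+1)^{-1}\sum_{P\in\textbf{P}^n}\tr(\rho P)2^{-n}P$ — that is, the target channel is proportional to the identity channel plus the trace channel, and crucially \emph{every} non-identity Pauli $P\in\textbf{P}_*^n$ survives with a nonzero coefficient. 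On the other hand, Lemma~\ref{th:ClF_decom} says $\mc{M}(\rho|U)=2^{-n}\sum_{\mb{m}}\tr(\rho S_{\mb{m}})S_{\mb{m}}$, so the $U$-conditioned channel only involves the $2^n$ Pauli operators in the stabilizer group $\mc{S}_U=\bk{g_i}$, which always contains $\id$ (at $\mb{m}=\mb 0$) and therefore at most $2^n-1$ elements of $\textbf{P}_*^n$.

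First I would fix an arbitrary non-identity Pauli $Q\in\textbf{P}_*^n$ and evaluate both sides of the constraint in Task~\ref{task:main} against it, i.e.\ take $\rho$ to have a component along $Q$ and read off the coefficient of $Q$. On the left-hand side, averaging Lemma~\ref{th:ClF_decom} over $\mc{E}$, the coefficient of $Q$ is $\frac{1}{|\mc{E}|}\sum_{U\in\mc{E}} 2^{-n}\,[\,Q\in\mc{S}_U\,]\,\tr(\rho Q)$ where $[\cdot]$ is an indicator (the phase ambiguity is harmless since $g_i,S_{\mb m}\in\textbf{P}^n$ and the operators are Hermitian, as noted after the lemma). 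On the right-hand side the coefficient is $(2^n+1)^{-1}2^{-n}\tr(\rho Q)$, which is \emph{nonzero}. Hence for \emph{every} $Q\in\textbf{P}_*^n$ there must be at least one $U\in\mc{E}$ with $Q\in\mc{S}_U$; equivalently $\bigcup_{U\in\mc{E}}\mc{S}_U=\textbf{P}^n$.

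Then the counting finishes the argument: each $\mc{S}_U$ is an abelian subgroup of $\textbf{P}^n$ of order $2^n$ containing $\id$, so $|\mc{S}_U\cap\textbf{P}_*^n|\le 2^n-1$, and since the $\mc{S}_U$ for $U\in\mc{E}$ must cover all $4^n-1$ elements of $\textbf{P}_*^n$ we get $|\mc{E}|\,(2^n-1)\ge 4^n-1=(2^n-1)(2^n+1)$, i.e.\ $|\mc{E}|\ge 2^n+1$. (One can tighten the bookkeeping by recalling that distinct maximal stabilizer groups over $\textbf P^n$ intersect in a strictly smaller subgroup, but the crude bound already gives the stated result, so I would not belabor this.)

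The main obstacle I anticipate is the bookkeeping around the phases and the precise claim that the coefficient of each $Q\in\textbf{P}_*^n$ in $\mc{M}_{Cl}(\rho)$ is genuinely nonzero and cannot be cancelled — this requires being careful that in Lemma~\ref{th:ClF_decom} the $S_{\mb m}$ range over a genuine group (so that for a given $U$ each Pauli appears at most once, with a definite sign that, being Hermitian, is $\pm1$ and does not produce destructive interference with the positive target coefficient once we also allow ourselves to choose $\rho$, e.g.\ $\rho=(\id+\epsilon Q)/2^n$). Everything else is a short orbit-counting / covering argument.
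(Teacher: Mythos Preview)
Your proposal is correct and essentially identical to the paper's proof: both invoke Lemma~\ref{th:ClF_decom} to see that each $U$ contributes only the $2^n-1$ non-identity Paulis in $\mc{S}_U$, then observe that matching the constraint in Task~\ref{task:main} for all $\rho$ forces every $Q\in\mathbf{P}_*^n$ to lie in some $\mc{S}_U$ (the paper picks $\rho=(\sigma+\mathbb I)/2^n$ exactly as you suggest), and finish with the covering count $|\mc{E}|(2^n-1)\ge 4^n-1$. Your worry about destructive interference is unnecessary---since each $S_{\mb m}$ appears quadratically in $\tr(\rho S_{\mb m})S_{\mb m}$ any phase cancels, so the coefficient of $Q$ on the left is a nonnegative integer multiple of $2^{-n}\tr(\rho Q)/|\mc{E}|$, and comparing with the strictly positive right-hand coefficient $(2^n+1)^{-1}2^{-n}\tr(\rho Q)$ (for suitable $\rho$) forces that integer to be at least~$1$.
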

The proof given in 
Appendix \ref{ap:th1}
is based on the fact that one individual Clifford unitary acquires $2^n-1$ non-identity Pauli information as shown in Lemma \ref{th:ClF_decom}, and there are totally $4^n-1$ ones. In this case, the Clifford elements in $\mc{E}$ do not share non-identity $S_{\mb{m}}$, one has $|\mc{E}|=(4^n-1)/(2^n-1)=2^n+1$ to cover all the Paulis for tomographic completeness.

\begin{definition}\label{df:MCM}
Suppose a subset of the Clifford group denoted by $\mc{E}_{min}$ reaches the lower bound in Proposition \ref{th:Mini_UB}, we call the corresponding measurement as the minimal Clifford measurement (MCM).
\end{definition}
Note that MCM only simplifies the Clifford measurement, but keeps the post-processing. As a result, the realization of MCM shadow estimation follows the same routine as the original one, except the random Clifford circuit ensemble applied before the final projective measurement.
Next, by introducing the Mutually Unbiased Bases (MUB) \cite{bandyopadhyay2002new,durt2005mutually,durt2010mutually}, we find a typical set $\mc{E}_{min}=\mc{E}_{\mathrm{MUB}}$ that saturates the lower bound. Here the Clifford unitaries in $\mc{E}_{\mathrm{MUB}}$ are written in the form of the stabilizer generators $\bk{g_i}$.

\begin{prop}\label{th:MCMbyMUB}
The MUB in Ref.~\cite{bandyopadhyay2002new} is an MCM, and the generators for all $2^n+1$ different Clifford unitaries are written by the stabilizer generators as follows. 
The $0$-th element of $\mc{E}_{\mathrm{MUB}}$ is $\bk{Z_i}_{i=0}^{n-1}$, i.e., the $Z$-basis measurement;
and the other $2^n$ elements labeled by $v=0,1,...,2^n-1$  are

\begin{equation}\label{Equ:MUB2}
\begin{aligned}
\bk{g_i= \sqrt{-1}^{\alpha_{v,i,i}} X_i\bigotimes_{j=0}^{n-1} Z_j^{\alpha_{v,i,j}}},
\end{aligned}
\end{equation}
with $\alpha_{v,i,j}=[(v \odot 2^i)M_{n}^{(0)}]_j$ being a binary value. Here $\odot$ denotes the multiplication in Galois Field $GF(2^n)$, $(\cdot)$ transforms the number to an $n$-bit string, and $M_n^{(0)}$ is an $n\times n$ binary matrix.
\end{prop}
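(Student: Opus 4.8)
The plan is to prove the statement in two steps: an abstract argument that any complete set of stabilizer-type mutually unbiased bases realizes an MCM, and a concrete check that the generators in Eq.~\eqref{Equ:MUB2} constitute such a set and coincide with the construction of Ref.~\cite{bandyopadhyay2002new}. \emph{Step 1 (stabilizer MUB give an MCM).} A complete set of $2^n+1$ MUB of the stabilizer type is equivalent to a partition of the $4^n-1$ nonidentity elements of $\textbf{P}^n$ into $2^n+1$ maximal abelian subgroups, each of order $2^n$, so each contributes $2^n-1$ nonidentity Paulis. Every maximal abelian subgroup of $\textbf{P}^n$ lifts to a sign-consistent stabilizer group (take Hermitian Pauli lifts of any independent generating set; symplectic orthogonality forces them to commute exactly and $-\id$ never occurs), whose common $+1$ eigenbasis has the form $\{U^{\dag}\ket{\mb{b}}\}_{\mb{b}}$ for some Clifford $U$, with $U^{\dag}Z_iU$ equal to the chosen generators; hence $\mc{E}_{\mathrm{MUB}}\subset\mc{E}_{\mathrm{Cl}}$ and $|\mc{E}_{\mathrm{MUB}}|=2^n+1$. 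Applying Lemma~\ref{th:ClF_decom} to each of these $2^n+1$ unitaries and averaging, the $\mb{m}=\mb{0}$ contributions give $(2^n+1)\,\id$, while the $\mb{m}\neq\mb{0}$ contributions run, by the partition property, exactly once over every $P\in\textbf{P}_*^n$ (the undetermined sign of $S_{\mb{m}}$ is immaterial since $\tr(\rho P)P$ is sign-invariant). Substituting the Pauli expansion $\sum_{P\in\textbf{P}_*^n}\tr(\rho P)P=2^n\rho-\id$ collapses the average to $2^{-n}(2^n+1)^{-1}[(2^n+1)\id+2^n\rho-\id]=(2^n+1)^{-1}(\rho+\id)=\mc{M}_{Cl}(\rho)$. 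As $|\mc{E}_{\mathrm{MUB}}|$ attains the bound of Proposition~\ref{th:Mini_UB}, it is an MCM by Definition~\ref{df:MCM}.

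\emph{Step 2 (the generators in Eq.~\eqref{Equ:MUB2} form such a set).} Collect the $Z$-exponents of the $v$-th family into the matrix $A_v$ with $(A_v)_{ij}=\alpha_{v,i,j}$. I would verify three points: (i) each $g_i$ is Hermitian, since the phase $\sqrt{-1}^{\alpha_{v,i,i}}$ is exactly what turns $X_iZ_i$ on qubit $i$ into $Y_i$ when the diagonal exponent is $1$, while the off-site factors $Z_j^{\alpha_{v,i,j}}$ are already Hermitian; (ii) the $g_i$ pairwise commute, because the symplectic criterion for $X^{\mb{e}_i}Z^{(A_v)_{i,\cdot}}$ and $X^{\mb{e}_{i'}}Z^{(A_v)_{i',\cdot}}$ reduces to $(A_v)_{ii'}=(A_v)_{i'i}$, i.e.\ to symmetry of $A_v$; (iii) the $g_i$ are independent (their $X$-parts are the $\mb{e}_i$), so $\bk{g_i}$ has order $2^n$ and the $2^n-1$ nonidentity elements of the $v$-th family are $\{X^{\mb{m}}Z^{A_v^{T}\mb{m}}:\mb{m}\neq\mb{0}\}$, which together with the $Z$-class $\bk{Z_i}$ give $2^n+1$ families that are pairwise disjoint whenever $A_v-A_{v'}$ is invertible for $v\neq v'$. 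Points (ii) and (iii) are where the structure of $M_n^{(0)}$ enters: $A_v$ is the $GF(2)$-matrix of ``multiply by the field element $(v)$, then apply $M_n^{(0)}$'', and invoking the defining property of $M_n^{(0)}$ — that it encodes the trace bilinear form, so that $(A_v)_{ij}=\mathrm{Tr}_{GF(2^n)/GF(2)}(v\cdot 2^i\cdot 2^j)$ with $2^i,2^j$ read as field elements — symmetry of $A_v$ is manifest, and $A_v-A_{v'}=A_{v\oplus v'}$ is invertible for $v\neq v'$ because the trace form is nondegenerate. A short count ($2^n(2^n-1)$ operators with nontrivial $X$-part, plus $2^n-1$ in the $Z$-class, totalling $4^n-1$) confirms the partition; combined with Step 1 this proves the claim, and unwinding the same $GF(2^n)$ arithmetic identifies $\mc{E}_{\mathrm{MUB}}$ with the bases of Ref.~\cite{bandyopadhyay2002new}.

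\emph{Main obstacle.} The crux is the $M_n^{(0)}$ computation in Step 2: establishing that $M_n^{(0)}$ makes every $A_v$ simultaneously symmetric while keeping the differences $A_v-A_{v'}$ invertible. This is really the statement that $GF(2^n)$-multiplication is self-adjoint with respect to the nondegenerate symmetric trace form, so that the right presentation of $M_n^{(0)}$ promotes that self-adjointness to literal matrix symmetry and the nondegeneracy to invertibility of $A_v-A_{v'}$. Once this is in place, Hermiticity, the symplectic bookkeeping, the counting, and the channel collapse of Step 1 are all routine.
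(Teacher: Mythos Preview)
Your two-step plan matches the paper's proof almost exactly: the paper proves your Step~1 as its Observation~\ref{ob: MUB_MCM} (apply Lemma~\ref{th:ClF_decom} to each basis and use the partition of $\mb{P}_*^n$), and your Step~2 as its Observation~\ref{ob: MCS_MUB}. The only substantive difference is in how the partition property is established. The paper quotes the closed form $S_{\mb{m},v}=\omega_m^v X(m)Z(m\odot v)$ from Ref.~\cite{durt2010mutually} and notes that $(m,v)\mapsto(m,m\odot v)$ is injective for $m\neq 0$ because $GF(2^n)$ has no zero divisors; you instead organise the same computation around the matrices $A_v$ and reduce disjointness to invertibility of $A_v-A_{v'}=A_{v\oplus v'}$. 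Both arguments are the same field-theoretic fact in different clothing; yours is a bit more self-contained (you verify commutation and Hermiticity explicitly rather than citing the MUB literature), while the paper's is shorter.

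One small correction: your identification of $M_n^{(0)}$ with the \emph{trace} form is not quite what the paper defines. By the paper's $aM_n^{(0)}b^T=[a\odot b]_0$, the bilinear form is $(x,y)\mapsto[x\odot y]_0$, i.e.\ the $0$-th coordinate functional composed with multiplication, which in the polynomial basis is generally not the field trace. This does not damage your argument, because any nonzero $GF(2)$-linear functional $\ell$ on $GF(2^n)$ makes $(x,y)\mapsto\ell(xy)$ nondegenerate (if $\ell(xy)=0$ for all $y$ and $x\neq 0$, then $\ell\equiv 0$), and hence $A_w$ is invertible for $w\neq 0$; just replace ``trace'' by ``$0$-th coordinate functional'' and your Step~2 goes through verbatim.
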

 The specific definition of $M_n^{(0)}$ 
 is in 
 Appendix \ref{ap:MUB_Galois}, and the detailed proof of $\mc{E}_{\mathrm{MUB}}$ being an MCM is left in 
 Appendix \ref{ap:MUB_proof}. In fact, Eq.~\eqref{Equ:MUB2} offers another way to represent the MUB previously introduced in Ref.~\cite{durt2010mutually}. 

Some remarks about the quantum-design are demonstrated as follows. Note that the measurement channel in Eq.~\eqref{MUchannel}, Eq.~\eqref{Mchannel} and Eq.~\eqref{MUchannelS} are second-order functions of the quantum state, so any (projective) $2$-design ensemble returns the same channel as the full Clifford measurement in Eq.~\eqref{Ch&Inv}. It was shown that any covariant-Clifford 2-design of an $n$-qubit system reaches its minimum as an MUB \cite{zhu2015mutually}. As a result, MUB leads to the same channel as the full Clifford.  In some sense, our approach here manifests this result from the perspective of the action of individual Clifford element via Lemma \ref{th:ClF_decom}.

\section{Efficient circuit synthesis of MCM}\label{sec:circuit}
In this section, we focus on the sampling and synthesis of Clifford circuit of MCM shadow estimation, especially for the unitary ensemble $\mc{E}_{\mathrm{MUB}}$ given in Proposition \ref{th:MCMbyMUB}. The synthesis applies Tableau formula under the framework of Gottesman-Knill theorem \cite{aaronson2004improved,gottesman2002introduction}. Interestingly, as  $|\mc{E}_{\mathrm{MUB}}|\ll|\mc{E}_{\mathrm{Cl}}|$, there is possible room for further simplification on the quantum circuits. Finally, we give a unified method to sample and synthesize on the module-level, which makes the realization of MCM shadow very efficient.

Here we first briefly introduce the simplified Z-Tableau and leave more details in 
Appendix \ref{ap:Z-Tableau}. As shown in Sec.~\ref{sec:MCM}, the essential information is the stabilizer generators $\bk{g_i}$ in Eq.~\eqref{eq:generator} without considering the phase factor. In this way, we write 
\begin{equation}\label{eq:geneNoPhase}
\Tilde{g}_i=\bigotimes_{j=0}^{n-1} X_j^{\gamma_{ij}} Z_j^{\delta_{ij}},\ i\in[n],
\end{equation}
where $\gamma_{ij}$ and $\delta_{ij}$ are elements of two $n\times n$ binary matrices $C$ and $D$, and $T=[C,D]$ is called the Z-Tableau. 
The action of Clifford gates $V\Tilde{g}_iV^{\dag}$ on $\Tilde{g}_i$ is thus recorded as the transformation of the matrix $T$. As a result, our task is to find $V$ such that it takes all $\Tilde{g}_i$ back to the original $Z_i$, i.e., take $T$ to $T_0=[\mbb{O},\mbb{I}]$, with $C=\mbb{O}$ the null matrix, and $D=\mbb{I}$ the identity matrix.
We choose the basic generating Clifford gates as the single-qubit Hadamard gate $\mathrm{H}$, phase gate $\mathrm{S}$, and two-qubit controlled-Z gate $\mathrm{CZ}$, with the corresponding update rules of Tableau $T$ listed as follows. For all $i$, 
\begin{itemize}
\item $\mathrm{H}(a)$: exchange $\gamma_{i,a}$ and $\delta_{i,a}$ ;
\item $\mathrm{S}(a)$: $\delta_{i,a}:=\gamma_{i,a}+\delta_{i,a}$;
\item $\mathrm{CZ}(a,b)$: $\delta_{i,a}:=\delta_{i,a}+\gamma_{i,b},\ \delta_{i,b}:=\delta_{i,b}+\gamma_{i,a}$,
\end{itemize}
on qubit $a$, and qubit-pair $(a,b)$ respectively \cite{aaronson2004improved,van2021simple}.

In MCM, one needs evenly sample a unitary from the ensemble $\mc{E}_{\mathrm{MUB}}$ given in Proposition \ref{th:MCMbyMUB}. For elements in $\mc{E}_{\mathrm{MUB}}$, the Tableau of the $0$-th element is already $[\mathbb{O},\mathbb{I}]$, thus we do not need to synthesize; Tableaus of the remaining $2^n$ ones in Eq.~\eqref{Equ:MUB2} are all in the form of $[\mathbb{I},D_v]$ with $v$ denoting the element label, and $\alpha_{v,i,j}=[D_v]_{i,j}$. In particular, the $i$-th row of $D_v$ is the vector $(v\odot 2^i)M_n^{(0)}$ \cite{durt2010mutually}, as shown in Proposition \ref{th:MCMbyMUB}.

The circuit sampling and synthesis for all $2^n$ Tableaus in the form of $[\mathbb{I},D_v]$ is shown as follows.
First we introduce a special property for $D_v$ to simplify the procedure. A matrix $D$ is said to be a Hankel matrix if $D_{i,j}=D_{i',j'}$ for any $i+j=i'+j'$. We prove the following result in 
Appendix \ref{ap:Hankel}.
\begin{prop}\label{th:hankel}
The matrix $D_v$ from the Tableau $[\mathbb{I},D_v]$ of the element in $\mc{E}_{\mathrm{MUB}}$ given in Eq.~\eqref{Equ:MUB2} is a Hankel matrix, for all $v=0,1,\cdots,2^n-1$. 
\end{prop}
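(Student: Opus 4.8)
The plan is to show that the $(i,j)$ entry $\alpha_{v,i,j}=[(v\odot 2^i)M_n^{(0)}]_j$ depends only on the sum $i+j$. The key structural fact to exploit is that multiplication by $2^i$ in $GF(2^n)$ is a linear map over $\mathbb{F}_2$, and in a well-chosen basis it acts as a cyclic shift. Concretely, if we represent field elements as binary $n$-vectors in the polynomial basis $\{1,\beta,\beta^2,\dots,\beta^{n-1}\}$ where $\beta$ is a root of the primitive polynomial defining $GF(2^n)$, then $v\odot 2^i$ is just $v$ shifted by $i$ positions (with the wrap-around dictated by the defining polynomial). So the first step is to recall, from Appendix~\ref{ap:MUB_Galois}, the precise definition of $M_n^{(0)}$ and of the encoding $(\cdot)$ turning a field element into an $n$-bit string, and to pin down in which basis the "multiply by $2$" map $L:x\mapsto 2\odot x$ is the companion-matrix shift.

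Next I would reformulate the claim entirely in linear-algebra terms: writing $v$ as a row vector, the $i$-th row of $D_v$ is $vL^iM_n^{(0)}$, so the Hankel property $\alpha_{v,i,j}=\alpha_{v,i',j'}$ whenever $i+j=i'+j'$ amounts to proving that $[vL^iM_n^{(0)}]_j$ is unchanged under $(i,j)\mapsto(i+1,j-1)$. If $e_j$ denotes the $j$-th standard basis column vector, this is the identity $vL^iM_n^{(0)}e_j = vL^{i+1}M_n^{(0)}e_{j-1}$ for all $v$, i.e. $L^iM_n^{(0)}e_j = L^{i+1}M_n^{(0)}e_{j-1}$, which in turn reduces to the single relation $M_n^{(0)}e_j = LM_n^{(0)}e_{j-1}$, that is, $M_n^{(0)}$ intertwines the shift on column index with left-multiplication by $L$: $LM_n^{(0)} = M_n^{(0)}N$, where $N$ is the shift matrix sending $e_{j}\mapsto e_{j-1}$ (equivalently, column $j$ of $M_n^{(0)}$ equals $L$ times column $j-1$). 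So the whole proposition collapses to verifying this one commutation/recursion relation for the specific matrix $M_n^{(0)}$.

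To close that last step I would go back to how $M_n^{(0)}$ is constructed in Ref.~\cite{bandyopadhyay2002new} and Appendix~\ref{ap:MUB_Galois}: it is built from the field structure itself (its rows or columns are $GF(2^n)$-traces or powers of $\beta$), so the relation "next column $=$ multiply previous column by $2$" should be essentially the defining recursion of $M_n^{(0)}$, and checking it is a short computation using the companion-matrix form of $L$ and the Cayley--Hamilton relation $L^n = \sum_k c_k L^k$ coming from the primitive polynomial (this governs the wrap-around, and is exactly why no boundary term spoils the Hankel pattern). I would also separately note the trivial boundary case $v=0$, for which $D_0=\mathbb{O}$ is vacuously Hankel, and remark that the diagonal entries $\alpha_{v,i,i}$ (carrying the phase $\sqrt{-1}^{\alpha_{v,i,i}}$ in Eq.~\eqref{Equ:MUB2}) are consistent with the pattern since they correspond to $i+j=2i$.

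The main obstacle I anticipate is purely notational/bookkeeping: making sure the encoding map $(\cdot):GF(2^n)\to\{0,1\}^n$, the action of $\odot 2^i$, and the row-versus-column conventions in the definition of $M_n^{(0)}$ are all aligned so that "multiplication by $2$" really is the clean shift $N$ and not some conjugate of it. Once the correct basis is fixed, the algebra is short; the risk is entirely in getting the conventions of Appendix~\ref{ap:MUB_Galois} to match. A secondary subtlety is handling the wrap-around term (the reduction modulo the primitive polynomial) without breaking the Hankel pattern — but since the Hankel condition only compares entries with equal index-sum and the field recursion is uniform in $i$, this should go through automatically rather than requiring case analysis.
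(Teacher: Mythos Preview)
Your approach is correct and is essentially the same computation as the paper's, just packaged more abstractly. The paper reduces to showing that the $(2n-1)\times n$ matrix $M_n:=\Gamma_n M_n^{(0)}$ is Hankel (so that each $\mathbb{D}_i$, being $n$ consecutive rows of $M_n$, is Hankel, and hence $D_v=\sum_i v_i\mathbb{D}_i$ is too), and then verifies $M_{i,j}=M_{i-1,j+1}$ by two explicit index computations using the recursion $\Gamma_{i,j}=(1-\delta_{j,0})\Gamma_{i-1,j-1}+\Gamma_{n,j}\Gamma_{i-1,n-1}$. Your reduction to the single column identity $M_n^{(0)}e_j = L\,M_n^{(0)}e_{j-1}$ is the same content: since $[M_n^{(0)}]_{p,q}=\Gamma_{p+q,0}$, both columns agree in the first $n-1$ entries automatically, and the last entry is exactly the relation $\Gamma_{j+n-1,0}=\sum_{k}c_k\Gamma_{j-1+k,0}$, which is $\beta^{j-1}\cdot(\beta^n-\sum_k c_k\beta^k)=0$ read off in the $0$-th coordinate. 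The paper's two cases ($i=n$ and $i\ge n+1$) are just this identity unpacked through $\Gamma$.

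One small slip to fix: the intertwining you wrote, $LM_n^{(0)}=M_n^{(0)}N$ with $N:e_j\mapsto e_{j-1}$, has the shift direction reversed (it should be $N:e_j\mapsto e_{j+1}$), and in any case it can only hold column-by-column for $j=0,\dots,n-2$, not as a full matrix identity, since $L$ is invertible while a nilpotent shift is not. This does not affect your argument, which only needs the columnwise relation.
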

Since $D_v$ is a Hankel matrix, it can be decomposed as 
\begin{equation}\label{eq:decomD}
D_v=\sum_{k=0}^{2n-2} \beta^v_{k} \mbb{H}_k, 
\end{equation}
where $[\mbb{H}_k]_{i,j}=\delta_{i+j,k}$, i.e., it only has 1's on the $k$-th anti-diagonal. In this way, any $D_v$ can be written as a linear combination of at most $(2n-1)$ Hankel matrices $\{\mbb{H}_k\}_{k=0}^{2n-2}$, with $(2n-1)$-bit vector $\vec{\beta}^v$ characterizing the information of elements from Eq.~\eqref{Equ:MUB2}. 


\comments{
Since $D_v$ is a Hankel matrix, it can be decomposed as $D_v=\sum_{k=0}^{2n-2} \beta^v_{k} \mbb{H}_k$. Here $[\mbb{H}_k]_{i,j}=\delta_{i+j,k}$, i.e., it only has 1's on the $k$-th anti-diagonal. In this way, any $D_v$ can be written as a linear combination of at most $(2n-1)$ Hankel matrices $\{\mbb{H}_k\}_{k=0}^{2n-2}$, with $(2n-1)$-bit vector $\vec{\beta}^v$ characterizing the information of elements from Eq.~\eqref{Equ:MUB2}. 

For each $\mbb{H}_k$ with $\beta_{k}=1$ in the summation, one can use a circuit module $M_k$ in Fig.~\ref{fig:MUB_circuit} satisfying [$\mathbb{I}$, $\mbb{H}$] $\rightarrow$ $[\mbb{I}, \mbb{H}-\mbb{H}_k]$ to transform the Tableau from the initial [$\mathbb{I}, \mbb{H}$] gradually to [$\mathbb{I}, \mathbb{O}$]. After that, by applying the Hadamard gates $\mathrm{H}^{\otimes n}$, one can transform the Tableau $[\mathbb{I}, \mathbb{O}]\rightarrow[\mathbb{O}, \mathbb{I}]$ to complete the synthesis. 
}

\textit{Circuit sampling.} Suppose $v=\sum_{i=0}^{n-1} v_i 2^i$, and the binary representation shows $v=\{v_0,\cdots,v_{n-1}\}$. We randomly sample an $n$-bit string $v$, 
and actually we need \emph{not} to calculate the coefficient $\beta^v_{k}$ in Eq.~\eqref{eq:decomD} for each $D_v$ one by one, which will determine the final quantum circuit. Denote the corresponding coefficient of each basis vector $\{0,\cdots, 1_i,\cdots ,0\}$ with $i\in[n]$ as $\beta^{(i)}_{k}$, then by linearity the coefficient of $D_v$ reads
\begin{equation}\label{betasum}
\beta^v_{k}=\sum_i \beta^{(i)}_{k}v_i .
\end{equation}
which is a matrix multiplication with the binary matrix $\{\beta^{(i)}_{k}\}$ being $(2n-1)\times n$.
Such sampling only consumes $n$ random bits, compared with $O(n^2)$ for full Clifford circuits \cite{van2021simple}. We also remake that it needs  $\mathcal{O}(n^2)$ complexity to calculate Eq.~\eqref{betasum}.

\textit{Circuit synthesis.} For each $\mbb{H}_k$ with $\beta_{k}=1$ in the summation in Eq.~\eqref{eq:decomD}, one can use a circuit module $M_k$ in Fig.~\ref{fig:MUB_circuit} satisfying [$\mathbb{I}$, $\mbb{H}$] $\rightarrow$ $[\mbb{I}, \mbb{H}-\mbb{H}_k]$ to transform the Tableau from the initial [$\mathbb{I}, D_v$] gradually to [$\mathbb{I}, \mathbb{O}$]. After that, by applying the Hadamard gates $\mathrm{H}^{\otimes n}$, one can transform the Tableau $[\mathbb{I}, \mathbb{O}]\rightarrow[\mathbb{O}, \mathbb{I}]$ to complete the synthesis. 
Hence we directly obtain an three-stage form $\mathrm{-S-CZ-H-}$, and the whole synthesis process is listed in Algorithm \ref{algo:Circuit}.

\begin{algorithm}[H]
\caption{Circuit synthesis for Z-Tableau [$\mathbb{I}$,  $D_v$]}\label{algo:Circuit}
\begin{algorithmic}[1]
\Require
The qubit number $n$, and the parameter vector of the Hankel matrix $\{\beta_k\}_{k=0}^{2n-2}$.
\Ensure
The Clifford Circuit $\bf{C}$, initialized as $\bf{C}=\emptyset$.

\For{$k= 0~\text{\textbf{to}}~2n-2$} 
\If {$\beta_k=1$}
\If {$k$ is even}
\State $\bf{C} \leftarrow \bf{C} \cup {S(\frac{k}{2})}$
\EndIf
\State Initialize $p,q=0$.
\If {$k<n-1$} \State $p=0$, $q=k$.
\Else \State $p=k-n+1$, $q=n-1$.
\EndIf
\While{$p<q$}
\State $\bf{C} \leftarrow \bf{C} \cup {CZ(p,q)}$
\State{$p=p+1$, $q=q-1$}
\EndWhile
\EndIf
\EndFor
\For{$i= 0~\text{\textbf{to}}~n-1$} 
\State $\bf{C} \leftarrow \bf{C} \cup {H(i)}$
\EndFor
\end{algorithmic}
\end{algorithm}
 Fig.~\ref{fig:MUB_circuit} shows the explicit circuit construction using Algorithm \ref{algo:Circuit}.
The module $M_k$ may contain $\mathrm{S}$ and $\mathrm{CZ}$ gates on the first $k$-qubit with $k<n$.
$M_k$ and $M_{2n-2-k}$ are identical to each other but act on qubits in reverse order. There are in total $2n-1$ modules, and the circuit depth for all modules can be parallelized to $n$ by combining $M_k$ and $M_{n+k}$ to the same layer.

\begin{figure}[h]
   \centering
    \includegraphics[width=0.5\textwidth]{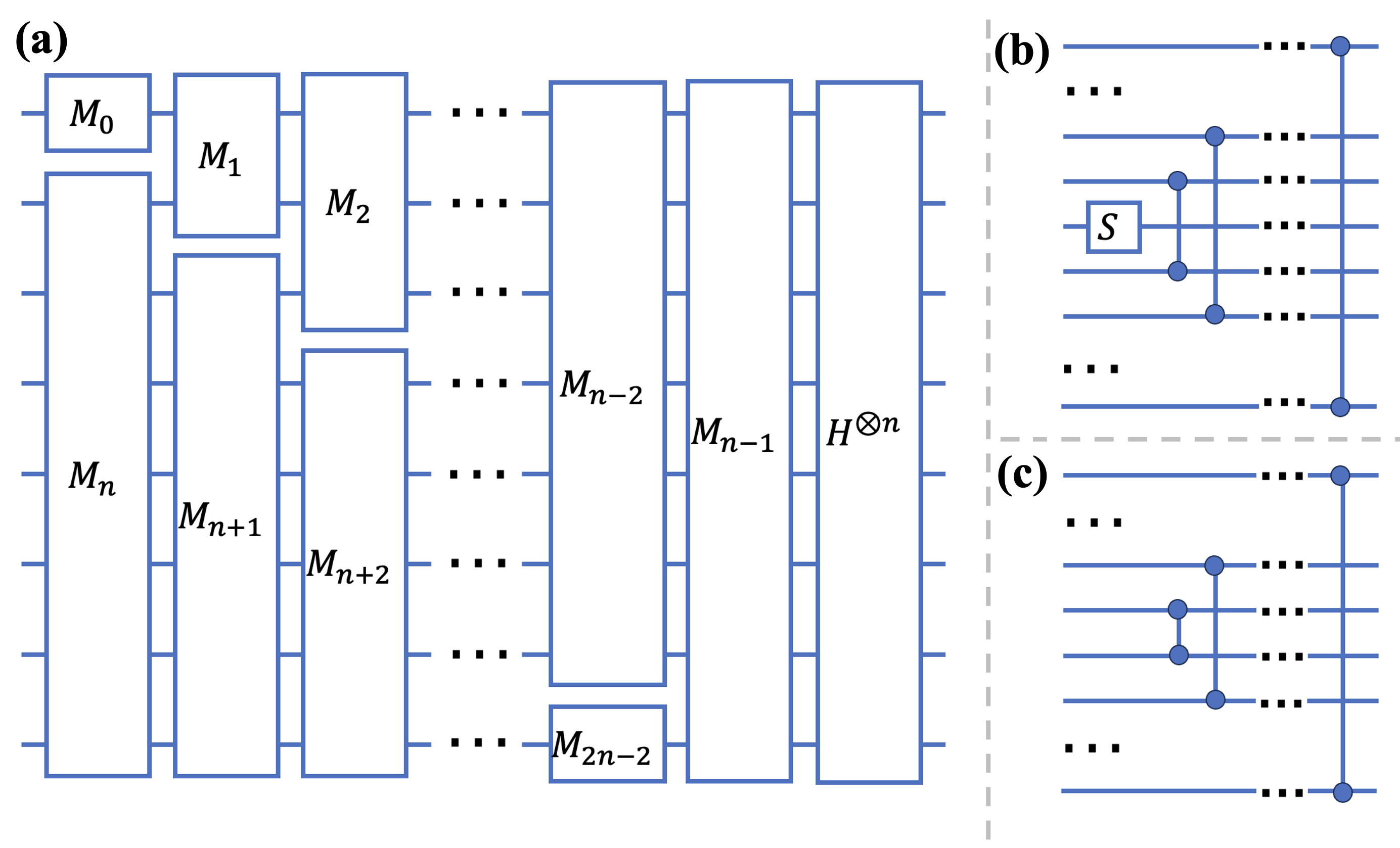}
    \caption{An illustration of the quantum circuit structure after the synthesis 
    for Tableau $[\mathbb{I}, D_v]$ with $D_v=\sum_{k=0}^{2n-2}{\beta^v_k \mbb{H}_k}$. If $\beta^v_k=1$, apply module $M_k$ as shown in (a), and the module $M_k$ and module $M_{2n-2-k}$ share the same quantum gates. If $k$ is even, the construction of the corresponding modules are shown in (b), with both $\mathrm{S}$ and $\mathrm{CZ}$ gates; 
    if k is odd, the construction of the corresponding modules are shown in (c), which only contains CZ gates.
    }
    \label{fig:MUB_circuit}
\end{figure}

We provide Fig.~\ref{fig:3q-Tab} as a visual explanation of how to eliminate the Z-Tableau and synthesize the corresponding quantum circuit.
\begin{figure}[htb]
   \centering
    \includegraphics[width=0.48\textwidth]{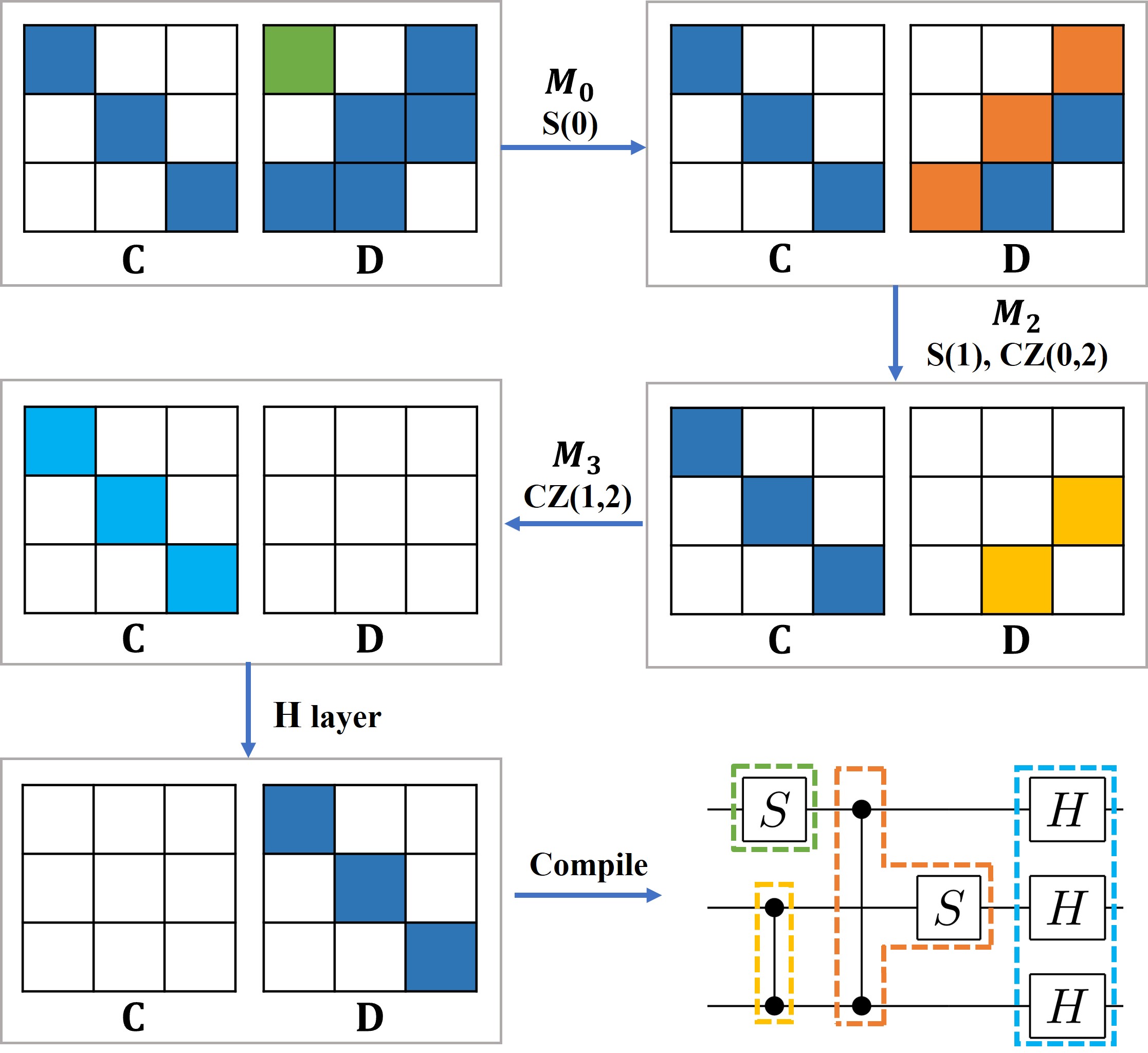}
    \caption{Illustration of the circuit synthesis in Algorithm \ref{algo:Circuit} given input Z-Tableau [$C,D$] when $n=3$. The D-Matrix of the input Z-Tableau is a Hankel matrix. To eliminate the $k$-th  anti-diagonal of the D-matrix, we make use of the $k$-th module called $M_k$. Finally a fully H-layer is utilized to transform the Z-Tableau $[\mathbb{I}, \mathbb{O}]\rightarrow[\mathbb{O}, \mathbb{I}]$.}
    \label{fig:3q-Tab}
\end{figure}
Moreover, we give an example of MUB for $n=3$, starting from the Eq~\eqref{Equ:MUB2} to the synthesis of quantum circuits, which is left in 
Appendix \ref{ap: MUBexample}.



The whole circuit can be summarized in a three-stage in the form of $\mathrm{-S-CZ-H-}$, by observing that S gates and CZ gates commute.  The total circuit depth can be further reduced, by using the result that the upper bound of the circuit depth is $\frac{n}{2}+O(\log^2(n))$ for the CZ-layer \cite{maslov2022depth}. In addition, the full Clifford group followed by projective measurements can also be effectively synthesized to a one CZ-layer form, by the canonical form of Ref.~\cite{bravyi2021hadamard} with proper mapping on the computational basis vector.
However, the modular design as shown Fig.~\ref{fig:MUB_circuit} streamlines the sampling and synthesis, and may lead to advantages in 
the calibration and realization. 
Finally, we remark that the intermediate layer of CZ gates could be implemented using one Global Mølmer–Sørensen gate \cite{figgatt2019parallel} in systems like on trapped-ion system simultaneously \cite{bravyi2022constant}.


\section{Performance analysis and off-diagonal advantage}\label{sec:performance}
With the synthesized quantum circuits for MUB at hand, one can proceed the MCM shadow estimation via Eq.~\eqref{MUchannel}, \eqref{Mchannel}, and \eqref{eq:shadowbasic}, using the same post-processing scheme as the full Clifford measurements in Eq.~\eqref{Ch&Inv}. In this section, we analyse the performance of the introduced MCM shadow in depth by investigating the variance for estimating some observable $O$. In shadow estimation, such variance shows
\begin{equation}
\mathrm{Var}_{\mc{E}}(\hat{O})=\mathrm{Var}_{\mc{E}}(\hat{O}_0)\leq \max_{\rho} \mathbb{E} \tr(O_0\widehat{\rho})^2=\|O_0 \|^2_{\mathrm{s},{\mc{E}}}.
\label{eq:Var}
\end{equation}
Here, $O_0=O-\tr(O)\id/d$ is the traceless part of $O$. Note that the variance is further bounded by the square of the shadow-norm $\|O_0\|_{\mathrm{s},{\mc{E}}}$ \cite{huang2020predicting}.  Shadow-norm is a function of the observable $O$ and the unitary ensemble $\mc{E}$, by taking the maximization over all possible input state $\rho$. Another variance quantification is called the locally-scrambled shadow-norm $\|O \|_{\mathrm{ls},{\mc{E}}}\leq \|O \|_{\mathrm{s},{\mc{E}}}$ \cite{bertoni2022shallow,bu2022classical,hu2023classical}, by considering the \emph{average} on $\rho$ from a $1$-design ensemble, or equivalently, the input state $\rho=\id/2^n$ being the maximally mixed state.

\begin{theorem}\label{thm: general variance}
For MCM shadow estimation, if one takes the unitary ensemble $\mc{E}_{\mathrm{MUB}}$ determined by some MUB, the corresponding shadow-norm and locally-scrambled shadow-norm can be upper bounded by
\begin{equation}\label{eq:VarTh}  
\begin{aligned}
&\|O_0 \|^2_{\mathrm{s},{\mc{E}_{\mathrm{MUB}}}}\leq (2^n+1) \tr(O_0^2),\\
&\|O_0 \|^2_{\mathrm{ls},\mc{E}_{\mathrm{MUB}}}= \frac{2^n+1}{2^n} \tr(O_0^2).
\end{aligned}
\end{equation}
\end{theorem}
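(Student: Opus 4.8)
\emph{Proof strategy.} The plan is to turn the variance quantities into a sum of squared diagonal entries of the rotated observable, and then exploit the stabilizer decomposition of Lemma~\ref{th:ClF_decom} together with the covering structure behind Proposition~\ref{th:Mini_UB}. I would start from the single-snapshot estimator $\widehat\rho=\mc{M}_{Cl}^{-1}(U^{\dag}\ket{\mb{b}}\bra{\mb{b}}U)$ with the explicit inverse channel in Eq.~\eqref{Ch&Inv}: since $O_0$ is traceless, $\tr(O_0\widehat\rho)=(2^n+1)\bra{\mb{b}}UO_0U^{\dag}\ket{\mb{b}}$. Taking $U$ uniform in $\mc{E}_{\mathrm{MUB}}$ and then $\mb{b}$ with probability $\Pr(\mb{b}|U)=\bra{\mb{b}}U\rho U^{\dag}\ket{\mb{b}}$ gives
\[
\mathbb{E}\,\tr(O_0\widehat\rho)^2=(2^n+1)\!\!\sum_{U\in\mc{E}_{\mathrm{MUB}}}\sum_{\mb{b}}\Pr(\mb{b}|U)\,\bra{\mb{b}}UO_0U^{\dag}\ket{\mb{b}}^2 ,
\]
so both norms reduce to controlling $\sum_{\mb{b}}\Pr(\mb{b}|U)\,\bra{\mb{b}}UO_0U^{\dag}\ket{\mb{b}}^2$, summed over the ensemble, where for the locally-scrambled norm one sets $\rho=\id/2^n$ and for the shadow norm one maximizes over $\rho$.

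Next I would evaluate, for a fixed $U$, the purely diagonal quantity $\sum_{\mb{b}}\bra{\mb{b}}UO_0U^{\dag}\ket{\mb{b}}^2$. Expanding the diagonal part of $UO_0U^{\dag}$ in the $2^n$ diagonal Pauli operators $\{Z^{\mb{z}}\}$ and using $\tr(Z^{\mb{z}}Z^{\mb{z}'})=2^n\delta_{\mb{z}\mb{z}'}$ gives $\sum_{\mb{b}}\bra{\mb{b}}UO_0U^{\dag}\ket{\mb{b}}^2=2^{-n}\sum_{\mb{z}}\tr(O_0\,U^{\dag}Z^{\mb{z}}U)^2$, and by Eq.~\eqref{eq:generator} and Lemma~\ref{th:ClF_decom} the operator $U^{\dag}Z^{\mb{z}}U$ is exactly the (Hermitian) stabilizer $S_{\mb{z}}^{(U)}$. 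Now the MCM property enters: as counted in Proposition~\ref{th:Mini_UB}, across the $2^n+1$ Cliffords the non-identity stabilizers $\{S_{\mb{z}}^{(U)}:\mb{z}\neq\mb{0}\}$ tile $\textbf{P}_*^n$ exactly once, while $\mb{z}=\mb{0}$ always contributes $S_{\mb{0}}=\id$ with $\tr(O_0\id)=0$. Hence $\sum_{U}\sum_{\mb{z}}\tr(O_0 S_{\mb{z}}^{(U)})^2=\sum_{P\in\textbf{P}_*^n}\tr(O_0 P)^2=2^n\tr(O_0^2)$, the last step being Parseval for the Pauli basis together with $\tr(O_0)=0$.

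Finally I would assemble the two cases. For the locally-scrambled norm, $\rho=\id/2^n$ makes $\Pr(\mb{b}|U)=2^{-n}$ exactly, so chaining the identities above yields the \emph{equality} $\|O_0\|^2_{\mathrm{ls},\mc{E}_{\mathrm{MUB}}}=(2^n+1)\,2^{-2n}\cdot 2^n\tr(O_0^2)=\tfrac{2^n+1}{2^n}\tr(O_0^2)$. For the worst-case shadow norm I would simply bound $\Pr(\mb{b}|U)=\bra{\mb{b}}U\rho U^{\dag}\ket{\mb{b}}\le\tr(U\rho U^{\dag})=1$ for every $\mb{b},U,\rho$; dropping that factor and using the same chain gives $\mathbb{E}\,\tr(O_0\widehat\rho)^2\le(2^n+1)\tr(O_0^2)$ uniformly in $\rho$, hence the claimed bound after the maximization. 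There is no substantial obstacle beyond bookkeeping once the MCM tiling of $\textbf{P}_*^n$ (Proposition~\ref{th:Mini_UB}) is in hand; the only points needing care are that $U^{\dag}Z^{\mb{z}}U$ is automatically Hermitian so the squared traces are unambiguous (no stray $\pm i$), and — mildly surprising — that the crude estimate $\Pr(\mb{b}|U)\le 1$, rather than anything finer, already reproduces the stated worst-case constant $2^n+1$.
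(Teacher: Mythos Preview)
Your proof is correct and reaches both claims cleanly, but it proceeds along a different line from the paper. The paper's argument invokes the projective $2$-design property of the MUB: after the same reduction to $(2^n+1)\sum_{U,\mb{b}}\bk{O_0}_{U,\mb{b}}^2\bk{\rho}_{U,\mb{b}}$ and the same crude bound $\bk{\rho}_{U,\mb{b}}\le 1$, it evaluates $\sum_{U,\mb{b}}\Phi_{U,\mb{b}}^{\otimes 2}\propto \Pi_{\mathrm{sym}}=(\mathbb{S}+\id^{\otimes 2})/[2^n(2^n+1)]$ on the two-copy space and reads off $\tr(O_0^2)$ from the swap. You instead stay on one copy, expand each diagonal sum $\sum_{\mb{b}}\bra{\mb{b}}UO_0U^{\dag}\ket{\mb{b}}^2$ in $Z$-Paulis, and then use the MCM partitioning of $\textbf{P}_*^n$ (the content of Proposition~\ref{th:Mini_UB}/Observation~\ref{ob: MCS_MUB}) together with Parseval $\sum_{P}\tr(O_0P)^2=2^n\tr(O_0^2)$. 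The paper's route is shorter and immediately extends to any projective $2$-design; your route is more self-contained in that it uses only the Pauli covering built into the MCM definition and avoids the external $2$-design citation. Both land on exactly the same constants, and your side remark that the Hermiticity of $U^{\dag}Z^{\mb{z}}U$ kills any phase ambiguity is well placed.
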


\begin{proof}
The proof is mainly based on the 2-design property \cite{zhu2015mutually} of MUB. Here, we denote the MUB state as $\ket{\Phi_{U,\mb{b}}}=U^{\dag}\ket{\mb{b}}$, and the density matrix as $\Phi=\ket{\Phi}\bra{\Phi}$. By utilizing the inverse channel in Eq.~\eqref{Ch&Inv}, and recalling Eq.~\eqref{eq:Var}, the variance can be upper bounded by 

\begin{equation}
  \begin{aligned}\label{varMain}
&\mathbb{E} \tr(O_0\hat{\rho})^2=\mathbb{E} \tr[O_0\ \mc{M}^{-1}(\Phi_{U,\mb{b}})]^2\\
&=(2^n+1)^2\ \mathbb{E} \tr[O_0 \Phi_{U,\mb{b}}]^2\\ 
&= (2^n+1)\sum \limits_{U\in \mathcal{E}_{MUB},\mb{b}} \tr[O_0 \Phi_{U,\mb{b}}]^2 \tr(\rho \Phi_{U,\mb{b}})\\ 
&\leq (2^n+1) \tr[O_0^{\otimes 2} \sum \limits_{U\in \mathcal{E}_{MUB} ,\mb{b}}\Phi_{U,\mb{b}}^{\otimes 2}]\\
& =(2^n+1)\tr [{O}_{0}^{\otimes 2}\ (\mbb{S}+\id^{\otimes 2})]=(2^n+1)\tr(O_0^2). 
\end{aligned}
\end{equation}

Here, $\mbb{S}$ and $\id^{\otimes 2}$ are swap and identity operators on the 2-copy space.
The inequality is due to the fidelity $\tr(\rho \Phi_{U,\mb{b}})\leq 1$. And the final equality is because $\sum \Phi_{U,\mb{b}}^{\otimes 2}$ is a projective 2-design, thus the summation result is proportional to the projection to the symmetric subspace $\Pi_{sym}=2^{-n}(2^n+1)^{-1}(\mbb{S}+\id^{\otimes 2})$ on $\mc{H}_d^{\otimes 2}$. 

For the locally-scrambled shadow-norm, one just take $\rho=\id/2^n$ in Eq.~\eqref{varMain}, which contributes a $2^{-n}$ to the final result. 
\end{proof}

Some remarks on the comparison to the original full Clifford shadow are illustrated as follows.  For the worst case, i.e., the original shadow-norm is $\|O_0 \|^2_{\mathrm{s},{\mc{E}_{\mathrm{Cl}}}}\sim O(1) \tr(O_0^2)$ and our MCM shadow is exponentially worse than it, the average behaviour of both two protocols are the same by the 2-design property $\|O_0 \|^2_{\mathrm{ls},{\mc{E}_{\mathrm{MUB}}}}=\|O_0 \|^2_{\mathrm{ls},{\mc{E}_{\mathrm{Cl}}}}$.

On the other hand, the bound on $\|O_0 \|^2_{\mathrm{s},{\mc{E}_{\mathrm{MUB}}}}$ is tight in some sense, that is, the exponential scaling with the qubit-numbr $n$ is not by mathematical derivation, but the essence of MCM. Let us denote $\bk{O_0}_{U,\mb{b}}:=\tr[O_0 \Phi_{U,\mb{b}}], \bk{\rho}_{U,\mb{b}}:=\tr(\rho \Phi_{U,\mb{b}})$ for short, the third line of Eq.~\eqref{varMain} can be written as
\begin{equation}
  \begin{aligned}\label{eVarcases1}
\mathrm{Var}_{\mc{E}_{\mathrm{MUB}}}(\hat{O})\leq (2^n+1)\sum \limits_{U,\mb{b}} \bk{O_0}^2_{U,\mb{b}} \bk{\rho}_{U,\mb{b}}, 
\end{aligned}
\end{equation} 
Let us consider the fidelity estimation task, which is the main application of the original shadow estimation. In this task, $O=\Psi$ is some pure (stabilizer) state, and thus $\tr(O_0^2)$ is a constant, then we have the following result.


\begin{observation}\label{ob:order1}
Suppose there is some $\Phi_{U,\mb{b}}$, such that both $\bk{O_0}_{U,\mb{b}}=\Theta(1)$ and $\bk{\rho}_{U,\mb{b}}=\Theta(1)$, then the variance of the MUB-based shadow would scale as $\Theta(2^n)$.   
\end{observation}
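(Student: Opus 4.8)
\textbf{Proof proposal for Observation \ref{ob:order1}.}

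The plan is to show that the lower bound on the variance, as opposed to the upper bound established in Theorem \ref{thm: general variance}, retains the same $2^n$ scaling whenever a single ``resonant'' MUB state exists. First I would return to the exact variance expression, not its upper bound. From the third line of Eq.~\eqref{varMain}, before the fidelity $\le 1$ relaxation, we have the identity
\begin{equation}\label{eq:exactVarObs}
\mathrm{Var}_{\mc{E}_{\mathrm{MUB}}}(\hat{O}) + \bk{O_0}_\rho^2 = (2^n+1)\sum_{U\in\mc{E}_{\mathrm{MUB}},\mb{b}} \bk{O_0}^2_{U,\mb{b}}\,\bk{\rho}_{U,\mb{b}},
\end{equation}
where I am using the shorthand notation introduced before Eq.~\eqref{eVarcases1} and the fact that $\mathbb{E}\,\tr(O_0\hat\rho)^2 = \mathrm{Var}(\hat O) + (\tr(O_0\rho))^2$ with $\tr(O_0\rho) = \bar O - \tr(O)/d =: \bk{O_0}_\rho$ a quantity of size $O(1)$ in the fidelity task. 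The key point is that every summand on the right-hand side is nonnegative, since $\bk{\rho}_{U,\mb{b}} = \tr(\rho\,\Phi_{U,\mb{b}}) \ge 0$ is a fidelity between density operators and $\bk{O_0}^2_{U,\mb{b}} \ge 0$.

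Next I would simply keep the single term corresponding to the hypothesized resonant pair $(U^\star,\mb{b}^\star)$ with $\bk{O_0}_{U^\star,\mb{b}^\star} = \Theta(1)$ and $\bk{\rho}_{U^\star,\mb{b}^\star} = \Theta(1)$, discarding all other (nonnegative) terms. This yields
\begin{equation}\label{eq:lbObs}
\mathrm{Var}_{\mc{E}_{\mathrm{MUB}}}(\hat{O}) \ge (2^n+1)\,\bk{O_0}^2_{U^\star,\mb{b}^\star}\,\bk{\rho}_{U^\star,\mb{b}^\star} - \bk{O_0}_\rho^2 = \Theta(2^n),
\end{equation}
since the first term is $(2^n+1)\cdot\Theta(1)\cdot\Theta(1) = \Theta(2^n)$ and the subtracted $O(1)$ term is negligible. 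Combined with the upper bound $\mathrm{Var}_{\mc{E}_{\mathrm{MUB}}}(\hat O) \le (2^n+1)\tr(O_0^2) = \Theta(2^n)$ from Theorem \ref{thm: general variance} (using that $\tr(O_0^2)$ is constant for a pure-state observable), we conclude $\mathrm{Var}_{\mc{E}_{\mathrm{MUB}}}(\hat O) = \Theta(2^n)$, which is the claim.

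The main obstacle, or rather the main thing to be careful about, is not in this derivation itself — which is essentially a one-line ``drop nonnegative terms'' argument — but in making the hypothesis of the observation substantive: one should exhibit a concrete regime (e.g.\ $\rho$ itself a stabilizer state, or $\rho$ having nonnegligible overlap with one of the MUB vectors that also has nonnegligible overlap with $O=\Psi$) where such a resonant $\Phi_{U^\star,\mb{b}^\star}$ genuinely exists, so that the observation is not vacuous. I would therefore accompany the proof with a brief remark pointing to such an instance (for example, taking $\rho = \Psi$ a stabilizer state, so that $\Psi$ appears among the MUB projectors for the appropriate Clifford element, giving $\bk{O_0}_{U^\star,\mb{b}^\star} = 1 - 2^{-n}$ and $\bk{\rho}_{U^\star,\mb{b}^\star} = 1$ simultaneously), which both validates the hypothesis and explains why fidelity estimation with a noiseless target state is exactly the worst case for MCM.
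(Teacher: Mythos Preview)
Your proposal is correct and follows essentially the same approach as the paper: drop all but the resonant nonnegative term in the sum $(2^n+1)\sum_{U,\mb{b}}\bk{O_0}^2_{U,\mb{b}}\bk{\rho}_{U,\mb{b}}$ to get an $\Omega(2^n)$ lower bound, then match with the $O(2^n)$ upper bound from Theorem~\ref{thm: general variance}. You are in fact slightly more careful than the paper in explicitly tracking the subtracted $\bk{O_0}_\rho^2 = O(1)$ term, and your suggestion to exhibit a concrete resonant instance is exactly what the paper does next with the GHZ example.
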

The observation is direct to see, as in this case one of the terms in the summation in Eq.~\eqref{eVarcases1} $\bk{O_0}^2_{U,\mb{b}} \bk{\rho}_{U,\mb{b}}=\Theta(1)$.  That is, if $O_0$ and $\rho$ both share constant overlap with some basis state $\Phi_{U,\mb{b}}$, the final result should be exponential with $n$. 
Indeed, it is not hard to find examples to support this observation.

\begin{figure}
    \centering
    \includegraphics[scale=0.22]{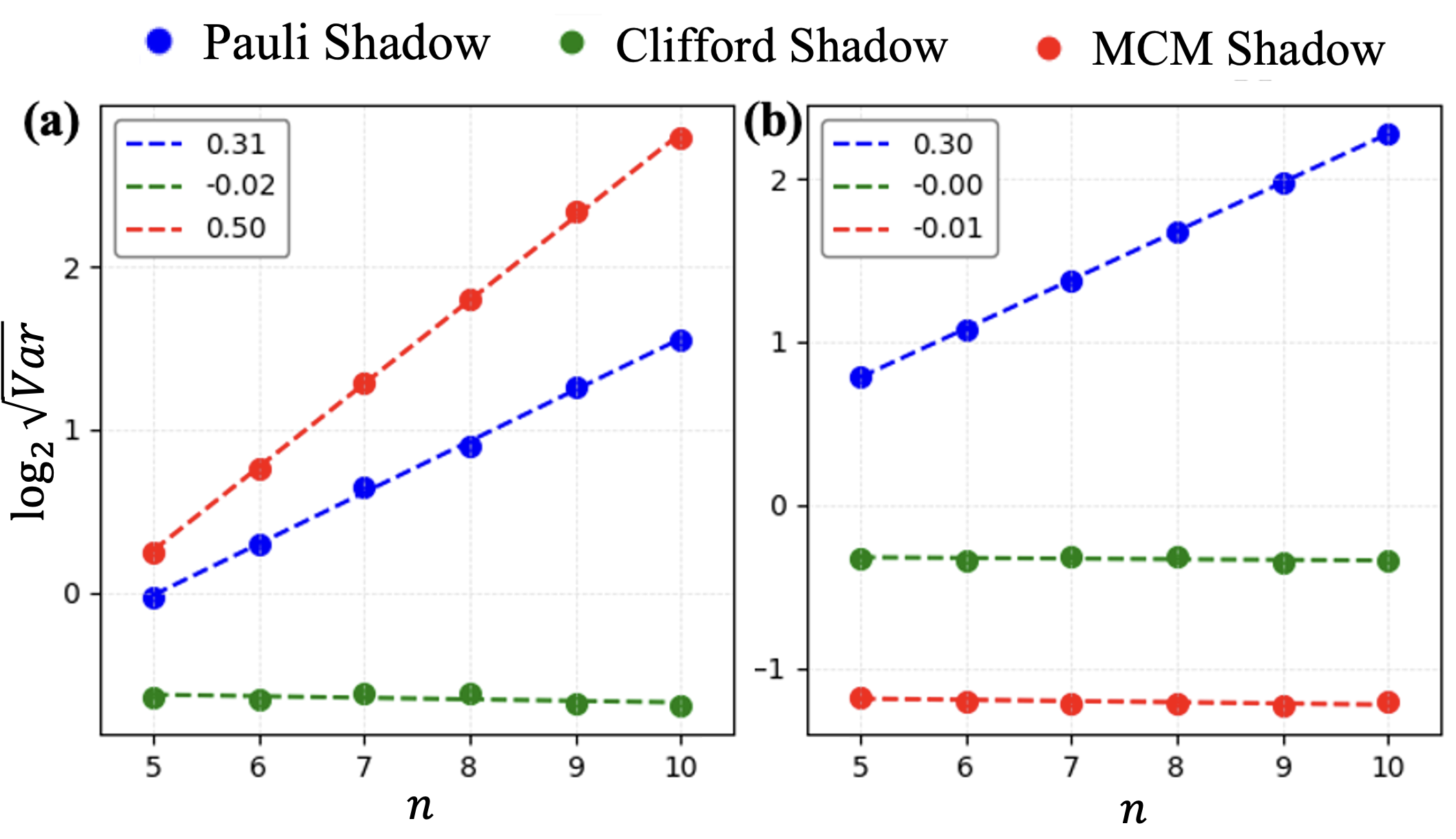}
    \caption{
    The statistical variance of shadow estimation with total number of snapshots $N=10000$ using Pauli measurement (blue), Clifford measurement (green) and Minimal Clifford measurement (red), respectively. In both (a) and (b) the state is the standard Greenberger-Horne-Zeilinger(GHZ) state $|\text{GHZ}\rangle =\frac{1}{\sqrt{2}}(|0\rangle ^{\otimes n}+|1\rangle ^{\otimes n})$. In (a)  $O=|\text{GHZ}\rangle\langle\text{GHZ}|$, and in (b) $O_F=[(|0\rangle\langle1|)^{\otimes n}+|1\rangle\langle0|)^{\otimes n}]/2$, which is the off-diagonal term of the GHZ state. The dotted lines show the fitting curves with the numerical values, and the numbers on the left-top represent
the corresponding slopes of these curves.}\label{fig:fid+off-diag}
\end{figure}

Here, in the numerical simulation we take the observable and the input state $O=\rho=\ket{\text{GHZ}}\bra{\text{GHZ}}$ as an example. In this case, there exists $\Phi_{\mbb{I},\mb{0}}$, such that $\bk{O_0}_{\mbb{I},\mb{0}},\bk{\rho}_{\mbb{I},\mb{0}}=\Theta(1)$.  That is, the state $\Phi_{\mbb{I},\mb{0}}$ is a diagonal state corresponding to the $0$-th element of MUB---the $Z$-basis with $U=\id$. As a result, the variance should be $\Theta(2^n)$, which is also manifested by the numerics in Fig.~\ref{fig:fid+off-diag} (a). 

Observation \ref{ob:order1} motivates us to relieve such worst-case exponential-scaling variance of MCM by considering the diagonal/off-diagonal part \cite{morris2019selective} of the observable $O$ separately, under a chosen basis determined by $U\in \mc{E}_{\mathrm{MUB}}$ from all possible $(2^n+1)$ elements. We demonstrate the variance for estimating only the off-diagonal part as follows.

\begin{theorem}\label{thm:off-diag}
For a given basis from one element of MUB, i.e., $\{\ket{\Phi_{U,\mb{b}}}\}$ with a fixed $U\in \mc{E}_{\mathrm{MUB}}$, consider the off-diagonal part of observable $O$, $O_F=\sum_{\mb{b} \neq \mb{b'}}{O_{\mb{b},\mb{b}'}\ket{\Phi_{U,\mb{b}}}\bra{\Phi_{U,\mb{b'}}}}$.
The variance of estimating $O_F$ with MCM-shadow is upper bounded by 
\begin{equation}\label{var:off-diag}
   \mathrm{Var}_{\mc{E}_{\mathrm{MUB}}}\left(\widehat{O}_F\right)\leq \frac{{{2}^{n}}+1}{2^{n}}{C_{l_1}(O)}^{2},
\end{equation}
where $C_{l_1}(O):=\sum_{\mb{b} \neq \mb{b'}}{|O_{\mb{b},\mb{b}'}|}$ denotes the $l_1-$norm of quantum coherence \cite{b2014q}.   
\end{theorem}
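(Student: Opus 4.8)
The plan is to follow the template of the proof of Theorem~\ref{thm: general variance}, but to carry the structural information about the fixed MUB element $U$ all the way through the calculation instead of discarding it. First, note that $O_F$ is Hermitian and, being purely off-diagonal in the orthonormal basis $\{\ket{\Phi_{U,\mb{b}}}\}$, already traceless, so $(O_F)_0=O_F$ and Eq.~\eqref{eq:Var} gives $\mathrm{Var}_{\mc{E}_{\mathrm{MUB}}}(\widehat{O}_F)\le\max_\rho\mathbb{E}\tr(O_F\widehat{\rho})^2$. Repeating the first three lines of Eq.~\eqref{varMain} verbatim (replacing $O_0$ by $O_F$ and renaming the summation variable to $V$ so as not to clash with the fixed $U$) yields the exact identity
\begin{equation}
\mathbb{E}\tr(O_F\widehat{\rho})^2 = (2^n+1)\!\!\sum_{V\in\mc{E}_{\mathrm{MUB}},\,\mb{c}}\!\!\tr[O_F\Phi_{V,\mb{c}}]^2\,\tr(\rho\Phi_{V,\mb{c}}).
\end{equation}
The whole point is then to evaluate this sum using the special basis in which $O_F$ lives, rather than the crude bound $\tr(\rho\Phi_{V,\mb{c}})\le 1$ that produced the generic exponential factor in Theorem~\ref{thm: general variance}.

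Two structural facts do the work. (i) $O_F$ has no overlap with the very basis that defines it: $\tr[O_F\Phi_{U,\mb{c}}]=\bra{\Phi_{U,\mb{c}}}O_F\ket{\Phi_{U,\mb{c}}}=0$ for every $\mb{c}$, so the entire $V=U$ block of the sum vanishes. (ii) Mutual unbiasedness: for every $V\ne U$ in the MUB and all $\mb{b},\mb{c}$ one has $|\langle\Phi_{V,\mb{c}}|\Phi_{U,\mb{b}}\rangle|^2=2^{-n}$. Expanding $\tr[O_F\Phi_{V,\mb{c}}]=\sum_{\mb{b}\ne\mb{b}'}O_{\mb{b},\mb{b}'}\langle\Phi_{V,\mb{c}}|\Phi_{U,\mb{b}}\rangle\langle\Phi_{U,\mb{b}'}|\Phi_{V,\mb{c}}\rangle$ and applying the triangle inequality with (ii) gives $|\tr[O_F\Phi_{V,\mb{c}}]|\le 2^{-n}\sum_{\mb{b}\ne\mb{b}'}|O_{\mb{b},\mb{b}'}|=2^{-n}C_{l_1}(O)$ for every $V\ne U$ and every $\mb{c}$, hence $\tr[O_F\Phi_{V,\mb{c}}]^2\le 4^{-n}C_{l_1}(O)^2$ uniformly.

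To finish, I would pull this uniform bound out of the sum and control the leftover weight: since each $\{\ket{\Phi_{V,\mb{c}}}\}_{\mb{c}}$ is a complete orthonormal basis, $\sum_{\mb{c}}\Phi_{V,\mb{c}}=\id$, so $\sum_{\mb{c}}\tr(\rho\Phi_{V,\mb{c}})=1$ for each of the $2^n+1$ values of $V$, and excluding $V=U$ leaves exactly $2^n$. Combining, $\mathrm{Var}_{\mc{E}_{\mathrm{MUB}}}(\widehat{O}_F)\le(2^n+1)\cdot 4^{-n}C_{l_1}(O)^2\cdot 2^n=\tfrac{2^n+1}{2^n}C_{l_1}(O)^2$, as claimed. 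I expect the main obstacle to be conceptual rather than computational: spotting that the freedom to fix the MUB element $U$, combined with the ``off-diagonal annihilates its home basis'' cancellation of (i) and the $2^{-n/2}$ unbiased-overlap decay of (ii), is precisely what collapses the exponential shadow-norm of Theorem~\ref{thm: general variance} down to the $O(1)$ coherence bound here; everything after that is bookkeeping. A point worth getting right is that the argument must hold for arbitrary $\rho$, so the completeness relation $\sum_{\mb{c}}\Phi_{V,\mb{c}}=\id$ --- not any averaging over input states --- is what tames the $\tr(\rho\Phi_{V,\mb{c}})$ factors.
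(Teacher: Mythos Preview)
Your proposal is correct and follows essentially the same approach as the paper's proof: both start from the third line of Eq.~\eqref{varMain}, kill the $V=U$ block because $O_F$ is purely off-diagonal in that basis, invoke the MUB overlap $|\langle\Phi_{V,\mb{c}}|\Phi_{U,\mb{b}}\rangle|=2^{-n/2}$ for $V\neq U$ to bound $|\tr[O_F\Phi_{V,\mb{c}}]|\le 2^{-n}C_{l_1}(O)$, and then use the completeness relation $\sum_{\mb{c}}\tr(\rho\Phi_{V,\mb{c}})=1$ over the remaining $2^n$ bases. The only cosmetic difference is that the paper pairs conjugate off-diagonal terms $(\mb{b}_1,\mb{b}_2)$ and $(\mb{b}_2,\mb{b}_1)$ before applying the triangle inequality, whereas you apply it term-by-term; the resulting bound is identical.
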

The proof is left in 
Appendix \ref{ap:off-diag}. Theorem \ref{thm:off-diag} links the scaling variance to an intrinsic quantum resource, quantum coherence \cite{streltsov2017colloquium}. 
In terms of applications, it guarantees the performance of MCM via diagonal/off-diagonal strategy when estimating the fidelity of states with a modest degree of superposition, like GHZ states and W states. As such, the variance is at most polynomial if $C_{l_1}(O)=O(\mathrm{poly(n)})$. In practice, one can estimate the diagonal part of $O$ by directly selecting the measurement basis determined by some preferred $U$, which is easy to conduct. For the challenging part, i.e., the off-diagonal part \cite{guhne2007toolbox,Hu2022Hamiltonian}, one can apply the MCM-based shadow to estimate it. Finally, by combining the diagonal and off-diagonal results, the whole estimation procedure is finalized.

In the example of estimating the fidelity of a GHZ state, the chosen basis is the Z-basis, i.e., $U=\id$. $O_F=\frac{1}{2}(|1\rangle \langle 0|^{\otimes n}+|0\rangle \langle 1|^{\otimes n})$ with $C_{l_1}(O)=1$. One can measure $O_F$ with MCM-shadow and the diagonal part with the Z-basis measurement respectively. The final variance would not scale with $n$, which is further manifested by Fig.~\ref{fig:fid+off-diag}(b). 

The off-diagonal advantage shown in Theorem \ref{thm:off-diag} and the showcase of GHZ state lies on the essence that we choose a proper basis from all MUBs as the `proper' diagonal basis, then separately measure the diagonal/off-diagonal in standard/shadow measurements. The `proper' basis is chosen such that the constant overlap $\bk{O_0}_{U,\mb{b}}=\Theta(1)$ in Observation \ref{ob:order1} can be avoided in the shadow estimation. In general, one may would like to measure a complex observable, spanning quite a few different MUBs.  This situation makes the selection of the `proper' basis frustrating, and this motivates us to consider the biased-MCM in the next section.

\section{Biased-MCM shadow estimation}\label{sec:biased-MCM}
In this section, we further propose the biased-MCM scheme as an extension and enhancement of the off-diagonal advantage discussed in the previous section. 

In the off-diagonal advantage, a reference basis is chosen deterministically in some sense, compared to the other bases which are selected randomly. Here, by leveraging prior knowledge about the observable to be predicted, we extend this kind of unbiasedness treatment to allow sampling the MUB elements with varying probabilities. Intuitively, one can increase the probability to choose some basis if it can reveal more information about the underlying observable. Note that there is biased scheme for Pauli measurements \cite{hadfield2022measurements,huang2021efficient}, but none for Clifford measurements, on account of the large cardinality of Clifford group. Here our MCM approach makes the biased scheme possible for Clifford measurements. 

Compared to the (unifrom) MCM shadow, two modifications are made as follows.
First, instead of sampled uniformly, now a unitary $U\in\mc{E}_{\mathrm{MUB}}$ is sampled with a given probability $p_U$. Second, the formula of post-processing is changed to   
\begin{equation}\label{biasedpost}
    \widehat{O_0} = \tr(O_0\frac{U^{\dag}|\mb{b}\rangle\langle \mb{b}| U}{p_U}),
\end{equation}

with the estimator of $O$ as $\hat{O}=\widehat{O_0}+2^{-n}\tr(O)$. We demonstrate the unbiasedness of $\hat{O}$ in 
Appendix~\ref{ap:unbiased estimator}.

\begin{figure}[htbp]
    \centering
    \includegraphics[scale=0.28]{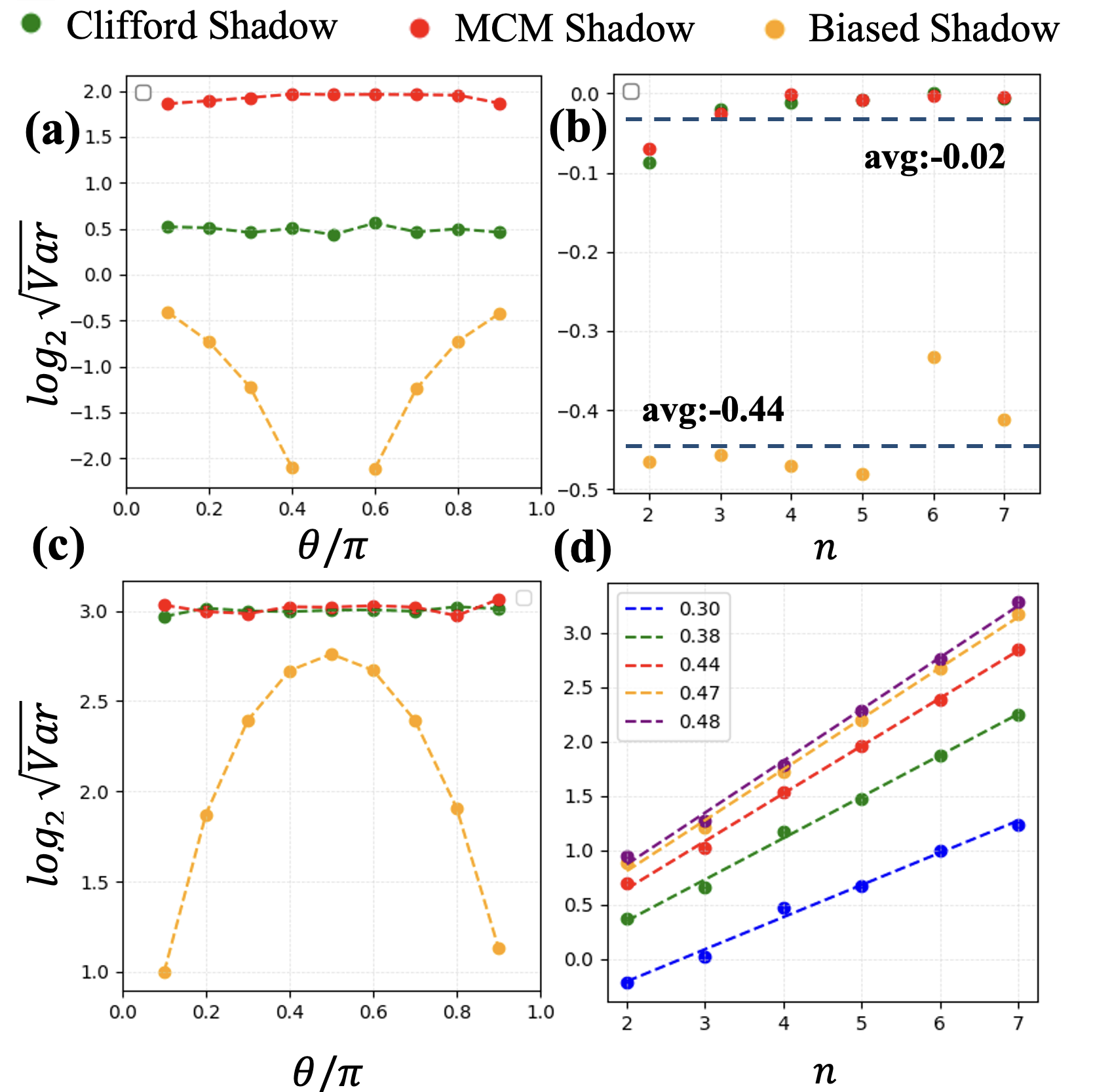}
    \caption{The statistical variance of shadow estimation with total number of snapshots $N=10000$ using Clifford measurement (green), MCM (red) and biased-MCM (orange) in three different scenarios.  In (a), the quantum state $\rho$ is parametered as the phased-GHZ states $|\text{GHZ}(\theta)\rangle = 1/\sqrt{2}[\cos(\theta/2) |0\rangle^{\otimes 6}+\sin(\theta/2) |1\rangle^{\otimes 6}]$, with $\theta\in [0.1\pi, 0.9\pi]$. And the observable $O$ set to be a $6$-qubit GHZ state. In (b), we estimate the fidelity of a set of Haar random states $\rho_i=\Phi_i$ for $i\in [1,100]$ with some Clifford stabilizer $O=\Phi_C$. The variance is the average of these 100 cases. In (c) and (d), the observable $O =[\cos(\theta/2)X+\sin(\theta/2)Z]^{\otimes n}$ and the state $\rho = |\mb{0}\rangle\langle \mb{0}|$, with $n=6$ and $\theta\in [0.1\pi, 0.9\pi]$ in (c), and in (d) the lines representing $\theta=0.1$ to $0.5$ from bottom to top. The dotted lines in (d) show the fitting curves with the numerical values, and the numbers on the left-top represent the corresponding slopes of these curves.}
    \label{Fig:Gbiased}
    \end{figure}

In general, an arbitrary observable can be decomposed into MUBs as \cite{wiesniak2011entanglement,morris2019selective}

\begin{equation}\label{MUB:Dec}
    O = -\tr(O) \mbb{I} + \sum_{U \in \mc{E}_{\mathrm{MUB}}} \sum_{\mb{b}\in{\{0,1\}}^{n}} \alpha_{U,\mb{b}} \Phi_{U,\mb{b}},
\end{equation}
where $\alpha_{U,\mb{b}} = \tr (O \Phi_{U,\mb{b}})$ with $\Phi_{U,\mb{b}}=U^{\dag}|\mb{b}\rangle \langle \mb{b}| U$. Based on this canonical decomposition, we analytically find the `optimal' probability distribution to conduct biased-MCM. Here optimal means that we can find a solution to optimize the upper bound of the variance in the estimation. In particular, 
we choose 
\begin{equation}\label{op:PU}
\begin{aligned}
p_U=\frac{B_U}{\sum_{U' \in \mc{E}_{\mathrm{MUB}} }B_{U'}},
\end{aligned}
\end{equation}
with
\begin{equation}\label{op:PU1}
\begin{aligned}
B_U:&=\max_{\mb{b}\in{\{0,1\}}^{n}} |\tr( \Phi_{U,\mb{b}}O_0)|\\
&=\max_{\mb{b}\in{\{0,1\}}^{n}}|\alpha_{U,\mb{b}}-2^{-n} \tr(O)|.
\end{aligned}
\end{equation}

Based on the biased-MCM approach according to the optimal probability for sampling unitaries in Eq.~\eqref{op:PU} and \eqref{op:PU1}, we show the following upper bound of estimation variance. 
\begin{theorem}\label{th:b-MCM}
    For an observable $O$, if one applies the biased-MCM shadow estimation using the sampling probability $p_U$ given in Eq.~\eqref{op:PU}, the variance of estimation is upper bounded by 
    \begin{equation}\label{}
    \mathrm{Var}_{\text{biased}-\mc{E}_{\mathrm{MUB}}}(\hat{O}) \leq \left(\sum_{U \in \mc{E}_{\mathrm{MUB}}} B_U \right)^2\leq \mc{D}(O_0)^2,
    \label{Equ: Var_biased}
    \end{equation} 
    where $\mc{D}(A):=2^{-n} \sum_{P\in {\mb{P}_n}} {|\tr (P A)|}$ is defined as the stabilizer norm \cite{campbell2011catalysis}.
\end{theorem}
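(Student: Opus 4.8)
The plan is to bound the variance of the biased estimator by its second moment, expand that second moment using the MUB decomposition, and then control the resulting double sum by the choice of $p_U$ in Eq.~\eqref{op:PU}. Concretely, starting from the estimator $\widehat{O_0}=\tr(O_0 \Phi_{U,\mb{b}})/p_U$ with $\Phi_{U,\mb{b}}=U^\dagger\ket{\mb{b}}\bra{\mb{b}}U$, I would write
\begin{equation}
\mathrm{Var}_{\text{biased}-\mc{E}_{\mathrm{MUB}}}(\hat{O})\leq \mathbb{E}\,\widehat{O_0}^2=\sum_{U\in\mc{E}_{\mathrm{MUB}}}p_U\sum_{\mb{b}}\tr(\rho\,\Phi_{U,\mb{b}})\,\frac{\tr(O_0\Phi_{U,\mb{b}})^2}{p_U^2},
\end{equation}
where I have used that, conditioned on $U$, the outcome $\mb{b}$ appears with probability $\tr(\rho\,\Phi_{U,\mb{b}})$. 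The $p_U$ in the numerator and denominator partially cancel, leaving a factor $1/p_U$.

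Next I would bound $\tr(O_0\Phi_{U,\mb{b}})^2\leq B_U\,|\tr(O_0\Phi_{U,\mb{b}})|$ using the definition $B_U=\max_{\mb{b}}|\tr(\Phi_{U,\mb{b}}O_0)|$ from Eq.~\eqref{op:PU1}, and then further use $|\tr(O_0\Phi_{U,\mb{b}})|\le B_U$ again on the remaining factor, together with $\sum_{\mb{b}}\tr(\rho\,\Phi_{U,\mb{b}})=1$ (each MUB is a complete basis and $\tr\rho=1$). This gives $\sum_{\mb{b}}\tr(\rho\,\Phi_{U,\mb{b}})\tr(O_0\Phi_{U,\mb{b}})^2\le B_U^2$, hence
\begin{equation}
\mathbb{E}\,\widehat{O_0}^2\leq \sum_{U\in\mc{E}_{\mathrm{MUB}}}\frac{B_U^2}{p_U}=\sum_{U\in\mc{E}_{\mathrm{MUB}}}B_U\cdot\Big(\sum_{U'\in\mc{E}_{\mathrm{MUB}}}B_{U'}\Big)=\Big(\sum_{U\in\mc{E}_{\mathrm{MUB}}}B_U\Big)^2,
\end{equation}
which is exactly the first inequality in Eq.~\eqref{Equ: Var_biased}, and incidentally shows that $p_U\propto B_U$ is the minimizer of $\sum_U B_U^2/p_U$ over probability vectors (Cauchy–Schwarz / Lagrange multipliers).

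It remains to prove the second inequality, $\big(\sum_U B_U\big)^2\le \mc{D}(O_0)^2$, i.e. $\sum_{U\in\mc{E}_{\mathrm{MUB}}}B_U\le \mc{D}(O_0)=2^{-n}\sum_{P\in\mb{P}_n}|\tr(PO_0)|$. For this I would use that the $2^n$ rank-one projectors $\Phi_{U,\mb{b}}$ of a single MUB element $U$ are simultaneously diagonalized by a maximal commuting subgroup (the stabilizer group) $\mc{S}_U\subset\mb{P}_n$ of size $2^n$: each $\Phi_{U,\mb{b}}=2^{-n}\sum_{P\in\mc{S}_U}(-1)^{\langle \mb{b},P\rangle}P$. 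Therefore $\tr(\Phi_{U,\mb{b}}O_0)=2^{-n}\sum_{P\in\mc{S}_U}(-1)^{\langle\mb{b},P\rangle}\tr(PO_0)$, so $B_U=\max_{\mb{b}}|\tr(\Phi_{U,\mb{b}}O_0)|\le 2^{-n}\sum_{P\in\mc{S}_U\setminus\{\id\}}|\tr(PO_0)|$ (the identity term contributes $2^{-n}\tr(O_0)=0$). Summing over the $2^n+1$ MUB elements and using that the subgroups $\{\mc{S}_U\setminus\{\id\}\}$ partition $\mb{P}_n\setminus\{\id\}$ (this is precisely the MCM covering property from Proposition \ref{th:Mini_UB}), I get $\sum_U B_U\le 2^{-n}\sum_{P\in\mb{P}_n\setminus\{\id\}}|\tr(PO_0)|=\mc{D}(O_0)$. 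Squaring finishes the proof.

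The main obstacle I anticipate is the bookkeeping in the last step: making sure the stabilizer-group decomposition of the MUB projectors is stated with the correct signs and phases (the generators in Eq.~\eqref{Equ:MUB2} carry $\sqrt{-1}$ factors, so one must check the $P\in\mc{S}_U$ are genuine Hermitian Paulis with real $\tr(PO_0)$), and invoking cleanly that the MUB stabilizer groups give a disjoint cover of the nonidentity Paulis — this is the same combinatorial fact underlying Proposition \ref{th:Mini_UB}, so it should be quotable rather than re-proved. The variance-to-second-moment and Cauchy–Schwarz steps are routine.
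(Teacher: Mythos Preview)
Your proposal is correct and follows essentially the same route as the paper's proof: bound the variance by the second moment, use $\sum_{\mb{b}}\tr(\rho\,\Phi_{U,\mb{b}})=1$ together with the definition of $B_U$ to reduce to $\sum_U B_U^2/p_U$, plug in $p_U\propto B_U$ to obtain $(\sum_U B_U)^2$, and then bound $\sum_U B_U\le \mc{D}(O_0)$ via the stabilizer expansion $\Phi_{U,\mb{b}}=2^{-n}\sum_{\mb{m}}(-1)^{\mb{b}\cdot\mb{m}}S_{\mb{m}}$ (Eq.~\eqref{Phi2Sm}) and the fact that the MUB stabilizer groups partition $\mb{P}_n^*$. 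Your anticipated ``obstacle'' about phases is a non-issue, since Lemma~\ref{th:ClF_decom} already guarantees the $S_{\mb{m}}$ are Hermitian Paulis, and the partition property is exactly Observation~\ref{ob: MCS_MUB} (underlying Proposition~\ref{th:Mini_UB}), so it can indeed be quoted.
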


The proof is in
Appendix \ref{ap:theorem-3}. Here we utilize a norm of non-stabilizerness (magic) \cite{campbell2011catalysis} to evaluate the estimation variance of shadow estimation, along a similar line with Theorem \ref{thm:off-diag} about quantum coherence, is also a measure of quantum resource \cite{Howard2017Resource,leone2022stabilizer}.
In particular, for $O=\rho$ being some quantum state, $\mc{D}(O_0)=D(\rho)-1/d$, where $\mc{D}(\rho)$ serves as a lower bound of robustness of magic, an important measure of non-stabilizerness that quantifies the complexity of classical simulation cost \cite{Howard2017Resource}. Therefore, any observable with a constant or polynomial magic can be estimated with a feasible sample complexity if utilizing biased-MCM.

An additional expense of the classical computation complexity for biased-MCM is to sample unitary $U$ and to compute the probability $p_U$, which unfortunately tends to be exponentially hard. However, inspired by Ref.~\cite{flammia2011direct}, we find there exists methods to tackle this exponential problem when estimated observables are Pauli strings and Clifford stabilizers, which are left in Appendix \ref{appsec:sample}. Using these methods the unitaries $U$ can be efficiently sampled and $p_U$ can be efficiently calculated. 

We further provide the variance results for the important case where the observable $O$ is a Clifford stabilizer in the following Proposition \ref{prop:zero},
by applying the interesting properties as shown in Lemma \ref{lemma:Ostab}.

\begin{prop}\label{prop:zero}
    Suppose the observable is $O=V^{\dagger}|0\rangle\langle0|V$ for any Clifford unitary $V$, one has
    \begin{equation}
    \mathrm{Var}_{\text{biased}-\mc{E}_{\mathrm{MUB}}}(\hat{O})\leq 1,
    \label{Equ: Var_biased1}
    \end{equation} 
by applying the biased-MCM with the optimized probability given in Eq.~\eqref{op:PU}. Especially, when the quantum state $\rho=O$.
\begin{equation}
    \mathrm{Var}_{\text{biased}-\mc{E}_{\mathrm{MUB}}}(\hat{O}) = 0,
    \label{Equ: Var_biased2}
    \end{equation} 
\end{prop}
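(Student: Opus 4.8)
The plan is to apply Theorem~\ref{th:b-MCM} together with the structural properties of a Clifford stabilizer $O=V^{\dagger}\ket{\mb{0}}\bra{\mb{0}}V$. The key quantity controlling the variance bound in Eq.~\eqref{Equ: Var_biased} is $\sum_{U\in\mc{E}_{\mathrm{MUB}}}B_U$, where $B_U=\max_{\mb{b}}|\tr(\Phi_{U,\mb{b}}O_0)|$ and $O_0=O-2^{-n}\id$. First I would observe that since $O=\Phi_{V}$ is itself a pure stabilizer state, the overlaps $\alpha_{U,\mb{b}}=\tr(O\Phi_{U,\mb{b}})=|\braket{\mb{b}|UV^{\dagger}|\mb{0}}|^2$ are transition probabilities between two stabilizer states, and are therefore heavily constrained: for each fixed $U$, the state $UV^{\dagger}\ket{\mb{0}}$ is a stabilizer state, and its overlaps with the computational basis $\{\ket{\mb{b}}\}$ form a ``flat'' distribution supported on an affine subspace --- i.e. $\alpha_{U,\mb{b}}\in\{0,2^{-k}\}$ for some $k=k(U)$ depending on the rank of the overlap, with exactly $2^{k}$ nonzero values. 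This is precisely the kind of fact I expect Lemma~\ref{lemma:Ostab} (referenced just before the Proposition) to supply.

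Granting that, $B_U=\max_{\mb{b}}|\alpha_{U,\mb{b}}-2^{-n}|$. When $k(U)<n$ the max is $2^{-k(U)}-2^{-n}$ (achieved on the support), and when $k(U)=n$ (the overlap is full-rank, uniform) every $\alpha_{U,\mb{b}}=2^{-n}$ and $B_U=0$. So only the $U$'s with $k(U)<n$ contribute. The heart of the argument is then a counting identity: summing $\sum_{U,\mb{b}}\alpha_{U,\mb{b}}$ over all $2^n+1$ MUB bases gives $\tr(O)\cdot(2^n+1)=2^n+1$ by completeness of each basis, but I actually want to bound $\sum_U B_U$ rather than this. The cleaner route is to use the decomposition Eq.~\eqref{MUB:Dec} of $O$ itself and the identity (from the MUB 2-design structure) $\sum_{U,\mb{b}}\alpha_{U,\mb{b}}^2 = \tr(O^2)+\tr(O)^2/2^n\cdot(\text{something})$ --- more precisely $\sum_{U,\mb{b}}\Phi_{U,\mb{b}}^{\otimes 2}\propto \Pi_{\mathrm{sym}}$ as used in the proof of Theorem~\ref{thm: general variance}, giving $\sum_{U,\mb{b}}\alpha_{U,\mb{b}}^2=\tr(O^2)+\tr(O)^2 = 2$. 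Since for a stabilizer state each nonzero $\alpha_{U,\mb{b}}$ on a $k(U)$-dimensional support equals $2^{-k(U)}$, one gets $\sum_{\mb b}\alpha_{U,\mb b}=1$ and $\sum_{\mb b}\alpha_{U,\mb b}^2 = 2^{-k(U)}$, hence $B_U\le 2^{-k(U)}=\sum_{\mb b}\alpha_{U,\mb b}^2$ (dropping the $-2^{-n}$ only helps), and therefore $\sum_U B_U \le \sum_{U,\mb b}\alpha_{U,\mb b}^2 = 2$. That would give $\mathrm{Var}\le 4$, not $\le 1$ --- so I would need to be sharper: use $B_U\le 2^{-k(U)}-2^{-n}$ and track the $U=$ identity-type bases separately, or note that the ``reference'' basis containing $O$ on its support contributes $B_U = 1-2^{-n}$ and the remaining contributions telescope. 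The tight bound should come from $(\sum_U B_U)^2\le \mc D(O_0)^2$ in Theorem~\ref{th:b-MCM} combined with the fact, to be extracted from Lemma~\ref{lemma:Ostab}, that a pure stabilizer state has stabilizer norm $\mc D(\Phi_V)=1$ exactly (its $2^n$ stabilizer Pauli expectations are $\pm1$, all others $0$, so $2^{-n}\sum_P|\tr(P\Phi_V)|=2^{-n}\cdot 2^n=1$), whence $\mc D(O_0)=\mc D(O)-2^{-n}\le 1$ and $\mathrm{Var}\le 1$ follows immediately. The main obstacle is thus really just invoking the right characterization of stabilizer-state Pauli spectra (Lemma~\ref{lemma:Ostab}) and confirming $\mc D$ of the traceless part is $\le 1$; everything else is bookkeeping.

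For the equality Eq.~\eqref{Equ: Var_biased2} when $\rho=O$, I would go back into the proof of Theorem~\ref{th:b-MCM} (Appendix~\ref{ap:theorem-3}) rather than its final inequality, because the bound $\mathrm{Var}\le(\sum_U B_U)^2$ typically arises from two steps --- replacing $\tr(\rho\Phi_{U,\mb b})$ by its max and bounding a quadratic form --- and I want to see when both are tight. The cleanest self-contained argument: with the optimal $p_U$ and post-processing Eq.~\eqref{biasedpost}, $\hat O = \tr(O_0\Phi_{U,\mb b})/p_U + 2^{-n}\tr(O) = \alpha_{U,\mb b}/p_U - 2^{-n}/p_U + 2^{-n}\tr(O)$, so $\mathrm{Var}(\hat O)=\mathbb E[\hat O^2]-\tr(O\rho)^2$. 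When $\rho=O=\Phi_V$, the support of $\rho$ over each MUB basis coincides with that of $O$, and on that support $\alpha_{U,\mb b}$ is constant; a direct computation of $\mathbb E[\hat O^2]=\sum_{U,\mb b}p_U\tr(\rho\Phi_{U,\mb b})(\tr(O_0\Phi_{U,\mb b})/p_U + 2^{-n}\tr O)^2$ should collapse --- because $\tr(\rho\Phi_{U,\mb b})=\alpha_{U,\mb b}$ exactly --- to $\tr(O\rho)^2=1$. The potential subtlety is the bases $U$ with $p_U=0$: one must check $\tr(\rho\Phi_{U,\mb b})=\alpha_{U,\mb b}=2^{-n}$ there so that the $0/0$ terms genuinely vanish (they carry weight $p_U\cdot(\cdots/p_U)^2$, which is finite only if the numerator also vanishes), and this is exactly the statement that $B_U=0\iff$ the overlap with that basis is uniform $\iff$ it contributes nothing. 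That consistency check is the one place the argument could snag, and I would handle it by noting that $p_U=0$ forces $B_U=0$, which by the stabilizer-flatness of $\alpha_{U,\mb b}$ forces $\alpha_{U,\mb b}=2^{-n}$ for all $\mb b$, making the corresponding estimator contribution identically the constant $2^{-n}\tr(O)$ with no variance.

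Overall, I would present Eq.~\eqref{Equ: Var_biased1} as a two-line corollary of Theorem~\ref{th:b-MCM} plus ``$\mc D$ of a pure stabilizer state is $1$'' (itself a one-line consequence of Lemma~\ref{lemma:Ostab}), and Eq.~\eqref{Equ: Var_biased2} as a short direct variance computation exploiting $\tr(\rho\Phi_{U,\mb b})=\alpha_{U,\mb b}$ when $\rho=O$. The expected main obstacle is not conceptual but organizational: making sure Lemma~\ref{lemma:Ostab} is stated in a form that directly yields both (i) $\alpha_{U,\mb b}\in\{0,2^{-k}\}$ flatness and (ii) $\mc D(\Phi_V)=1$, and handling the degenerate $p_U=0$ bases cleanly in the equality case.
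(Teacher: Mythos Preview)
Your proposal is correct and matches the paper's approach: for Eq.~\eqref{Equ: Var_biased1} the paper's appendix computes $\sum_U B_U = 1-2^{-n}$ via Lemma~\ref{lemma:Ostab} (your Route B via $\mc D(\Phi_V)=1$ is exactly the alternative the main text flags right after Proposition~\ref{prop:zero}), and for Eq.~\eqref{Equ: Var_biased2} the paper does precisely your direct computation, using the flatness $\alpha_{U,\mb b}\in\{0,2^{-r_U}\}$ from Lemma~\ref{lemma:Ostab} so that when $\rho=O$ only outcomes with $\alpha_{U,\mb b}=2^{-r_U}=\max_{\mb b}\alpha_{U,\mb b}$ are ever observed, making $\widehat{O}$ deterministically equal to $1$. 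Your handling of the degenerate $p_U=0$ bases is also the right consistency check.
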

\comments{
\begin{prop}\label{prop:zero}
    Suppose the observable and the unknown quantum state are both being $O=\rho=V^{\dagger}|0\rangle\langle0|V$ for any Clifford unitary $V$, one has
    \begin{equation}
    \mathrm{Var}_{\text{biased}-\mc{E}_{\mathrm{MUB}}}(\hat{O})=0,
    \label{Equ: Var_biased}
    \end{equation} 
by applying the biased-MCM with the optimized probability given in Eq.~\eqref{op:PU}.
\end{prop}
}
The proof is left in 
Appendix \ref{appsec:prop:zero}. Note that Eq.~\eqref{Equ: Var_biased1} can also be proved directly using Theorem \ref{th:b-MCM}, by the fact the stabilizer norm of a stabilizer state is just one.
 
\begin{lemma}\label{lemma:Ostab}
Suppose the observable $O= V^{\dagger}|\mb{0}\rangle \langle\mb{0}|V$, where $V$ is an arbitrary Clifford, then the element of the canonical MUB decomposition $\alpha_{U,\mb{b}}=\langle \mb{b}|UV^{\dagger}|\mb{0}\rangle|^2$ satisfies
\begin{equation}\label{Obs: sum_max}
\sum_{U\in \mc{E}_{\mathrm{MUB}}} \max_{\mb{b}} \alpha_{U,\mb{b}} =2.
\end{equation}
Furthermore, suppose the Z-Tableau of $UV^{\dag}$ is $T_{UV^{\dagger}}=[C,D]$, the distribution of $\alpha_{U,\mb{b}}$ for different $\mb{b} \in \{0,1\}^n$ are as follows: totally $2^{r_U}$ of them taking $2^{-r_U}$ and the remaining $2^n-2^{r_U}$ being 0. Here $r_U=\mathrm{rank}_{\mbb{F}_2}(C)$ on the binary field.
\end{lemma}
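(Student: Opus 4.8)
The plan is to read both statements off the computational-basis Born distribution of the stabilizer state $\ket{\psi_U}:=UV^{\dag}\ket{\mb 0}$. Since $\Phi_{U,\mb b}=U^{\dag}\ket{\mb b}\bra{\mb b}U$, one has $\alpha_{U,\mb b}=\tr(O\,\Phi_{U,\mb b})=|\langle\mb b|UV^{\dag}|\mb 0\rangle|^{2}=|\langle\mb b|\psi_U\rangle|^{2}$, that is, $\alpha_{U,\mb b}$ is exactly the probability of outcome $\mb b$ when $\ket{\psi_U}$ is measured in the $Z$-basis. I would first pin down this distribution (the second half of the lemma) and then feed it into a $2$-design average to obtain \eqref{Obs: sum_max}.

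For the distribution, expand the rank-one projector as $\ket{\psi_U}\bra{\psi_U}=2^{-n}\sum_{S\in\mathcal{S}_U}S$, where $\mathcal{S}_U$ is the stabilizer group of $\ket{\psi_U}$, generated by $n$ independent commuting Paulis with $-\id\notin\mathcal{S}_U$. Then $\alpha_{U,\mb b}=2^{-n}\sum_{S\in\mathcal{S}_U}\langle\mb b|S|\mb b\rangle$, and only the purely $Z$-type elements $S\in\mathcal{S}_U^{Z}$ (those with vanishing $X$-block) have $\langle\mb b|S|\mb b\rangle\neq 0$, each contributing $\pm1$. A product of the generators of $\mathcal{S}_U$ is purely $Z$-type iff the corresponding exponent vector $\mb m\in\{0,1\}^{n}$ lies in the left kernel of the $X$-block matrix of those generators, so $|\mathcal{S}_U^{Z}|=2^{\,n-r_U}$, where $r_U$ is the $\mbb{F}_2$-rank of that $X$-block; one checks this equals $\mathrm{rank}_{\mbb{F}_2}(C)$ for the Z-Tableau $[C,D]$ of $UV^{\dag}$ defined via Eq.~\eqref{eq:geneNoPhase} (the two matrices are mutual transposes). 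For fixed $\mb b$ the map $S\mapsto\langle\mb b|S|\mb b\rangle$ is a character of the abelian group $\mathcal{S}_U^{Z}$, so the inner sum equals $|\mathcal{S}_U^{Z}|$ when $\mb b$ meets every linear constraint imposed by $\mathcal{S}_U^{Z}$ and vanishes otherwise. Hence $\alpha_{U,\mb b}\in\{0,2^{-r_U}\}$ with support an affine subspace of $\mbb{F}_2^{n}$, and normalization $\sum_{\mb b}\alpha_{U,\mb b}=1$ forces exactly $2^{r_U}$ of the $\alpha_{U,\mb b}$ to equal $2^{-r_U}$ and the remaining $2^{n}-2^{r_U}$ to vanish, which is the claimed distribution.

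Given this, for each $U$ one has $\max_{\mb b}\alpha_{U,\mb b}=2^{-r_U}$ and $\sum_{\mb b}\alpha_{U,\mb b}^{2}=2^{r_U}\cdot 2^{-2r_U}=2^{-r_U}$, so $\sum_{\mb b}\alpha_{U,\mb b}^{2}=\max_{\mb b}\alpha_{U,\mb b}$ for every $U$. Summing over the $2^{n}+1$ bases of $\mc{E}_{\mathrm{MUB}}$ and using the projective $2$-design identity $\sum_{U\in\mc{E}_{\mathrm{MUB}},\,\mb b}\Phi_{U,\mb b}^{\otimes 2}=\mbb{S}+\id^{\otimes 2}$ already invoked in Eq.~\eqref{varMain},
\[
\begin{aligned}
\sum_{U\in\mc{E}_{\mathrm{MUB}}}\max_{\mb b}\alpha_{U,\mb b}
&=\sum_{U,\mb b}\alpha_{U,\mb b}^{2}=\sum_{U,\mb b}\tr(O\Phi_{U,\mb b})^{2}\\
&=\tr\!\big[O^{\otimes 2}(\mbb{S}+\id^{\otimes 2})\big]=\tr(O^{2})+\tr(O)^{2}=2,
\end{aligned}
\]
the last equality because $O=V^{\dag}\ket{\mb 0}\bra{\mb 0}V$ is a rank-one projector with $\tr(O^{2})=\tr(O)=1$. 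The step needing the most care is the distribution above: tying $\mathcal{S}_U^{Z}$ to the left kernel of the $X$-block, confirming the Born distribution is genuinely uniform on an affine subspace (which uses $-\id\notin\mathcal{S}_U$ so that the sign assignment is a bona fide character), and matching $r_U$ to $\mathrm{rank}_{\mbb{F}_2}(C)$ of the stated Z-Tableau rather than of some inverse; once that is in place the summation identity is immediate.
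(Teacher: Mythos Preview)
Your argument is correct. For the distribution statement you follow essentially the same idea as the paper---only the purely $Z$-type elements of the stabilizer group contribute, their number is $2^{\,n-r_U}$, and the character/affine-subspace structure forces the Born probabilities to be $2^{-r_U}$ on $2^{r_U}$ strings and zero elsewhere. The paper carries this out by explicit Gaussian elimination on the Z-Tableau rather than invoking the character argument, but the content is the same. Your parenthetical about the X-block of the stabilizer generators being $C^{T}$ (so that its rank equals $\mathrm{rank}_{\mbb F_2}(C)$) is exactly right: if the full Tableau of $W=UV^{\dag}$ is the symplectic matrix with bottom half $[C,D]$, then the Z-Tableau of $W^{-1}$, which encodes the generators $WZ_iW^{\dag}$ of $\mathcal S_U$, has X-block $C^{T}$.

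Where you differ substantively from the paper is in deriving \eqref{Obs: sum_max}. The paper proves $\sum_U\max_{\mb b}\alpha_{U,\mb b}=2$ \emph{directly}, without using the distribution result: it writes $\max_{\mb b}\alpha_{U,\mb b}=2^{-n}\sum_{\mb m}\mb 1\{VU^{\dag}Z_{\mb m}UV^{\dag}\triangleright Z^{\otimes n}\}$, then sums over $U$ and uses the Pauli-partition property of the MUB (Observation~\ref{ob: MCS_MUB}) to count the $2^{n}-1$ non-identity $Z$-type Paulis plus the $(2^{n}+1)$ identity contributions. You instead first establish the distribution, observe the neat identity $\max_{\mb b}\alpha_{U,\mb b}=\sum_{\mb b}\alpha_{U,\mb b}^{2}$ (both equal $2^{-r_U}$), and then evaluate $\sum_{U,\mb b}\tr(O\Phi_{U,\mb b})^{2}$ via the projective $2$-design identity already used in Eq.~\eqref{varMain}. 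This is a genuinely different and rather elegant route: it trades the combinatorial Pauli count for the $2$-design average, at the cost of making the first claim logically depend on the second. The paper's route has the advantage that the two claims are proved independently; yours has the advantage of reusing machinery (the $2$-design formula) that the paper has already set up.
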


This lemma characterizes the projection of the stabilizer states to the MUBs, which is of independent interest. And the proof is left in 
Appendix \ref{appsec:lemma:Ostab}.

In addition, we numerically demonstrate the variance of estimating observables through different estimation protocols as shown in Fig.~\ref{Fig:Gbiased}. Fig.~\ref{Fig:Gbiased}(a) showcases the significant sampling advantage of the biased-MCM by considering the fidelity estimation of GHZ-like state. In particular, when $\rho=O=|\text{GHZ}\rangle\langle\text{GHZ}|$ at $\theta=0.5\pi$, the variance disappears. Specifically, in Fig.~\ref{Fig:Gbiased}(b), we randomly generate 100 quantum states  $\mc{E}=\{|\Phi_i\rangle\}_{i=1}^{100}$ according to the Haar measure, along with one Clifford stabilizer state $|\Phi_C\rangle$. Then we estimate the fidelity with $O=\Phi_C$ as $\tr(\Phi_i\Phi_C)=|\langle \Phi_i|\Phi_C\rangle|^2$ via different shadow protocols. The experimental result as an average from these 100 experiments highlights a constant advantage of the biased-MCM over other protocols. Moreover, the estimation variance of the observable $O =[\cos(\theta/2)X+\sin(\theta/2)Z]^{\otimes n}$ is discussed in (c) and (d). The stabilizer norm of observables shows $\mc{D}(O_0)^2 = (1+\sin \theta)^{n}$, and it grows exponentially with a base of 2 when $\theta = 0.5\pi$, which is supported by the numerics in Fig.~\ref{Fig:Gbiased} (d).

Note that the observable $O$ in Eq. \eqref{MUB:Dec} is a linear combination of several rank-1 projections $\Phi_{U,\mb{b}}$. In this framework, the role of projections $\Phi_{U,\mb{b}}$ in biased-MCM can be considered as a mimic of Pauli terms in local-biased shadow estimation \cite{hadfield2022measurements}.

\section{Conclusion and Outlook}
In this work, we propose the MCM shadow framework to simplify the original Clifford measurement to the greatest extent. By applying the canonical MUB and the Tableau formalism, we give an explicit and efficient method to synthesize the random unitary ensemble, where the budgets of both the circuit synthesizing and the experimental realization are reduced. For example, the quantum circuit can realize in the platforms with the all-to-all architectures, such as the Rydberg atoms with optical tweezers \cite{evered2023high} and trapped-ion system \cite{figgatt2019parallel}. 
The performance analysis shows the limit and advantages of MCM shadow, especially in estimating off-diagonal observables. The updated biased-MCM shadow protocol adjusts the sampling probability of the random Clifford circuit based on the target observable, which can further enhance the performance of the framework.

From the framework presented here, there are a few intriguing directions to explore in the future. First, it is very interesting to apply the MCM shadow to detect multipartite entanglement \cite{GUHNE2009detection,Friis2019Reviews}, as genuine entanglement \cite{toth2005detecting,zhou2022scheme} and more detailed structures \cite{lu2018entanglement,zhou2019detecting,zhang2023scalable} can be revealed by the fidelities to some target entangled states.  Second, the quantum circuit structure is mainly determined by the Z-Tableau according to the selection of MUBs. It is thus very worth studying how different circuit architectures \cite{Nakata2014review} and ensemble definitions \cite{zhou2022hyper,chen2023magic,park2023resource} would enhance the performance of the whole shadow protocol. Third, in the biased-MCM, we need the projection results of the target observables to all MUBs. Consequently, it is very important to develop additional adaptive or real-time method to ease this \cite{Mahler2013PRL,Ferrie2014PRL}, and find applications where the decomposition on MUB is of polynomial terms, like the Pauli operator summation for general Hamiltonian. Forth, it is possible to extend our MCM to higher-dimensional systems, where the existence of full set of MUB is not fully known \cite{durt2010mutually}. Finally, the extension to boson and fermion systems \cite{Gandhari2022CV,Becker2022cv,Zhao2021Fermionic} with a similar spirit of MCM is also intriguing.

\section{Acknowledgements}
We thank useful discussions with Zhengbo Jiang, Zhou You and Huangjun Zhu. This work is supported by National Natural Science Foundation of China(NSFC) Grant No.12205048, Innovation Program for Quantum Science and Technology 2021ZD0302000, the start-up funding of Fudan University, and the CPS-Huawei MindSpore Fellowship.

%

\onecolumngrid
\newpage

\renewcommand{\addcontentsline}{\oldacl}
\renewcommand{\tocname}{Appendix Contents}
\tableofcontents

\begin{appendix}
\bigskip

In this Appendix, 
we provide the proofs, additional discussions, and generalizations of the results presented in main text. In Sec.~\ref{MUB}, we prove Lemma \ref{th:ClF_decom} and Proposition \ref{th:Mini_UB} from main text. Moving on to Sec.~\ref{ap:MUB}, we offer supplementary details and examples on Mutually Unbiased Bases (MUB) and the circuit synthesis. Sec.~\ref{ap:Performance} discusses the variance of estimation and we also show two applications of MCM-shadow estimation. Finally, in Sec.~\ref{ap:Biased}, we provide the details and proofs of the main results of biased-MCM.


\section{Minimal Unbiased Clifford subset}\label{MUB}

\subsection{Proof of Lemma \ref{th:ClF_decom}}\label{ap:Lemma1}
\begin{proof}{\bf }

First, we define the matrix $Z_{\mb{m}}=\bigotimes_{i=0}^{n-1}{Z_i}^{m_i} $ with $\mb{m}$ an n-bit binary vector. Hence

 \begin{equation}
 \begin{split}
		Z_{\mb{m}}=\bigotimes_{i=0}^{n-1}{Z_i}^{m_i}
  &=\bigotimes_{i=1}^{n}{|0\rangle \langle 0|+(-1)^{m_i} |1\rangle \langle 1|} \\
  &=\sum_{\mb{b} \in \{0,1\}^n}{(-1)^{\mb{m}\cdot \mb{b}} |\mb{b}\rangle \langle \mb{b}|},  \\
\end{split}
		\label{Equ: Z=b}
\end{equation}\par 

where $\mb{b}$ is also an n-bit binary vector. Reversely, we have
    
 \begin{equation}
 \begin{split}
\frac{1}{2^n}\sum_{\mb{m}\in \{0,1\}^n}{(-1)^{\mb{b} \cdot \mb{m}}Z_\mb{m}}
  &=\frac{1}{2^n}\sum_{\mb{m}\in \{0,1\}^n}{(-1)^{\mb{b} \cdot \mb{m}}\sum_{\mb{m}'\in \{0,1\}^n}{(-1)^{\mb{m}'\cdot \mb{m}}|\mb{m}'\rangle \langle \mb{m}'|}}\\
  &=\frac{1}{2^n}\sum_{\mb{m},\mb{m}' \in \{0,1\}^n}{(-1)^{\mb{m}\cdot(\mb{b}+\mb{m}')}|\mb{m}'\rangle \langle \mb{m}'|}\\
  &=\frac{1}{2^n}\sum_{\mb{m}'=\mb{b},\mb{m}}{|\mb{m}'\rangle\langle \mb{m}'|}+\frac{1}{2^n}\sum_{\mb{m}'\neq \mb{b},\mb{m}}{(-1)^{\mb{m}\cdot (\mb{b}+\mb{m}')}|\mb{m}'\rangle\langle \mb{m}'|}\\
  &=|\mb{b}\rangle\langle \mb{b}|+\frac{1}{2^n}\sum_{\mb{m}'\neq \mb{b}}{|\mb{m}'\rangle\langle \mb{m}'|\sum_{\mb{m} }(-1)^{\mb{m} \cdot  (\mb{b}+\mb{m}')}}\\
  &=|\mb{b}\rangle\langle \mb{b}|.\\
\end{split}
\label{Equ: b=Z}
\end{equation}\par 

After that, we define $\Phi_{U,\mb{b}}=U^{\dag}|\mb{b} \rangle \langle \mb{b} |U$, thus we have
\begin{equation}
		\Phi_{U,\mb{b}}=\frac{1}{2^n} \sum_{\mb{m} \in \{0,1\}^n}{(-1)^{\mb{b} \cdot \mb{m}}} S_{\mb{m}}.
		\label{Phi2Sm}
	\end{equation} 
 
Now we can compute that
 \begin{equation}
 \begin{split}
		\mathcal{M}(\rho|U)
  &=\sum_{ \mb{b}\in \{0,1\}^n}{\tr\left[\rho U^{\dag} |\mb{b}\rangle \langle \mb{b}| U\right] U^{\dag} |\mb{b}\rangle \langle \mb{b}| U}\\
  &=\frac{1}{2^{2n}} \sum_{ \mb{b}\in \{0,1\}^n}{\left[\sum_{ \mb{m}\in \{0,1\}^n}{{(-1)^{\mb{b}\cdot \mb{m}}} S_{\mb{m}}}\right]\left[\sum_{ \mb{m}' \in \{0,1\}^n} (-1)^{\mb{b} \cdot \mb{m}'} \tr(\rho S_{\mb{m}'})\right]}\\
  &=\frac{1}{2^{2n}} \sum_{ \mb{b}\in \{0,1\}^n}{\left[\sum_{ \mb{m},\mb{m}'\in \{0,1\}^n}{S_{\mb{m}} \tr(\rho S_{\mb{m'}}) {(-1)^{(\mb{m}+\mb{m}')\cdot \mb{b}}} }\right]}\\
  &=\frac{1}{2^{2n}} \sum_{ \mb{m},\mb{m}'\in \{0,1\}^n}{S_{\mb{m}} \tr(\rho S_{\mb{m}'}) \sum_{ \mb{b}\in \{0,1\}^n}{ (-1)^{(\mb{m}+\mb{m}')\cdot \mb{b} }}}\\
   &=\frac{1}{2^{2n}} \sum_{ \mb{m},\mb{m}'\in \{0,1\}^n}{S_{\mb{m}} \tr(\rho S_{\mb{m}'}) \prod_{i=0}^{n-1} { \left[1+(-1)^{m_i+{m_i}'}\right] }}\\
      &=\frac{1}{2^{2n}} \sum_{\mb{m}=\mb{m}'}{S_{\mb{m}} \tr(\rho S_{\mb{m}'}) \prod_{i=0}^{n-1} { \left[1+(-1)^{m_i+{m_i}'}\right] }}=2^{-n} \sum_{ \mb{m}\in \{0,1\}^n}{S_{\mb{m}} \tr(\rho S_{\mb{m}})}.
\end{split}
		\label{label5}
\end{equation}\par 
\end{proof}

\subsection{Proof of Proposition \ref{th:Mini_UB}}\label{ap:th1}
\begin{proof}{\bf }
Firstly, the representation of $\mc{M}_{Cl}$ is converted to 
\begin{equation}
      \rho =\frac{{{2}^{n}}+1}{\left| \mathcal{E} \right|}\underset{U\in \mathcal{E}}{\mathop \sum }\,\mathcal{M}\left( \rho |U \right)-\mathbb{I}=\frac{1}{{{2}^{n}}}\underset{\sigma \in {\mb{P}^n}}{\mathop \sum }\,\tr\left( \rho \sigma  \right)\sigma.
    \label{UB_1}
\end{equation}
By introducing Lemma \ref{th:ClF_decom}, we have $\mathcal{M}\left( \rho|U \right)=\frac{1}{{{2}^{n}}}\underset{\mb{m}\ne 0}{\mathop \sum }\,{S_{\mb{m}}}\tr\left( \rho {S_{\mb{m}}} \right)+\frac{1}{{{2}^{n}}}\mathbb{I}$. Then we obtain that
\begin{equation}
\frac{{{2}^{n}}+1}{\left| \mathcal{E} \right|}\sum_{U\in \mc{E}}\sum_{\mb{m}\neq 0} \tr\left( \rho {{S}_{\mb{m}}} \right){{S}_{\mb{m}}}=\underset{\sigma \in \mb{P}_*^n~}{\mathop \sum }\,\tr\left( \rho \sigma  \right)\sigma.
    \label{UB_2}
\end{equation}
Eq.~\eqref{UB_2} must be satisfied for all quantum state $\rho$. If $|\mc{E}|<2^n+1$, there exists $\sigma \in \mb{P}_*^n$ that for all $\mb{m} \neq 0 $ and $U \in \mc{E}$, $ \sigma \neq S_{\mb{m}}$. Then let $\rho=(\sigma+\mathbb{I})/{2^n}$, we have $0=\sigma$, which is obviously a paradox! Therefore, $|\mc{E}| \geq 2^n + 1$.
\end{proof}
\section{Mutually Unbiased Bases and Circuit Synthesis}\label{ap:MUB}
\subsection{Introduction to Galois Field}\label{ap:MUB_Galois}

The example of Minimal Unbiased Clifford subset employs the construction of Mutually Unbiased Bases(MUB). Here, we first introduce the concept of Galois Field GF($2^n$) as a mathematical tool to explain the Eq. (8).

A Galois field GF$(2^n)$ has $2^n$ elements, which can be represented in the row vector form or polynomial form. 
\begin{equation}
a=(a_0,a_1,...,a_{n-1}) \quad\text{or}\quad a=\sum_{i=0}^{n-1}{a_i 2^{i}},
\label{}
\end{equation}
where $a_i$ is a binary number. So a GF$(2^n)$ element is also associated with a number from $0$ to $2^n-1$ in the polynomial form.

There are two operations in a Galois field GF$(2^n)$, the addition and the multiplication. Both of them can only be performed on two elements that belong to the same Galois field. All binary additions below are actually the XOR operations.

Galois field addition is relatively simple. Let $a$ and $b$ be two elements in the Galois field. The addition is defined as
\begin{equation}
a\oplus b=(a_0\oplus b_0,a_1\oplus b_1,...,a_{n-1}\oplus b_{n-1})=\sum_{i=0}^{n-1}{(a_i\oplus b_i) 2^{i}}.
\end{equation}

To perform Galois field multiplication correctly and ensure unique results, it is necessary to use irreducible polynomials. There are several efficient algorithms available \cite{shoup1990new,shoup1994fast} for generating these polynomials. An irreducible polynomial in binary format with $(n+1)$ bits is represented by $P_n=2^{n}+\sum_{i=1}^{n-1}{c_i 2^{i}}+1$, and the multiplication of the GF$(2^n)$ elements $a$ and $b$ is defined as
\begin{equation}
a\odot b=\sum_{k=0}^{n-1}{a_i 2^{i}} \cdot \sum_{i=0}^{n-1}{b_i 2^{i}} \text{ mod } P_n.
\label{label4}
\end{equation}
For example, an irreducible polynomial for GF$(2^2)$ is $P_2=2^{2}\oplus2\oplus1$, then $2\odot 2=2^2 \text{ mod } P_2=2\oplus 1=3$. It is easy to prove that the addition and multiplication in a Galois field satisfy the property of commutativity and associativity respectively, and they satisfy the distributive law with each other \cite{durt2010mutually}.

After introducing the fundamental operation of Galois field, we introduce the definition of the matrix $M_{n}^{(j)}$ in Eq.~\eqref{Equ:MUB2} in main text. $M_{n}^{(j)}$ is a symmetric matrix originally introduced for easy  computation of the multiplication in Galois Field. It is originally defined to calculate the multiplication of GF($2^n$) elements in the matrix form. Generally, for  GF($2^n$) elements $a,b$, we have

\begin{equation}
    [a\odot b]_j= a M_{n}^{(j)} b^{T}
\end{equation}

The calculation method for matrix $M_n^{(j)}$ is as follows. First, we define $2^k \text{ mod } P_n=(\Gamma_{k,0},\Gamma_{k,1}, ..., \Gamma_{k ,n-1})$, $k=0,1,...,2n-2$ and construct a $(2n-1)\times n$ binary matrix $\Gamma_n$: 
\begin{equation}
  \Gamma_n  =
  \begin{pmatrix}

\Gamma_{0,0} & \Gamma_{0,1} & \dots & \Gamma_{0,n-1}\\
\Gamma_{1,0} & \Gamma_{1,1} & \ddots & \vdots \\
\dots & \ddots & \ddots & \Gamma_{n-1,n-1}  \\
\Gamma_{2n-2,0} & \dots & \Gamma_{2n-2,n-2} & \Gamma_{2n-2,n-1} 
\end{pmatrix}.
		\label{label8}
\end{equation}

According to the definition of $\Gamma_n$, the first $n$ rows of the elements in $\Gamma_n$ is easily computed, and the $n$-th row of $\Gamma_n$ contains the coefficients of the irreducible polynomial $P_n$. there exists a certain recursive relationship between the elements in later rows, which can be listed as follows:
\begin{equation}
\Gamma_{i,j}=
\begin{cases}
    \delta_{i,j} & i=0,1,...,n-1 \\
    \Gamma_{n,j} & i =n\\
    (1-\delta_{j,0}) \Gamma_{i-1,j-1}+\Gamma_{n,j} \Gamma_{i-1,n-1} & i=n+1,...,2n-2,
\end{cases}
		\label{label9}
\end{equation}
where $\Gamma_{n,j}$ denotes the $j$-th cofficient of $P_n$. We can extract the elements in a certain column of $\Gamma_n$ to form a symmetric matrix $M_n^{(j)}$, where $j$ is the column number, and the elements in the matrix satisfy 
\begin{equation}
[M_n^{(j)}]_{p,q}=\Gamma_{(p+q),j}.
		\label{M_n^(j)}
\end{equation}
\subsection{Supplementary on Mutually Unbiased Bases}\label{ap:MUB_proof}

In fact, Eq.~\eqref{Equ:MUB2}
in main text is basically a rewrite of the construction of the MUB in \cite{durt2010mutually}. Here we give a brief introduction to the property of this $\mc{E}_{\mathrm{MUB}}$. Suppose the first element of $\mc{E}_{\mathrm{MUB}}$ is $\mbb{I}$. For the next $2^n$ element, we define the $v$-th element of $\mc{E}_{\mathrm{MUB}}$ is $U_v$, and $U_v$ is generated by $2^n$ Paulis. Hence we define a Pauli $S_{\mb{m},v}=U_v^{\dagger} Z_{\mb{m}} U_v\in \mb{P}^n$, where $\mathbf{m}$ is an $n$-bit binary vector, and $m_i$ represents the $i$-th bit of $\mathbf{m}$.

Notice that Pauli matrices $\{S_{\mb{m},v}\}_{\mb{m}\in\{0,1\}^n}$ form a Maximally Commuting Set(MCS) \cite{sarkar2019sets} and therefore can generate Clifford elements. Thus, we show an interesting fact as follows. 
\begin{observation}\label{ob: MCS_MUB}
$\mb{P}_*^n = \{S|S = U^{\dagger} Z_{\mb{m}} U, U\in \mc{E}_{\mathrm{MUB}}, \mb{m} \neq 0\}$.
\end{observation}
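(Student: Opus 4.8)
The plan is to recast the Observation as the statement that the map
$(v,\mb{m})\mapsto S_{\mb{m},v}:=U_v^{\dagger}Z_{\mb{m}}U_v$, with $U_0=\mbb{I}$ so that $S_{\mb{m},0}=Z_{\mb{m}}$, restricted to the index set $\{(v,\mb{m}):U_v\in\mc{E}_{\mathrm{MUB}},\ \mb{m}\neq 0\}$, is a bijection onto $\mb{P}_*^n$. First I would count the domain: by Proposition~\ref{th:MCMbyMUB} the ensemble $\mc{E}_{\mathrm{MUB}}$ is an MCM, so by Definition~\ref{df:MCM} and Proposition~\ref{th:Mini_UB} it has \emph{exactly} $|\mc{E}_{\mathrm{MUB}}|=2^n+1$ elements; together with the $2^n-1$ choices of $\mb{m}\neq 0$ this gives $(2^n+1)(2^n-1)=4^n-1$ index pairs, matching $|\mb{P}_*^n|=4^n-1$. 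Hence it suffices to prove surjectivity, since injectivity will then follow purely by cardinality. (Note that for fixed $v$ the $S_{\mb{m},v}$ with $\mb{m}\neq 0$ are already pairwise distinct by Lemma~\ref{th:ClF_decom}, being the $2^n-1$ non-identity elements of the stabilizer group of $U_v^{\dagger}\ket{\mb{0}}$; the content of the Observation is that these $2^n+1$ maximal commuting sets are mutually disjoint and tile $\mb{P}_*^n$.)

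For surjectivity I would reuse the identity derived in the proof of Proposition~\ref{th:Mini_UB}. Since $\mc{E}_{\mathrm{MUB}}$ satisfies the constraint of Task~\ref{task:main}, substituting $|\mc{E}|=2^n+1$ into Eq.~\eqref{UB_2} yields, for every state $\rho$,
\begin{equation}
\sum_{v}\sum_{\mb{m}\neq 0}\tr(\rho S_{\mb{m},v})\,S_{\mb{m},v}=\sum_{\sigma\in\mb{P}_*^n}\tr(\rho\sigma)\,\sigma .
\end{equation}
Suppose, for contradiction, that some $\sigma_0\in\mb{P}_*^n$ equals no $S_{\mb{m},v}$. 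Choosing the density operator $\rho=(\mbb{I}+\sigma_0)/2^n$ (valid since $\sigma_0$ is Hermitian with eigenvalues $\pm1$), the right-hand side contains the nonzero term $\tr(\rho\sigma_0)\sigma_0=\sigma_0$, whereas expanding the left-hand side in the Hilbert–Schmidt-orthogonal Pauli basis shows it has no component along $\sigma_0$; this contradicts linear independence of the Paulis. Therefore every $\sigma\in\mb{P}_*^n$ is realized, i.e. the map is onto, and combined with the count above it is a bijection. This gives the set equality $\mb{P}_*^n=\{S:S=U^{\dagger}Z_{\mb{m}}U,\ U\in\mc{E}_{\mathrm{MUB}},\ \mb{m}\neq 0\}$.

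This proof is short once the earlier machinery is in place, so the ``hard part'' is mostly conceptual bookkeeping: one must use that the MCM bound is \emph{saturated} (equality $|\mc{E}_{\mathrm{MUB}}|=2^n+1$, not merely $\geq$), which is exactly what forces the disjoint tiling rather than mere covering; and one must keep the phase conventions straight, treating all $g_i$ and $S_{\mb{m}}$ as elements of the phase-free group $\mb{P}^n$. Since Eq.~\eqref{UB_2} is manifestly phase-independent, no sign ambiguities arise, and the remaining steps are the counting identity $(2^n+1)(2^n-1)=4^n-1$ together with the tomographic-completeness argument already used for Proposition~\ref{th:Mini_UB}.
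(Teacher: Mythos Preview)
Your argument is clean and in fact proves the more general statement that \emph{any} MCM partitions $\mb{P}_*^n$ into $2^n+1$ disjoint maximal commuting classes. However, as written it is circular within the paper's logical order. You invoke Proposition~\ref{th:MCMbyMUB} to assert that $\mc{E}_{\mathrm{MUB}}$ satisfies the channel constraint of Task~\ref{task:main}, and then feed that into Eq.~\eqref{UB_2}. But in the paper, the proof of Proposition~\ref{th:MCMbyMUB} \emph{is} Appendix~\ref{ap:MUB_proof}: Observation~\ref{ob: MCS_MUB} is established first and is the sole input to Observation~\ref{ob: MUB_MCM}, which is precisely the channel identity. So you are assuming what this Observation is meant to supply. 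The fix is easy: bypass Proposition~\ref{th:MCMbyMUB} and appeal directly to the external fact that a complete set of MUBs is a projective $2$-design (the remark after Proposition~\ref{th:MCMbyMUB}, via Ref.~\cite{zhu2015mutually}); this yields $\mc{M}_{\mc{E}_{\mathrm{MUB}}}=\mc{M}_{Cl}$ independently, after which your surjectivity-plus-counting argument goes through unchanged.

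The paper takes the opposite route: it works directly from the explicit Galois-field formula $S_{\mb{m},v}=\omega_m^v\,X(m)Z(m\odot v)$ of Eq.~\eqref{Equ: Xz_Galois}. The $X$-part reads off $m$, and for $m\neq 0$ the $Z$-part $m\odot v$ then determines $v$, so distinct pairs $(v,\mb{m})$ with $\mb{m}\neq 0$ give distinct Paulis; this produces $2^n(2^n-1)$ elements for the non-identity $U_v$, none of which are $Z$-type, and adjoining the $2^n-1$ operators $\{Z_{\mb{m}}\}_{\mb{m}\neq 0}$ from $U=\mbb{I}$ exhausts $\mb{P}_*^n$. Your approach is slicker and construction-independent but outsources the work to the $2$-design theorem; the paper's approach is self-contained and delivers injectivity explicitly, which is exactly what the downstream Observation~\ref{ob: MUB_MCM} consumes.
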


\begin{proof}
    Firstly, we define $X(a), Z(a)$  respectively as
 \begin{equation}
 \begin{split}
 &X(a)= \sum_{k=0}^{2^n-1}|k \oplus a \rangle \langle k|,\\
 &Z(a)= \sum_{k=0}^{2^n-1} (-1)^{k \odot a}|k  \rangle \langle k|.
\end{split}
\label{Equ: XZ_Galois}
\end{equation}

If $v$ and $m=\sum_{k=0}^{n-1}m_k 2^k$ are viewed as two GF($2^n$) elements, the Pauli matrices can be specified as
 \begin{equation}
 \begin{split}
S_{\mb{m},v}=\omega_{m}^{v}X(m)Z(m\odot v).
\end{split}
\label{Equ: Xz_Galois}
\end{equation}

Eq. \eqref{Equ: Xz_Galois} is actually a rewrite of Eq. (2.54) in \cite{durt2010mutually}, where the parameters $\omega_{m}^{v} \in\{\pm 1, \pm \text{i}\}$, and $S_{\mb{m},v} \in \mb{P}^n$. If $\mb{m_1}=\mb{m_2}=0$, then $S_{\mb{m_1},v_1}=S_{\mb{m_2},v_2}$, else $S_{\mb{m_1},v_1}=S_{\mb{m_2},v_2} = \mbb{I}$ if and only if $\mb{m_1}=\mb{m_2}\neq 0,v_1=v_2$. Consequently, the set $\{S_{\mb{m},v}|\mb{m}\neq 0,v=0,1,...,2^n-1\}$ have $(2^n-1)2^n$ distinct Paulis. Compared with $\mb{P}_*^n$, the set is still short of $2^n-1$ elements, which happen to be the non-identity generators of $\mbb{I}$: $\{Z_{\mb{m}}\}_{\mb{m}\neq 0}$. This is because $S_{\mb{m},v}=U_v^{\dagger} Z_{\mb{m}} U_v, U_v \neq \mbb{I}$. Hence the union of non-identity generators of MUB elements forms the set $\mb{P}_*^n$.
\end{proof}

\begin{observation}\label{ob: MUB_MCM}
 $\frac{1}{|\mc{E}_{\mathrm{MUB}}|} \sum_{U\in \mc{E}_{\mathrm{MUB}}}{\mc{M}(\rho|U)}=\mc{M}_{Cl}(\rho), \forall \rho$.
\end{observation}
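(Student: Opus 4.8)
The plan is to reduce the claim to the Pauli--tiling statement of Observation~\ref{ob: MCS_MUB}, namely that the non-identity stabilizer generators of the $2^n+1$ elements of $\mc{E}_{\mathrm{MUB}}$ together enumerate $\mb{P}_*^n$ exactly once. First I would rewrite the target channel using the Pauli expansion $\rho = 2^{-n}\sum_{\sigma\in\mb{P}^n}\tr(\rho\sigma)\sigma$ and splitting off the identity term, so that $\mc{M}_{Cl}(\rho)=(2^n+1)^{-1}[\rho+\tr(\rho)\id]$ is matched against $\rho - 2^{-n}\tr(\rho)\id = 2^{-n}\sum_{\sigma\in\mb{P}_*^n}\tr(\rho\sigma)\sigma$. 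This isolates the ``non-trivial'' information carried by $\rho$ into a sum over $\mb{P}_*^n$, which is exactly the object Observation~\ref{ob: MCS_MUB} controls.

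Next I would apply Lemma~\ref{th:ClF_decom} term by term: for each $U\in\mc{E}_{\mathrm{MUB}}$ one has $\mc{M}(\rho|U)=2^{-n}\tr(\rho)\id + 2^{-n}\sum_{\mb{m}\neq 0}\tr(\rho S_{\mb{m}})S_{\mb{m}}$, where $S_{\mb{m}}=U^{\dagger}Z_{\mb{m}}U\in\mb{P}^n$. Averaging over the $2^n+1$ elements, the $\mb{m}=0$ part contributes $2^{-n}\tr(\rho)\id$ unchanged, and the remaining piece is $\tfrac{2^{-n}}{2^n+1}\sum_{U\in\mc{E}_{\mathrm{MUB}}}\sum_{\mb{m}\neq 0}\tr(\rho S_{\mb{m}})S_{\mb{m}}$.

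The key step is to invoke Observation~\ref{ob: MCS_MUB}: as $(U,\mb{m})$ ranges over $\mc{E}_{\mathrm{MUB}}\times(\{0,1\}^n\setminus\{0\})$, the operator $U^{\dagger}Z_{\mb{m}}U$ runs over $\mb{P}_*^n$, and since there are $(2^n+1)(2^n-1)=4^n-1=|\mb{P}_*^n|$ such pairs the enumeration is a bijection. Hence $\sum_{U,\mb{m}\neq 0}\tr(\rho S_{\mb{m}})S_{\mb{m}}=\sum_{\sigma\in\mb{P}_*^n}\tr(\rho\sigma)\sigma = 2^n\rho - \tr(\rho)\id$. Substituting back gives $2^{-n}\tr(\rho)\id + (2^n+1)^{-1}\big(\rho - 2^{-n}\tr(\rho)\id\big) = (2^n+1)^{-1}[\rho+\tr(\rho)\id]=\mc{M}_{Cl}(\rho)$, and since every step is linear in $\rho$ this holds for all $\rho$.

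I expect essentially no obstacle beyond bookkeeping: all the real content is already in Observation~\ref{ob: MCS_MUB} (that the chosen MUB genuinely partitions $\mb{P}_*^n$ into $2^n+1$ maximal commuting classes of size $2^n-1$), and the rest is the Pauli-expansion algebra above. As a sanity cross-check, the same conclusion also follows from the fact that MUBs form a projective $2$-design \cite{zhu2015mutually} and that $\mc{M}(\rho|U)$ in Eq.~\eqref{MUchannel} is a degree-two function of the projectors $\Phi_{U,\mb{b}}$; but the Pauli-tiling argument is self-contained given the earlier results, so I would present that one.
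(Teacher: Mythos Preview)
Your proposal is correct and follows essentially the same route as the paper's own proof: apply Lemma~\ref{th:ClF_decom} to each $\mc{M}(\rho|U)$, separate the identity term, and then use Observation~\ref{ob: MCS_MUB} to collapse the double sum $\sum_{U}\sum_{\mb{m}\neq 0}$ into the single sum $\sum_{\sigma\in\mb{P}_*^n}\tr(\rho\sigma)\sigma=2^n\rho-\tr(\rho)\id$, after which the algebra yields $(\rho+\tr(\rho)\id)/(2^n+1)$. Your side remark about the projective $2$-design argument also matches the alternative the paper notes in the main text.
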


 \begin{proof}
 We can use Lemma \ref{th:ClF_decom} to decompose these Cliffords.

\begin{equation}
\begin{split}
 \frac{1}{|\mc{E}_{\mathrm{MUB}}|}\sum_{U\in \mc{E}_{\mathrm{MUB}}}{\mc{M}(\rho|U)}
 &= \frac{1}{2^n+1}\sum_{U\in \mc{E}_{\mathrm{MUB}}}{\mathcal{M}\left( \rho |U \right)}\\
 &=\frac{1}{2^n(2^n+1)}\sum_{U\in \mc{E}_{\mathrm{MUB}}}\sum_{\mb{m}\ne 0}{[S_{\mb{m},v}}\tr\left( \rho {S_{\mb{m},v}} \right)+\mathbb{I}]\\
 &=\frac{1}{2^n+1}[\frac{1}{2^n}\sum_{\sigma \in \mb{P}^n}{\sigma \tr(\rho \sigma)}]+\frac{\mathbb{I}}{2^n+1}\\
 &=\frac{\rho+\mathbb{\mathbb{I}}}{2^n+1}=\mc{M}_{Cl}(\rho).\\
\end{split}
\label{Equ: MUB_proof}
\end{equation}
 \end{proof}

\subsection{Introduction to the Z-Tableau}\label{ap:Z-Tableau}
Here, we introduce the language of Tableau \cite{aaronson2004improved}.
An $n$-qubit Clifford element can be specified with four  $n \times n$  binary matrices ($\mb{\alpha},\mb{\beta},\mb{\gamma},\mb{\delta}$) and two $n$-dimensional binary vectors ($\mb{r,s}$), such that 
\begin{equation}
 {{U}^{\dag }}{{X}_{i}}U={{\left( -1 \right)}^{{{r}_{i}}}}\underset{j=0}{\overset{n-1}{\prod }}\,X_{j}^{{{\alpha }_{ij}}}Z_{j}^{{{\beta }_{ij}}}\text{   }\!\!\And\!\!\text{   }{{U}^{\dag }}{{Z}_{i}}U={{\left( -1 \right)}^{{{s}_{i}}}}\underset{j=0}{\overset{n-1}{ \prod }}\,X_{j}^{{{\gamma }_{ij}}}Z_{j}^{{{\delta }_{ij}}}.
 \label{CLF1}
\end{equation}

The matrix $X_i$, $Z_i$ are respectively defined as a Pauli matrix with $X$ applied to the $i$-th qubit, and as a Pauli matrix with $Z$ applied to the $i$-th qubit. The parameters form a $2n \times (2n+1)$ binary matrix, which is called the Tableau of a Clifford element
\begin{equation}
\left( \begin{matrix}
   \begin{matrix}
   {{[{{\alpha }_{i,j}}]}_{n\times n}}  \\
   {{[{{\gamma }_{i,j}}]}_{n\times n}}  \\
\end{matrix} & \begin{matrix}
   {{[{{\beta }_{i,j}}]}_{n\times n}}  \\
   {{[{{\delta }_{i,j}}]}_{n\times n}}  \\
\end{matrix} & \begin{matrix}
   {{[{{r}_{i,j}}]}_{n\times 1}}  \\
   {{[{{s}_{i,j}}]}_{n\times 1}}  \\
\end{matrix}  \\
\end{matrix} \right).
 \label{Tableau}
\end{equation}
According to Lemma \ref{th:ClF_decom}, 
if we substitute Clifford elements $U$ with the parameters $(\gamma,\delta)$ shown above, we still have the same $\mc{M}(\rho|U)$. Following this thread, we develop an $n \times 2n$ binary matrix called the Z-Tableau matrix
\begin{equation}
\left( \begin{matrix}
\gamma_{0,0} & \gamma_{0,1} & \dots & \gamma_{0,n-1}& \delta_{0,0} & \delta_{0,1} & \dots & \delta_{0,n-1}\\
\gamma_{1,0} & \gamma_{1,1} & \ddots & \vdots &\delta_{1,0} & \delta_{1,1} & \ddots & \vdots\\
\dots & \ddots & \ddots & \gamma_{n-2,n-1} &\vdots & \ddots & \ddots & \delta_{n-2,n-1} \\
\gamma_{n-1,0} & \dots & \gamma_{n-1,n-2} & \gamma_{n-1,n-1} &\delta_{n-1,0} & \dots & \delta_{n-1,n-2} & \delta_{n-1,n-1}
\end{matrix} \right).
\label{Z-Tableau}
\end{equation}

According to Gottesman-Knill theorem \cite{gottesman2002introduction}, Clifford circuits consist of three fundamental quantum gates: the Hadamard gate, the phase gate, and the CNOT gate, and can be efficiently simulated by classical computers in polynomial time. This means that the action of the three gates in $U$ from $U^{\dag} Z_i U$ can be expressed in terms of changes to the elements in the Z-Tableau. 

Now Clifford circuit $\mathbb{I}$ corresponds to the Z-Tableau $[\mbb{O}, \mathbb{I}]$, since $\mathbb{I}^{\dag} Z_i \mathbb{I}= \prod_{j=0}^{n-1} X_j^{\gamma_{ij}=0} Z_j^{\delta_{ij}=\delta_{i,j}}$. Moreover, it can be inferred that if a Clifford circuit $U$ can transform the Z-Tableau from $[C,D]$ to $[\mbb{O},\mathbb{I}]$, then it is a Clifford circuit that satisfies the condition $ U^{\dag} Z_i U= \prod_{j=0}^{n-1} X_j^{\gamma_{ij}} Z_j^{\delta_{ij}} $.

\subsection{Proof of Proposition \ref{th:hankel}}\label{ap:Hankel}
\begin{proof}
We use the following three equations to demonstrate the proposition,
where the first equation is determined by the property of the irreducible polynomial, and the next two come from the recurrence relation in Eq.~\eqref{label9}.

\begin{itemize}
    \item $\Gamma_{n,0}=1$,

    \item $\Gamma_{i+1,0}=\Gamma_{i,n-1}$,

    \item $\Gamma_{i+1,j}=\Gamma_{i,j-1}+\Gamma_{n,j}\Gamma_{i,n-1} (j\geq 1)$.
\end{itemize}

By the definition of Hankel matrix, the linear combination of Hankel matrices is also a Hankel matrix. Considering that a GF$(2^n)$ element $v$ can be represented as $v=\sum_{i=0}^{n-1}v_{i}2^i$, if a matrix $\mbb{D}_i$ is a Hankel matrix with its $j$-th row vector being ${(2^{i}\odot 2^{j})M_n^{(0)}}$, then $D_v$ is also a Hankel matrix. This is evident from the fact that $D_v = \sum_{i=0}^{n-1} v_i \mbb{D}_i$. The matrix $\mbb{D}_i$ is defined such that its $j$-th row vector corresponds to the $(i+j)$-th row vector of the matrix $\Gamma_n M_n^{(0)}$. Consequently, if $\Gamma_n M_n^{(0)}$ is a Hankel matrix, all $\mbb{D}_i$s are also Hankel matrices. 

Let $M_{n}=\Gamma_n M_n^{(0)}$. The first $n$ rows of $M_{n}$ are equivalent to $M_n^{(0)}$, which is a Hankel matrix. We need to prove that $M_{i,j}=M_{i-1,j+1},j=0,1,...,n-2$ for the remaining elements. When $i=n$, we have
\begin{equation}
  \begin{split}
 M_{n,j}
 &=\sum_{k=0}^{n-1}\Gamma_{n,k} M_{k,j}^{(0)}=\sum_{k=0}^{n-1}\Gamma_{n,k} \Gamma_{k+j,0}=\sum_{k=n}^{n+j-1}\Gamma_{n,k-j} \Gamma_{k,0}\\
 &= \Gamma_{n,n-j}+\sum_{k=n+1}^{n+j-1}\Gamma_{n,k-j} \Gamma_{k,0}\\
&= \Gamma_{n,n-j}+\sum_{k=n+1}^{n+j-1}\Gamma_{k,k-j}-\Gamma_{k-1,k-j-1}\\
&=\Gamma_{n+j-1,n-1}=\Gamma_{n+j,0}=M_{n-1,j+1},
  \end{split}
	\label{Hankel1}
\end{equation}
when $i\geq n+1$, we have
\begin{equation}
  \begin{split}
 M_{i,j}
 &=\sum_{k=0}^{n-1}\Gamma_{i,k} M_{k,j}^{(0)}=\sum_{k=0}^{n-1}\Gamma_{i,k} \Gamma_{k+j,0}=\sum_{k=n}^{n+j-1}\Gamma_{i,k-j} \Gamma_{k,0}\\
 &= \sum_{k=n}^{n+j-1}\Gamma_{i-1,k-j-1} \Gamma_{k,0}+\sum_{k=n}^{n+j-1}\Gamma_{n,k-j}\Gamma_{i-1,n-1} \Gamma_{k,0}\\
&= M_{i-1,j+1}+\Gamma_{n+j,0} \Gamma_{i-1,n-1}+\Gamma_{n,n-j} \Gamma_{i-1,n-1}+\sum_{k=n+1}^{n+j-1}\Gamma_{n,k-j}\Gamma_{k-1,n-1}\Gamma_{i-1,n-1}\\
&=M_{i-1,j+1}+\Gamma_{n+j,0} \Gamma_{i-1,n-1}+\Gamma_{n,n-j} \Gamma_{i-1,n-1}+\Gamma_{i-1,n-1}(\Gamma_{n+j-1,n-1}-\Gamma_{n,n-j})\\
&=M_{i-1,j+1}.
  \end{split}
		\label{Hankel2}
\end{equation}

Therefore, we demonstrate that $M_n$ is a Hankel matrix, then by definition, the matrix $D_v$ is also a Hankel. 
\end{proof}

\subsection{The $n=3$ example of MUB }\label{ap: MUBexample}

In this part, we give an example of $\mc{E}_{\mathrm{MUB}}$ in $n=3$ case, where the total number of the elements is 9. The 0-th element of $\mc{E}_{\mathrm{MUB}}$ is undoubtedly $\mathbb{I}$, and the rest 8 elements are constructed with the help of the Z-Tableau language. Suppose the Z-Tableau of the $v$-th element is $[\mathbb{I}, D_v]$, where $v=0,1,...,7$. Eq.~\eqref{Equ:MUB2}
in main text shows that the $j$-th row of matrix $D_v$ is $(v \odot 2^j) M_n^{(0)}$. Since $v=\sum_{i=0}^{2}{v_i 2^i}$, $D_v$ can be rewrote as $D_v=\sum_{i=0}^{2}{v_i \mbb{D}_{i}}$, where the $j$-th row of matrix $\mbb{D}_i$ is $(2^{j+i}) M_3^{(0)}$. Moreover, considering $(j+i)=0,1,...,4$. The binary matrix $\Gamma_n$ in Eq. (\ref{label8}) can be used to compute $\mbb{D}_i$ easily. 

Here is an example for the case of $n=3$, where $P_3=2^3\oplus 2 \oplus 1$. First, we compute $\Gamma_3$ according to Eq.~\eqref{label9}
\begin{equation}
  \Gamma_3=
  \begin{pmatrix}

1 & 0& 0\\
0 & 1& 0\\
0 & 0& 1\\
1 & 1& 0\\
0 & 1& 1
\end{pmatrix}.
		\label{MUchannel0}
\end{equation}
And $M_3^{(0)}$ can be constructed from the first row of $\Gamma_3$
\begin{equation}
  M_3^{(0)}=
  \begin{pmatrix}

1 & 0& 0\\
0 & 0& 1\\
0 & 1& 0
\end{pmatrix}.
		\label{MUchannel1}
\end{equation}
Therefore, we have the matrix 
\begin{equation}
 M_3=\Gamma_3 M_3^{(0)}=
  \begin{pmatrix}

1 & 0& 0\\
0 & 0& 1\\
0 & 1& 0\\
1 & 0& 1\\
0 & 1& 1
\end{pmatrix},
		\label{MUchannel2}
\end{equation}
and $\mbb{D}_i$ consists of the elements from the $i$-th to $(i+2)$-th row of matrix $M_3$. Finally, we can conduct that

\begin{equation}
\mbb{D}_0=
  \begin{pmatrix}

1 & 0& 0\\
0 & 0& 1\\
0 & 1& 0
\end{pmatrix} \quad
\mbb{D}_1=
  \begin{pmatrix}

0 & 0& 1\\
0 & 1& 0\\
1 & 0& 1
\end{pmatrix} \quad
\mbb{D}_2=
  \begin{pmatrix}

0 & 1& 0\\
1 & 0& 1\\
0 & 1& 1
\end{pmatrix},
		\label{MUchannel3}
\end{equation}
and every Z-Tableau matrix $D_v$ can be decomposed as  $D_v=\sum_{i=0}^{2}{v_i \mbb{D}_{i}}$ for $v = \sum_{i=0}^{2}v_i 2^i$. Notice that the choice of Z-Tableau can also be variant considering the choice of irreducible polynomials, and the transformation from Z-Tableau to the Clifford circuit can be easily implemented with a 3-stage computation of $\mathrm{-S-CZ-H-}$. The corresponding Clifford circuits are as follows.

\begin{figure}[htbp!]
    \centering
    \includegraphics[scale=0.3]{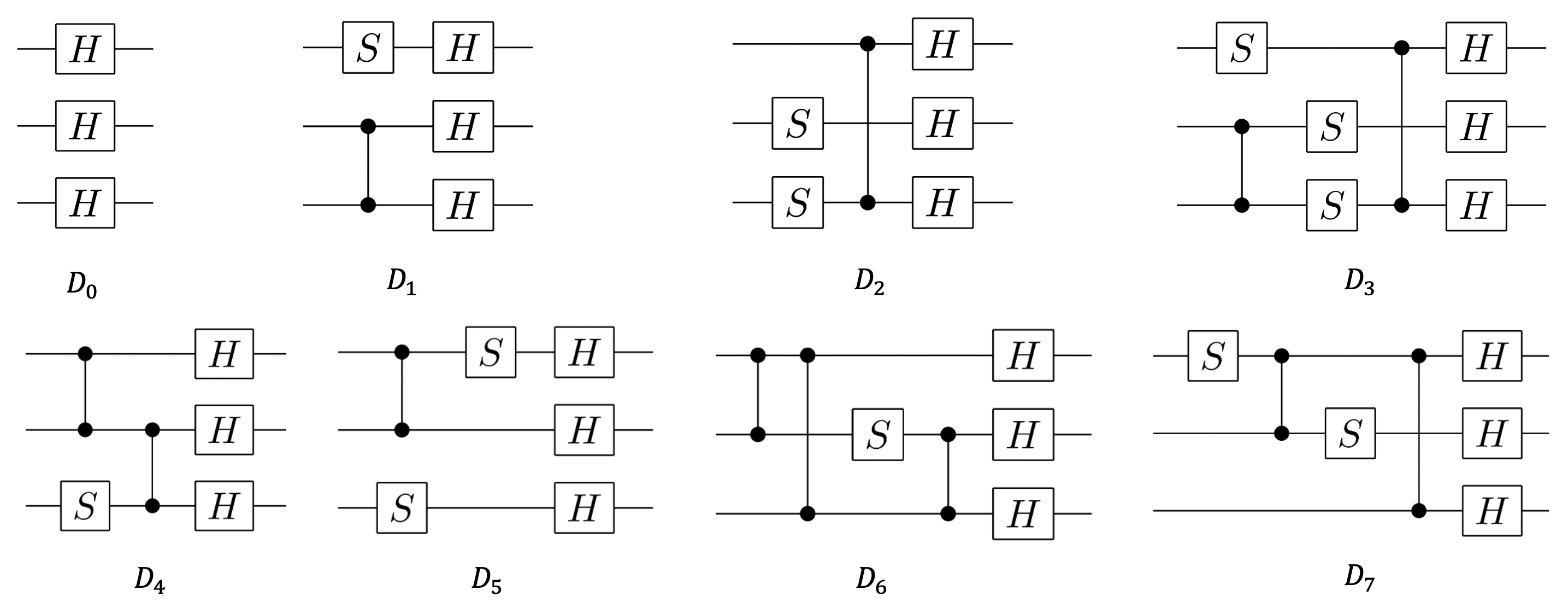}
    \caption{Illustrations of all non-identity MCM circuits for $n=3$, with $P_3=2^3\oplus 2 \oplus 1$.}
\end{figure}
    
\section{Performance analysis of MCM shadow estimation}\label{ap:Performance}

  \subsection{Variance of the estimation}  \label{ap:var}

First, we give a brief introduction to the variance of shadow estimation. In this paper, when calculating the variance, we only consider the linear physical properties and do not take into account the median-of-means method in quantum shadow tomography. Let $O$ be a fixed observable, and $\rho$ be the quantum state. After quantum shadow tomography, the classical snapshot is denoted as $\hat{\rho}$. 
And the desired property $\hat{O}=\tr(O\hat{\rho })$ should satisfy $\mathbb{E}\text{(}\hat{O})=\tr(O\rho )$.
Let $O_0 = O - \frac{\tr(O)}{2^n}\mathbb{I}$, one can calculate its variance as
\begin{equation}
\begin{split}
  & \text{Var}(\hat{O})=\mathbb{E}\left[ {{(\hat{O}-\mathbb{E}(\hat{O}))}^{2}} \right]=\mathbb{E}\left[ {{(\tr(O\hat{\rho} ))}^{2}} \right]-{{(\tr(O\rho ))}^{2}} \\ 
 & =\mathbb{E}\left[ {{(\tr({{O}_{0}}\hat{\rho}  ))}^{2}} \right]-{{(\tr({{O}_{0}}\rho ))}^{2}} \\ 
 & =\mathbb{E}\tr [O_0 \mc{M}^{-1}(\Phi_{U,\mb{b}})]^2-{{(\tr({{O}_{0}}\rho ))}^{2}}. \\ 
\end{split}
\label{var-2}
\end{equation}

Note that ${{(\tr({{O}_{0}}\rho ))}^{2}}$ is a fixed value, thus does not require detailed discussion. On the other hand, $\mathbb{E}\tr [O_0 \mc{M}^{-1}(\Phi_{U,\mb{b}})]^2$ is related to the specific measurement method, including the random measurement set $\mathcal{E}$ and the channel function $\mathcal{M}$,
\begin{equation}
\begin{split}
     &\mathbb{E}\tr [O_0 \mc{M}^{-1}(\Phi_{U,\mb{b}})]^2\\
     &={{\mathbb{E}}_{U\in \mathcal{U}}}\sum\limits_{\mb{b}\in {{\left\{ 0,1 \right\}}^{n}}}{\langle \mb{b}|U\rho {{U}^{\dag }}|\mb{b}\rangle {{\langle \mb{b}|U{{\mathcal{M}}^{-1}}({{O}_{0}}){{U}^{\dag }}|\mb{b}\rangle }^{2}}}.\\
\end{split}
    \label{var-4}
\end{equation}

Here, we briefly introduce the theoretical upper limits of the variance under two conventional methods, i.e., the Clifford measurement and the Pauli measurement. The specific derivation can be found in \cite{huang2020predicting}.
For Clifford measurement,
\begin{equation}
    \mathbb{E}\tr [O_0 \mc{M}^{-1}(\Phi_{U,\mb{b}})]^2\le 3tr({{O}^{2}}).
    \label{var-5}
\end{equation}
And for Pauli measurement, 
\begin{equation}
    \mathbb{E}\tr [O_0 \mc{M}^{-1}(\Phi_{U,\mb{b}})]^2\le {{4}^{locality(O)}}||O||_{\infty }^{2},
    \label{var-6}
\end{equation}
where $||\cdot||{}_{\infty }$ is the spectral norm (or operator norm) of a matrix, and $locality(O)$ represents the number of non-unit observables of each qubit in the observable $O$.
Meanwhile, it is noteworthy that for Clifford measurement, the average upper bound is
\begin{equation}
  \begin{split}
  & \mathbb{E}_{U \sim \mc{E}_{\text{Cl}},\rho \sim \mc{E}}\tr [O_0 \mc{M}^{-1}(\Phi_{U,\mb{b}})]^2 \\ 
 & =\frac{2^n+1}{2^n+2}{ \mathbb{E}_{\rho \sim \mc{E}}
(\tr(\rho) \tr(O_0^2) +2\tr(\rho O_0^2)) } \\ 
 & =\frac{2^n+1}{2^n+2}{
(\tr(O_0^2) +\tr(O_0^2)/{2^{n-1}}) }=\frac{2^n+1}{2^n} \tr(O_0^2).\\ 
\end{split}
\label{var-12}
\end{equation}
And for Pauli measurement, define $O=\widetilde{O} \otimes \mathbb{I}^{\otimes n-k}$ to emphasize the local observables, and $\widetilde{O}=\sum_{\textbf{p}}{\alpha_{\textbf{p}} P_{\textbf{p}}}$ where $P_\textbf{p} \in \{\mathbb{I},X,Y,Z\}^{\otimes k}$, $|\textbf{p}| $ denote the number of non-identity Paulis in $P_\textbf{p}$. The average upper bound is   

\begin{equation}
  \mathbb{E}_{U \sim \mc{E},\rho \sim \mc{E}_{\text{Pauli}}}\tr [O_0 \mc{M}^{-1}(\Phi_{U,\mb{b}})]^2=\sum_{\textbf{p}}\alpha_{\textbf{p}}^2 3^{|\textbf{p}|}.
  \label{var-13}
\end{equation}

\subsection{Off-diagonal fidelity}\label{ap:off-diag}

In this section, we discuss four specific cases for the performance of MCM and give proof for Theorem~\eqref{thm:off-diag} 
in main text. After that, we use the example of fidelity estimation to illustrate the connection between the off-diagonal observables and the original observables.
Here we provide more details on the observation of the off-diagonal fidelity. 
The four specific cases are listed as follows.

For $U\in \mc{E}_{\mathrm{MUB}}$ and $\mb{b} \in \{0,1\}^n$, $\langle\rho\rangle_{U,\mb{b}}=\tr[\rho \Phi_{U,\mb{b}}]\geq 0$, $\langle O_0\rangle_{U,\mb{b}}=\tr[O_0  \Phi_{U,\mb{b}}]\neq 0$. In the following discussion, if there exists no extra clues, we assume $\langle \rho \rangle_{U,\mb{b}},\langle O_0\rangle_{U,\mb{b}}=O(2^{-n})$.

\begin{itemize}
    \item \textbf{Case I:} There exists $U\in \mc{E}_{\mathrm{MUB}}$ and $\mb{b} \in \{0,1\}^n $ such that both $\langle \rho \rangle_{U,\mb{b}},\langle O_0\rangle_{U,\mb{b}}=\Theta(1)$. Then the upper bound is $\mathbb{E}\tr [O_0 \mc{M}^{-1}(\Phi_{U,\mb{b}})]^2=\Theta(2^n)$.\\

    \item \textbf{Case II: }There exists $ U\in \mc{E}_{\mathrm{MUB}}$ and $\mb{b} \in \{0,1\}^n$, such that $\langle \rho \rangle_{U,\mb{b}}=\Theta(1)$, but $\langle O_0 \rangle_{U,\mb{b}}=O(2^{-n})$. Then the upper bound is $\mathbb{E}\tr [O_0 \mc{M}^{-1}(\Phi_{U,\mb{b}})]^2=\Theta(1)$.\\

    \item \textbf{Case III:} There exists $U\in \mc{E}_{\mathrm{MUB}}$ and $\mb{b} \in \{0,1\}^n$, such that $\langle O_0 \rangle_{U,\mb{b}}=\Theta(1)$, but $\langle \rho \rangle_{U,\mb{b}}=O(2^{-n})$. Then the upper bound is $\mathbb{E}\tr [O_0 \mc{M}^{-1}(\Phi_{U,\mb{b}})]^2=\Theta(1)$.\\

    \item \textbf{Case IV:} For all $ U\in \mc{E}_{\mathrm{MUB}}$ and $\mb{b} \in \{0,1\}^n$,  $\langle O_0 \rangle_{U,\mb{b}}=O(2^{-n}) $ as well as $\langle \rho \rangle_{U,\mb{b}}=O(2^{-n})$. Then the upper bound is $\mathbb{E}\tr [O_0 \mc{M}^{-1}(\Phi_{U,\mb{b}})]^2=\Theta(1)$.\\

\end{itemize}

These four cases are based on the hypothesis that $\tr(O_0^2)$ is constant, which is satisfied when estimating fidelities. The proofs of the four cases are demonstrated as follows.

\textbf{Case I:}

\begin{proof}
    We set $a_1\leq\langle \rho \rangle_{U,\mb{b}} \leq b_1$,$a_2\leq\langle O_0 \rangle_{U,\mb{b}} \leq b_2$.
    \begin{equation}
  \begin{split}
  & \mathbb{E}\tr [O_0 \mc{M}^{-1}(\Phi_{U,\mb{b}})]^2 \\ 
 & =({{2}^{n}}+1){{\sum\limits_{U\in \mathcal{E}_{MUB}}{\sum\limits_{\mb{b}\in {{\{0,1\}}^{n}}}\tr [O_0 \Phi_{U,\mb{b}}]^2}}}\tr [\rho \Phi_{U,\mb{b}}] \\ 
 &\geq ({{2}^{n}}+1) a_2^2 a_1
\end{split}
\label{var-case1}
\end{equation}
Since the upper bound is generally $O(2^n)$, then it's necessarily $\Theta(2^n)$.
        \end{proof}

\textbf{Case II: }

\begin{proof}
We set $\langle O_0 \rangle_{U,\mb{b}}\leq \frac{a}{2^n}$ for all ${U,\mb{b}}$ ,
    \begin{equation}
  \begin{split}
  & \mathbb{E}\tr [O_0 \mc{M}^{-1}(\Phi_{U,\mb{b}})]^2 \\ 
 & =({{2}^{n}}+1){{\sum\limits_{U\in \mathcal{E}_{MUB}}{\sum\limits_{\mb{b}\in {{\{0,1\}}^{n}}}\tr [O_0 \Phi_{U,\mb{b}}]^2}}}\tr [\rho \Phi_{U,\mb{b}}] \\ 
 &\leq ({{2}^{n}}+1){{\sum\limits_{U\in \mathcal{E}_{MUB}}\frac{a^2}{4^n}{\sum\limits_{b\in {{\{0,1\}}^{n}}}{{}}}}}\tr [\rho \Phi_{U,\mb{b}}] \\
 &=\frac{(2^n+1)^2 a^2}{4^n} \sim O(1)
\end{split}
\label{var-case2}
\end{equation}
        \end{proof}
\textbf{     Case III: }
     
     \begin{proof}
We set $\langle \rho \rangle_{U,\mb{b}} \leq \frac{b}{2^n}$ for all ${U,\mb{b}}$,
    \begin{equation}
  \begin{split}
  & \mathbb{E}\tr [O_0 \mc{M}^{-1}(\Phi_{U,\mb{b}})]^2 \\ 
 & =({{2}^{n}}+1){{\sum\limits_{U\in \mathcal{E}_{MUB}}{\sum\limits_{b\in {{\{0,1\}}^{n}}}{\tr [O_0 \Phi_{U,\mb{b}}]^2}}}}\tr [\rho \Phi_{U,\mb{b}}] \\ 
  & \leq \frac{b(2^n+1)}{2^n}\sum_{U\in\mc{E}_{\mathrm{MUB}}}\sum_{\mb{b}\in\{0,1\}^n}\tr [O_0 \Phi_{U,\mb{b}}]^2
\end{split}
\label{var-case3}
\end{equation}
Now the equation is at the same scale with the average upper bound of Minimal Clifford measurement.
        \end{proof}

\textbf{        Case IV:    
        }
        \begin{proof}
    The conclusion is easy to proof since the scale of the upper bound in this case is certainly smaller than that of both case II and case III.
    \end{proof}

Here is the proof of Theorem 2.
\begin{proof}

For observable $O_0=O_F=\sum_{\mb{b_1}\neq\bm{b_2}} O_{\mb{b_1},\mb{b_2}}\ket{\Phi_{U,\mb{b_1}}}\bra{\Phi_{U,\mb{b_2}}}$, Eq. (\ref{var-4}) can be written as
\begin{equation}
\begin{split}
  & \mathbb{E}\tr [O_0 \mc{M}^{-1}(\Phi_{U',\mb{b}})]^2 \\ 
 &=(2^n+1)\sum_{U'\in \mathcal{E}_{MUB},\mb{b}} \tr[O_0 \Phi_{U',\mb{b}}]^2 \tr(\rho \Phi_{U',\mb{b}})\\
 &=(2^n+1)\sum_{U'\in \mathcal{E}_{MUB},\mb{b}} \tr[\sum_{\mb{b_1}\neq\bm{b_2}} O_{\mb{b}_1,\mb{b}_2}\ket{\Phi_{U,\mb{b_1}}}\bra{\Phi_{U,\mb{b_2}}}\Phi_{U',\mb{b}}]^2 \tr(\rho \Phi_{U',\mb{b}})\\
 &=(2^n+1)\sum_{U'\in \mathcal{E}_{MUB},\mb{b}}\sum_{\mb{b_1}>\mb{b_2}}\left[  O_{\mb{b}_1,\mb{b}_2}\langle\Phi_{U',\mb{b}}| \Phi_{U,\mb{b_1}}\rangle \langle\Phi_{U,\mb{b_2}}| \Phi_{U',\mb{b}}\rangle+     O_{\mb{b}_2,\mb{b}_1}\langle\Phi_{U',\mb{b}}| \Phi_{U,\mb{b_2}}\rangle  \langle\Phi_{U,\mb{b_1}}| \Phi_{U',\mb{b}}\rangle     \right]^2 \tr(\rho \Phi_{U',\mb{b}})\\
\end{split}
    \label{Equ:off-obs variance}
\end{equation}

After reviewing the definition of MUB \cite{durt2010mutually}, we have

\begin{equation}
 {
 \langle\Phi_{U',\mb{b}}| \Phi_{U,\mb{b_1}}\rangle=
 \begin{cases}
    \delta_{\mb{b},\mb{b_1}}, & U'=\mathbb{I} \\
    \frac{1}{\sqrt{2^n}}e^{i\theta_{\mb{b},\mb{b_1}}}, & U' \neq \mathbb{I}
\end{cases}
 \text{ }\!\!,\ \  \!\! \text{ }
\langle\Phi_{U',\mb{b}}| \Phi_{U,\mb{b_2}}\rangle =
 \begin{cases}
    \delta_{\mb{b},\mb{b_2}}, & U'=\mathbb{I} \\
    \frac{1}{\sqrt{2^n}}e^{i\theta_{\mb{b},\mb{b_2}}}, & U' \neq \mathbb{I}
\end{cases}
 } 
 \label{Equ: MUB_variance}
\end{equation}

where $\theta_{\mb{b},\mb{b}_1},\theta_{\mb{b},\mb{b}_2}$ are phases for certain $U,U'$. Then,

\begin{equation}
\begin{split}
& O_{\mb{b_1},{\mb{b_2}}} \braket{\Phi_{U',\mb{b}}}{\Phi_{U,\mb{b_1}}}\braket{\Phi_{U,\mb{b_2}}}{\Phi_{U',\mb{b}}} + O_{\mb{b_2},{\mb{b_1}}} \braket{\Phi_{U',\mb{b}}}{\Phi_{U,\mb{b_2}}}\braket{\Phi_{U,\mb{b_1}}}{\Phi_{U',\mb{b}}}\\
=\ & \frac{1}{2^n}\left[ O_{\mb{b_1},\mb{b_2}} e^{\text{i}(\theta_{\mb{b},\mb{b_1}}-\theta_{\mb{b},\mb{b_2}})} + O_{\mb{b_2},\mb{b_1}} e^{\text{i}(\theta_{\mb{b},\mb{b_2}}-\theta_{\mb{b},\mb{b_1}})}\right]\\
\leq\ & \frac{1}{2^n}\left( \abs{O_{\mb{b_1},\mb{b_2}}} +\abs{O_{\mb{b_2},\mb{b_1}}}\right).
\end{split}  
\end{equation}

 Now since $\sum_{\mb{b} \in \{0,1\}^n}\tr(\rho \Phi_{U',\mb{b}})=1$, we can conclude that 

\begin{equation}
\begin{split}
  & \mathbb{E}\tr [O_0 \mc{M}^{-1}(\Phi_{U',\mb{b}})]^2 \\ 
  &\leq \frac{{{2}^{n}}+1}{(2^{2n})}{{\sum\limits_{U'\neq \mathbb{I}}{\sum\limits_{\mb{b}\in {{\{0,1\}}^{n}}}{{(\sum_{\mb{b_1}\neq \mb{b_2}}{|O_{\mb{b_1},\mb{b_2}}|})^{2}}}}}}\tr [\rho \Phi_{U',\mb{b}}] \\
  &=\frac{{{2}^{n}}+1}{2^{n}}({C_{l_1}(O)})^{2}
\end{split}
    \label{Equ:off-obs variance_2}
\end{equation}
where $C_{l_1}(O)=\sum_{\mb{b_1}\neq\mb{b_2}}{|O_{\mb{b_1},\mb{b_2}}|}$ denotes the $l_1-$norm of quantum coherence \cite{b2014q}. Therefore, the scale of the theoretical upper bound is no larger than that of the norm if the norm remains a constant level.
\end{proof}

Finally, We introduce the off-diagonal fidelity as the observable in our numerical simulation.  We define the observable $O(a)=a(|0\rangle \langle 1|^{\otimes n}+|1\rangle \langle 0|^{\otimes n})+(1-a)(|0\rangle \langle 0|^{\otimes n}+|1\rangle \langle 1|^{\otimes n})$. It shows that when $a=0$, $O(a)$ turns to the off-diagonal fidelity, and when $a=0.5$, $O(a)$ turns to the fidelity of the GHZ state. Therefore, we build a connection between the fidelity and off-diagonal fidelity.

\begin{figure}[htbp!]
    \centering
    \includegraphics[scale=0.7]{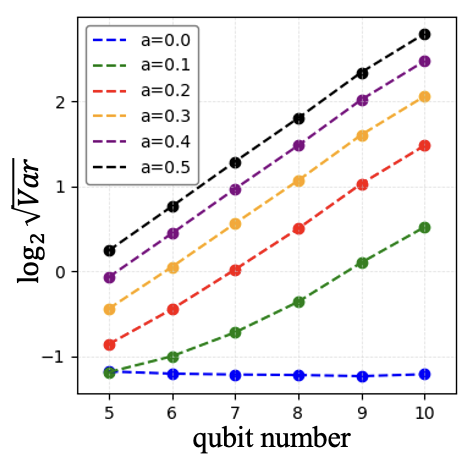}
    \caption{
    The statistical variance of shadow estimation with $N=10000$ using MCM with different parameters $a$. 
    \label{fig:fid+off-diag ex}
    }

\end{figure}

As is shown in Fig.(\ref{fig:fid+off-diag ex}), the fitting curve in $a\geq 0.2$ are approximately straight, with a slope of 0.5. However in $a=0$, the variance can be considered as a constant. The sudden transformation of the variance with the parameter $a$ demonstrate the tremendous impact of the diagonal terms  on the variance itself, especially when the qubit number increases.  This is because the diagonal terms of the observable make a exponential contribution to the variance with the qubit number. Hence the importance of dividing the off-diagonal part is demonstrated.
    \label{Fig:hybrid}

\subsection{Example 1: Estimating local observables with MCM}\label{ap:local ob}

In this example, we focus on the observable being a $k$-local operator $O_k=O(\theta)^{\otimes k}\otimes \mathbb{I}^{\otimes(n-k)}$, where $O(\theta)=\cos(\theta) Z+ \sin(\theta)X$, and the processed state $\rho=(\ketbra{0}{0})^{\otimes n}$ with the qubit number $n=8$.
We compare numerically the performance of estimating such observable using our MCM approach with the Clifford and Pauli measurement as shown in Fig. \ref{fig:local ob}. Note that each experiment repeats $N=10000$ trials.

\begin{figure}[htbp!]
    \centering
    \includegraphics[scale=0.35]{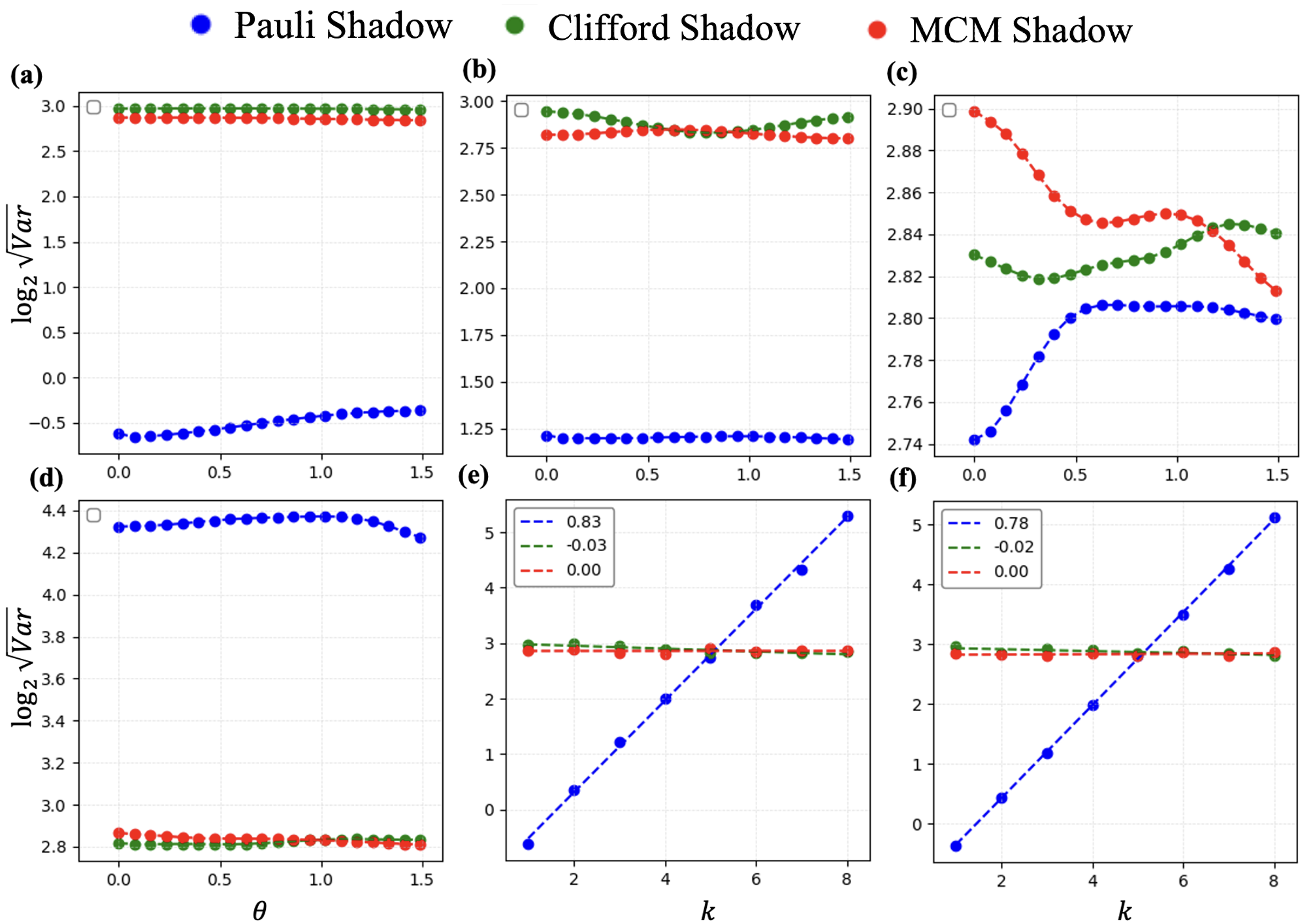}
    \caption{
   The statistical variance of estimating local observable through Pauli measurement (blue), Clifford measurement (green) and our MCM approach (red) respectively with $N=10000$ .  In (a)-(d), we investigate the dependence of $\log_2{\sqrt{Var}}$ with the phase factor $\theta$ in the observable $O_k$, with locality $k=1,3,5,7$ respectively.
   In (e) and (f), we show the dependence of the estimation variance with the locality $k$, with $\theta=0,\pi/2$ respectively. The dotted lines in (e) and (f) show the fitting curves and the corresponding slopes of these cases.
    \label{fig:local ob}
    }
    \end{figure}

We find that the performance of the MCM protocol behaves almost the same with the Clifford measurement. Specifically, the estimation variance of Pauli measurement grows exponentially with the locality number $k$, while the other two remain constant. Fig. \ref{fig:local ob}(c) shows that the intersection point of these three approaches lies in around $k=5$.
Moreover, the variance of single Pauli observables in our approach can be theoretically computed. Generally, we have $O=O_0=P$, where $P$ is a non-identity Pauli matrix. For any MUB, there exists a Clifford element $V$ and an n-bit binary vector $\mb{b}'$ such that $V^{\dag} Z_{\mb{b}'} V=P$. Thus, we have

\begin{equation}
\begin{split}
  & \mathbb{E}\tr [O_0 \mc{M}^{-1}(\Phi_{U,\mb{b}})]^2 \\ 
  &=({{2}^{n}}+1){{(\sum_{U\neq V}+\sum_{U= V}){\sum\limits_{\mb{b}\in {{\{0,1\}}^{n}}}{{(\langle \mb{b}|U P U^{\dagger}|\mb{b}\rangle)^2}}}}\langle \mb{b}|U}\rho U^{\dagger}|\mb{b}\rangle. \\
\end{split}
    \label{Equ:local_Pauli variance}
\end{equation}
After reviewing Eq. (\ref{Equ: Z=b}) and knowing that $\mb{b}' \neq 0$, we have

\begin{equation}
\begin{split}
  \langle \mb{b}|U P U^{\dagger}|\mb{b}\rangle &=\langle \mb{b}|U V^{\dag} Z_{\mb{b}'} V U^{\dag} |\mb{b}\rangle \\ 
  &=\frac{1}{2^n}\sum_{l \in\{0,1\}^n}{(-1)^{l \cdot \mb{b}'}} \langle \mb{b}| U V^{\dag}|l\rangle \langle l| V U^{\dag}|\mb{b}\rangle\\
  &=\begin{cases}
    0, & U \neq V \\
    (-1)^{\mb{b} \cdot \mb{b}'}, & U=V.
\end{cases}\\
\end{split}
    \label{Equ:local_Pauli variance}
\end{equation}
Therefore, Eq.(\ref{Equ:local_Pauli variance}) can be converted to $\mathbb{E}\tr [O_0 \mc{M}^{-1}(\Phi_{U,\mb{b}})]^2=2^n+1$, which grows exponentially with the qubit number. On the other hand, the theoretical bound of the Clifford variance is at a scale of $tr(O^2)$, which is also $O(2^n)$. The theoretical result shows that the variance of both two approaches grow with the same scale of the qubit number, and the numerical result shows that at a wider range, the performance of the Clifford measurement and the MCM are almost the same.

\subsection{ Example 2: Haar random fidelities}

\begin{figure}[htbp!]
    \centering
    \includegraphics[scale=0.3]{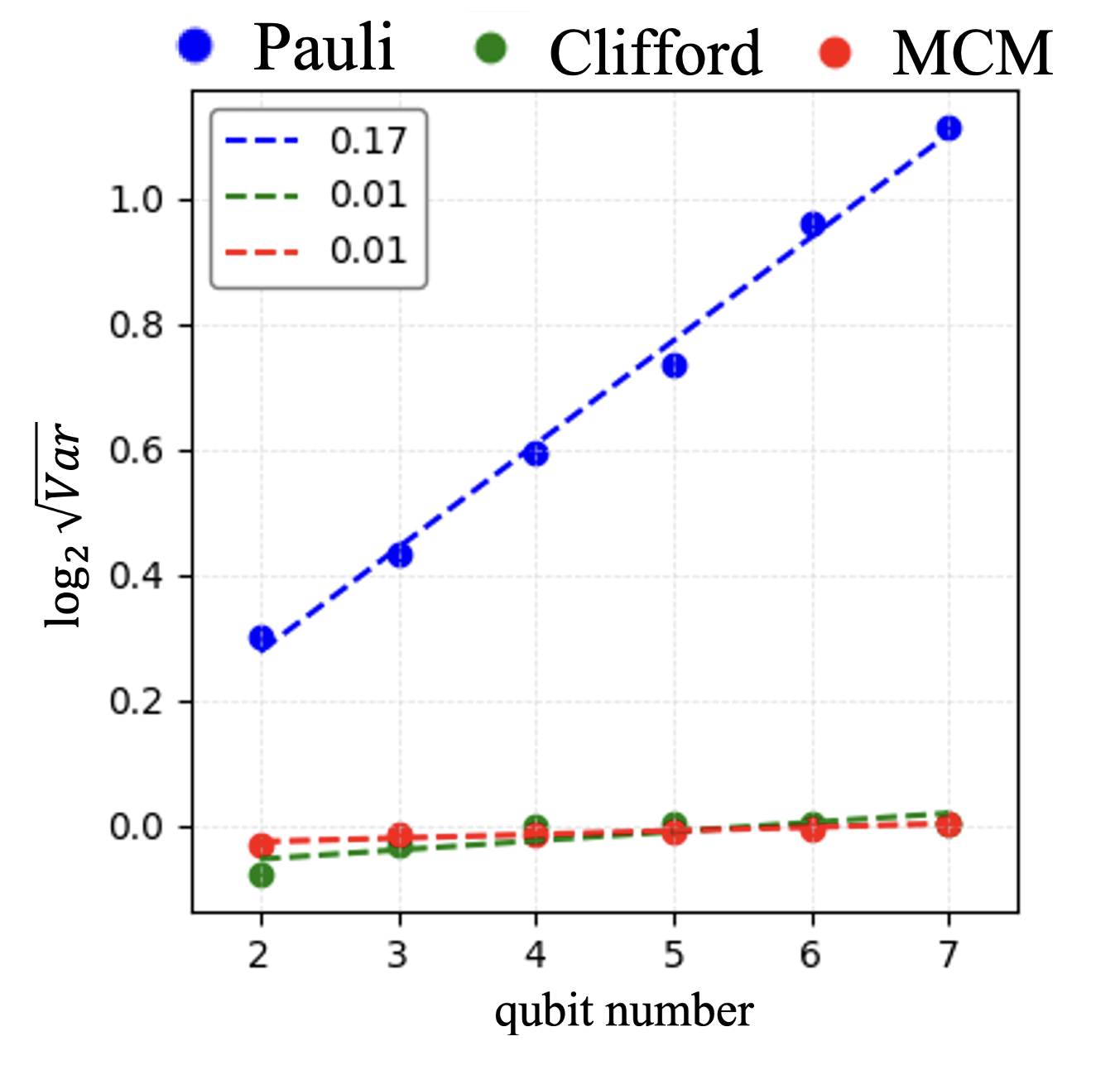}
    \caption{
    Statistical variance of shadow estimation with $N=10000$ using Pauli measurement (blue), Clifford measurement (green) and MCM protocol (red), respectively. The dotted lines show the fitting curves and the corresponding slopes of the logarithmic variance of these cases.
    \label{fig:average}
    }
\end{figure}

In this example, we utilize the random fidelities to access the average performance of different random measurements. The procedure for the simulation experiments is as follows. First we randomly generate 100 quantum states, denoted as $\mc{E}=\{|\Phi_i\rangle\}_{i=1}^{100}$ on the Haar measure, then we select the first state as the observable to estimate the fidelity $|\langle \Phi_1|\Phi_i\rangle|^2$ through shadow tomography. This process is repeated $10000$ times, and the logarithmic variance is calculated to evaluate the performance of the random measurements. Due to the property of Haar measure, the expectation  $\mathbb{E}_{\rho \sim \mc{E}}=\mathbb{I}/{2^n}$, which indicates that the variance of our approach in estimating random quantum states is on the same scale as the Clifford measurement. This is illustrated in Fig. \ref{fig:average}.

\section{Biased-MCM shadow estimation}\label{ap:Biased}

\subsection{The unbiasedness of the estimator}\label{ap:unbiased estimator}
In this section, we prove the unbiasedness of the estimator shown 
in the main text. Here we assume that all the probability $p_U \neq0$ for all possible $U\in\mc{E}_{\mathrm{MUB}}$. By definition,
\begin{equation}\label{app:unbias}
\begin{split}
    \mbb{E}_{U,\mb{b}}\widehat{O_0} 
    &=\sum_{U\in\mc{E}_{\mathrm{MUB}},\mb{b}} \frac{\bk{O_0}_{U,\mb{b}}}{p_U}\cdot p_U \cdot \bk{\rho}_{U,\mb{b}} \\
    &=\sum_{U\in\mc{E}_{\mathrm{MUB}},\mb{b}} \bk{O_0}_{U,\mb{b}}\bk{\rho}_{U,\mb{b}}\\
    &=\tr(O_0\rho)+\tr(O_0)\tr(\rho)=\tr(O_0\rho).\\
\end{split}
\end{equation}
where in the third line we apply the 2-design of MUB as that for Eq.~\eqref{varMain}
in main text. It is clear that $\hat{O}=\widehat{O_0}+2^{-n}\tr(O)$ is also unbiased by some constant shift.

\subsection{Proof of Theorem 3}\label{ap:theorem-3}
Here we calculate the variance of the estimator in this section. By definition, 
\begin{equation}
\begin{split}
    \mathrm{Var}_{\text{biased}-\mc{E}_{\mathrm{MUB}}}(\hat{O})&=\mathrm{Var}_{\text{biased}-\mc{E}_{\mathrm{MUB}}}(\widehat{O_0}) \\
    &=\mathbb{E} \widehat{O_0}^2-\bar{O_0}^2.
\end{split}  
\end{equation}
By maximizing on possible unknown state $\rho$, the first term is bounded by 
\begin{equation}
  \begin{split}
 \mathbb{E}_{U,\mb{b}} \widehat{O_0}^2&\leq\max_{\rho}\mathbb{E}_{U,\mb{b}}\widehat{O_0}^2\\
  &=\max_{\rho}\mathbb{E}_{U,\mb{b}} \frac{\bk{O_0}_{U,\mb{b}}^2}{{p_U}^2 }\\ 
 &=\max_{\rho}\sum_{U} \sum_{\mb{b}} \frac{\bk{O_0}_{U,\mb{b}}^2}{{p_U}^2 } \cdot p_U \cdot \bk{\rho}_{U,\mb{b}}\\
 &=\max_{\rho}\sum_{U} \sum_{\mb{b}} \frac{\bk{O_0}_{U,\mb{b}}^2\bk{\rho}_{U,\mb{b}}}{p_U }.
\end{split}
\label{equ:mod shadow norm}
\end{equation}
Considering the constraint that $\sum_{\mb{b}} \bk{\rho}_{U,\mb{b}}=1$, for a fixed $U$, one can further bound it using the maximal term of $\bk{O_0}_{U,\mb{b}}^2$, and finally has
\begin{equation}
    \mathrm{Var}_{\text{biased}-\mc{E}_{\mathrm{MUB}}}(\hat{O})\leq \sum_{U\in\mc{E}_{\mathrm{MUB}}}\frac{\max_{\mb{b}\in{\{0,1\}}^{n}} \bk{O_0}_{U,\mb{b}}^2}{p_U}.
    \label{equ:mod norm upper}
\end{equation}
And our optimization problem on this upper bound is as follows.
\begin{equation}
\begin{split}
    & \textbf{min ${p_U}$:}   \sum_{U\in\mc{E}_{\mathrm{MUB}}}\frac{\max_{\mb{b}\in{\{0,1\}}^{n}} \bk{O_0}_{U,\mb{b}}^2}{p_U},\\
& \textbf{s,t:  } \sum_{U \in \mc{E}_{\mathrm{MUB}}}{p_U}=1.
\end{split}
\label{Equ: opt_task}
\end{equation}
Using the Lagrange dual method, the solution to this problem is $(\sum_{U \in \mc{E}_{\mathrm{MUB}}} B_U)^2$, and the corresponding $p_U=B_U / \sum_{U' \in \mc{E} }B_{U'}$, with $B_U$ being short for $\max_{\mb{b}\in{\{0,1\}}^{n}} |\bk{O_0}_{U,\mb{b}}|$.  For the observable that is decomposed in the canonical form in Eq.\eqref{MUB:Dec} 
in main text, one has $ \bk{O_0}_{U,\mb{b}}= \alpha_{U,\mb{b}}-2^{-n}\tr(O)$.

The optimal solution can be further simplified to the following form,
\begin{equation}\label{}
\begin{split}
    \sum_{U \in \mc{E}_{\mathrm{MUB}} }B_U
    &=\sum_{U \in \mc{E}_{\mathrm{MUB}} }\max_{\mb{b}\in{\{0,1\}}^{n}} |\tr(O_0 \Phi_{U,\mb{b}})|\\
    &=\sum_{U \in \mc{E}_{\mathrm{MUB}} }\max_{\mb{b}\in{\{0,1\}}^{n}} |2^{-n}\sum_{\mb{m}\in{\{0,1\}}^{n}} (-1)^{\mb{b} \cdot \mb{m}}\tr (S_{\mb{m}}O_0)|\\
    &\leq \sum_{U \in \mc{E}_{\mathrm{MUB}} }2^{-n} \sum_{\mb{m}\in{\{0,1\}}^{n}} |\tr(S_{\mb{m}}O_0)|\\
    &=2^{-n} \sum_{P\in {\mb{P}_n^*}} {|\tr (P O_0)|} =: \mc{D}(O_0).
\end{split}
\end{equation}
Here in the second line, we express the state $\Phi_{U,\mb{b}}$ by its stabilizers using Eq.~\eqref{Phi2Sm}, and the third line is by the fact that MUBs can cover all the Paulis, and also $\tr(\id O_0)=0$. The final line we use the definition of stabilizer norm \cite{campbell2011catalysis}.
As is result, the optimal solution is upper bounded by $\mc{D}(O_0)^2$. We also remark that $\mc{D}(O_0)=\mc{D}(O)-|\tr(O)|/d$. 

Some remarks on the possibility of $p_U=0$. Suppose for some $U$, $p_U=0$ and we have $\max_{\mb{b}\in{\{0,1\}}^{n}} |\bk{O_0}_{U,\mb{b}}|=0$. That is, $\bk{O_0}_{U,\mb{b}}=0,\ \forall \mb{b}$ under this $U$. Let us denote the set with $p_U\neq 0$ as $\mc{E}'\subset \mc{E}_{\mathrm{MUB}}$. We show as follows that the possibility of $p_U=0$ does \emph{not} affect the unbiasedness proved in Appendix ~\ref{ap:unbiased estimator}. In particular, for Eq.~\eqref{app:unbias}, we can rewrite its second line by only summing the unitary in $\mc{E}'$,
\begin{equation}\label{}
\begin{split}
    \mbb{E}_{U,\mb{b}}\widehat{O_0}&=\sum_{U\in\mc{E}',\mb{b}} \bk{O_0}_{U,\mb{b}}\bk{\rho}_{U,\mb{b}}\\
    &=\sum_{U\in\mc{E}_{\mathrm{MUB}},\mb{b}} \bk{O_0}_{U,\mb{b}}\bk{\rho}_{U,\mb{b}}=\tr(\rho O_0),
\end{split}
\end{equation}
where in the second line we extend the range of summation on $U$ by the fact that when $U\in \mc{E}_{\mathrm{MUB}}/\mc{E}'$, $\bk{O_0}_{U,\mb{b}}=0, \forall \mb{b}$. 

\subsection{Proof of Lemma \ref{lemma:Ostab}}\label{appsec:lemma:Ostab}
\textbf{First, we prove the first property as follows.}
\begin{proof}
    \begin{equation}
  \begin{split}
  \alpha_{U,\mb{b}}
  &=\langle 0|V U^{\dagger}|\mb{b}\rangle \langle \mb{b}|U V^{\dagger}|0\rangle\\
  &=\frac{1}{2^n} \sum_{\mb{m}\in \{0,1\}^n} (-1)^{\mb{m} \cdot \mb{b}} \langle 0|V U^{\dagger} Z_{\mb{m}} U V^{\dagger}|0\rangle\\
\end{split}
\label{equ:sum_max_1}
\end{equation}

According to Observation \ref{ob: MCS_MUB}, we have $\bigcup_{U\in \mc{E}_{\mathrm{MUB}}}\{U^{\dagger}Z_{\mb{m}}U\}_{\mb{m}\neq 0}=\mb{P}_{*}^{n}$. And by the definition of Clifford, rotate a Clifford and we have $\bigcup_{U\in \mc{E}_{\mathrm{MUB}}}\{|VU^{\dagger}Z_{\mb{m}}UV^{\dagger}|\}_{\mb{m}\neq 0}=\mb{P}_{*}^{n}$. Also, the intersection of any two sets $\{|VU^{\dagger}Z_{\mb{m}}UV^{\dagger}|\}_{\mb{m}\neq 0}$ is $\emptyset$.

Suppose $VU^{\dagger}Z_{\mb{m}}UV^{\dagger} = (-1)^{p_{U,\mb{m}}}S_{U,\mb{m}}$, where $p_{U,\mb{m}}$ is a binary indicator of phases, and $S_{U,\mb{m}}$ is a Pauli string. Then

\begin{equation}
    \langle 0|V U^{\dagger}Z_{\mb{m}}U V^{\dagger}|0\rangle = (-1)^{p_{U,\mb{m}}}\times \mb{1} \{S_{U,\mb{m}}\triangleright Z^{\otimes n}\}.
\end{equation}

$A \triangleright Z^{\otimes n}$ \cite{huang2021efficient} means that the Pauli string $A$ consists of solely $\mbb{I}$ and $Z$. For example, $IZ,ZI,ZZ \triangleright ZZ$ and $XI \ntriangleright ZZ$. Generally $\alpha_{U,\mb{b}}$ can be rewrited as
\begin{equation}
    \alpha_{U,\mb{b}} = \frac{1}{2^n}\sum_{\mb{m}\in \{0,1\}^n} (-1)^{\mb{m}\cdot \mb{b}+p_{U,\mb{m}}}\times \mb{1} \{S_{U,\mb{m}}\triangleright Z^{\otimes n}\}\leq  \frac{1}{2^n}\sum_{\mb{m}\in \{0,1\}^n}  \mb{1} \{S_{U,\mb{m}}\triangleright Z^{\otimes n}\}.
\end{equation}

The inequality holds if and only if $\mb{m} \cdot \mb{b}+p_{U,\mb{m}}=0$ for all $\mb{m}:\{S_{U,\mb{m}}\triangleright Z^{\otimes n}\}$. Since $\mb{m}$ that satisfies these solution can form a group, where the solution for $\mb{b}$ must exist. Therefore

\begin{equation}
    \max_{\mb{b}\in \{0,1\}^n} \alpha_{U,\mb{b}}=\frac{1}{2^n}\sum_{\mb{m}\in \{0,1\}^n} \mb{1} \{S_{U,\mb{m}}\triangleright Z^{\otimes n}\}=\frac{1}{2^n}+\frac{1}{2^n}\sum_{\mb{m}\neq \mb{0}} \mb{1} \{S_{U,\mb{m}}\triangleright Z^{\otimes n}\}.
    \label{Equ: a_ub_cliff}
\end{equation}

And 

\begin{equation}
    \sum_{U\in \mc{E}_{\mathrm{MUB}}}\max_{\mb{b}\in \{0,1\}^n} \alpha_{U,\mb{b}}=\frac{2^n+1}{2^n}+\frac{1}{2^n}\sum_{S\in\mb{P}_*^n} \mb{1} \{S\triangleright Z^{\otimes n}\} = 2.
    \label{Equ:Cliff-max}
\end{equation}
\end{proof}

\textbf{Next, we prove the second property as follows.}
\begin{proof}
    In beginning of the proof, we consider the Z-Tableau of $UV^{\dagger}:$ $T_{UV^{\dagger}}=[C,D]$. We call several rowsum operations in \cite{aaronson2004improved} to perform Gaussian elimination on the matrix $C$. The result of the Gaussian elimination is the Clifford $W$, whose matrix $C$ is turned to the Row simplest form
        $\begin{pmatrix}
        C^{'}_{r_U \times n}\\
        \mbb{O} 
    \end{pmatrix}$.

    A plain conclusion is that call rowsum of Z-Tableau will only change the order of summation in Eq.~\eqref{equ:sum_max_1}, not the value of $\alpha_{U,\mb{b}}$. Therefore, 
    \begin{equation}
    \begin{split}
        \alpha_{U,\mb{b}}& = \frac{1}{2^n} \sum_{\mb{m}\in \{0,1\}^n} (-1)^{\mb{m} \cdot \mb{b}} \langle 0|W^{\dagger} Z_{\mb{m}} W|0\rangle\\
        &=\frac{1}{2^n} \sum_{\mb{m'}\in \{0,1\}^{n-r_U}} (-1)^{\mb{m'} \cdot \mb{b}} \langle 0|W^{\dagger} Z_{\mb{m'}} W|0\rangle\\
    \end{split}
    \end{equation}
The vector $\mb{m'}=[0,...,0,m_{r_U},...,m_{n-1}]$. The equation holds true because $\langle 0|W^{\dagger} Z_{\mb{m}} W|0\rangle=0$ for all $W^{\dagger} Z_{\mb{m'}} W$ that is composed of more than $\{\mbb{I},Z\}$, which is $\{W^{\dagger} Z_{\mb{m'}} W \text{ } \ntriangleright Z^{\otimes n}\}$. In this scenario, the first $r_U$ bits of $\mb{m}$ can't be all zero. Since  $W^{\dagger}Z_{\mb{m'}}W$ now consists of solely Z-Paulis, we set  $W^{\dagger}Z_{\mb{m'}}W = \prod_{j=r_U}^{n-1} (-1)^{p_j m_j} Z_j^{m_j}$. Hence

\begin{equation}
    \langle 0|W^{\dagger} Z_{\mb{m'}} W|0\rangle = \langle 0|\prod_{j=r_U}^{n-1} (-1)^{p_j m_j} Z_j^{m_j}|0\rangle.
\end{equation}
A special property for arbitrary Z-Pauli matrices $Z_1,Z_2$ is that: $\langle0|Z_1 Z_2 |0\rangle=\langle0|Z_1|0\rangle\langle0|Z_2|0\rangle$.  Hence the upper formula can be further simplified to 
\begin{equation}
\begin{split}
    \langle 0|W^{\dagger} Z_{\mb{m'}} W|0\rangle = \langle 0|\prod_{j=r_U}^{n-1} (-1)^{p_j m_j} Z_j^{m_j}|0\rangle
    = \prod_{j=r_U}^{n-1}(-1)^{m_j\cdot(p_j+\sum_{k}^{n-1}\delta_{jk})},\\
\end{split}
\end{equation}
    where $\delta_{jk}$ is the element of matrix $D$ in the Z-Tableau of $W$ (after Gaussian elimination). Now as a summarization,
    \begin{equation}
        \begin{split}
            \alpha_{U,\mb{b}} 
            &= \frac{1}{2^n} \sum_{\mb{m'}\in \{0,1\}^{n-r_U}} (-1)^{\mb{m'} \cdot \mb{b}} \langle 0|W^{\dagger} Z_{\mb{m'}} W|0\rangle\\
            &=\frac{1}{2^n} \sum_{\mb{m'}\in \{0,1\}^{n-r_U}} (-1)^{\mb{m'} \cdot \mb{b}} \prod_{j=r_U}^{n-1}(-1)^{m_j\cdot(p_j+\sum_{k}^{n-1}\delta_{jk})}\\
            &=\frac{1}{2^n} \sum_{\mb{m'}\in \{0,1\}^{n-r_U}}\prod_{j=r_U}^{n-1}(-1)^{m_j\cdot(p_j+\sum_{k}^{n-1}\delta_{jk}+b_j)}\\
            &=\frac{1}{2^n}\prod_{j=r_U}^{n-1}1+(-1)^{b_j+p_j+\sum_{k}^{n-1}\delta_{jk}}\\
            &=\begin{cases}
                2^{-r_U}, b_j+p_j+\sum_{k}^{n-1}\delta_{jk}=0 \text{  for all } j,\\
                0, \text{  otherwise}.\\
            \end{cases}
        \end{split}
        \label{Equ:rU}
    \end{equation}
    Therefore, for any $\mb{b} \in \{0,1\}^n$, $\alpha_{U,\mb{b}}$ has $2^{r_U}$ in $2^{-r_U}$ and the remaining $2^n-2^{r_U}$ in 0. 
\end{proof}
\subsection{Proof of Proposition \ref{prop:zero}}\label{appsec:prop:zero}
\begin{proof}

   Suppose for single shot, the unitary rotation is $U \in \mc{E}_{\mathrm{MUB}}$ and the measurement outcome is $|\mb{b}\rangle$. Before calculating the variance, we discuss the property of the probability $p_U$. According to the definition, $p_U=B_U/{\sum_{U'\in \mc{E}_{\mathrm{MUB}}}}B_{U'}$. By introducing Eq. \eqref{Equ:Cliff-max}, we conclude that 
\begin{equation}
    {\sum_{U'\in \mc{E}_{\mathrm{MUB}}}}B_{U'}={\sum_{U'\in \mc{E}_{\mathrm{MUB}}}} (\max_{\mb{b}}\alpha_{U,\mb{b}}-2^{-n})=1-2^{-n}.
    \label{Equ: Bu_sum}
\end{equation}

By applying Eq.~\eqref{Equ: Var_biased} 
in main text, we calculate the variance

 \begin{equation}
    \mathrm{Var}_{\text{biased}-\mc{E}_{\mathrm{MUB}}}(\hat{O}) \leq  \left(\sum_{U' \in \mc{E}_{\mathrm{MUB}}} B_{U'}\right)^2\leq 1.
    \label{Equ: Var_biased_Clifford}
    \end{equation}

A specific example is that for $\rho=O$ are Clifford stabilizers, the variance is zero. Here we prove it. 

Also according to Eq. \eqref{Equ:rU}, $\max_{\mb{b}}\alpha_{U,\mb{b}}=2^{-r_U}$. Thus,
\begin{equation}
    \sum_{U\in \mc{E}_{\mathrm{MUB}}} 2^{-r_U}=2.
    \label{Equ: rU_sum}
\end{equation}
In this task, the result of each snapshots for $U$ and $\mb{b}$ is defined as
\begin{equation}
    \hat{O} =\frac{\bk{O_0}_{U,\mb{b}}}{p_U}+\frac{\tr(O)}{2^n}.
\end{equation}
Hence the estimation is of no variance. Since $O=\rho $ is a stabilizer state, the result of $\alpha_{U,\mb{b}}$ has two options: $2^{-r_U}$ and $0$, which also equals to the probability of the result $\mb{b}$ being measured. This means that the only possible result of $\alpha_{U,\mb{b}}$ is $2^{-r_U}$.  As a result, 
\begin{equation}
    \begin{split}
        \text{Var}[\hat{O}]
        &=\mbb{E_{U,\mb{b}}}(\hat{O}-1)^2\\
        &=\sum_{U\in \mc{E}_{\mathrm{MUB}}}p_U \sum_{\mb{b} \in\{0,1\}^n} \bk{\rho}_{U,\mb{b}}[\frac{\bk{O_0}_{U,\mb{b}}}{p_U}+2^{-n}-1]^2\\
        &=\sum_{U\in \mc{E}_{\mathrm{MUB}}}p_U \sum_{\mb{b} \in\{0,1\}^n} \alpha_{U,\mb{b}}[\frac{\alpha_{U,\mb{b}}-2^{-n}}{B_U/\sum_{U'}B_{U'}}+2^{-n}-1]^2\\
        &=\sum_{U\in \mc{E}_{\mathrm{MUB}}}p_U \sum_{\mb{b} \in\{0,1\}^n} \alpha_{U,\mb{b}}[\frac{\alpha_{U,\mb{b}}-2^{-n}}{|\max_\mb{b} \alpha_{U,\mb{b}}-2^{-n}|}(1-2^{-n})+2^{-n}-1]^2\\
        &=\sum_{U\in \mc{E}_{\mathrm{MUB}}}p_U \sum_{\mb{b} \in\{0,1\}^n} \alpha_{U,\mb{b}}[ \frac{\alpha_{U,\mb{b}}-2^{-n}}{|\max_{\mb{b}}{\alpha_{U,\mb{b}}}-2^{-n}|}(1-2^{-n})+2^{-n}-1]^2\\
        &=\sum_{U\in \mc{E}_{\mathrm{MUB}}}p_U 2^{r_U} 2^{-r_U}[ \frac{2^{-r_U}-2^{-n}}{2^{-r_U}-2^{-n}}(1-2^{-n})+2^{-n}-1]^2\\
        &=0.
    \end{split}
\end{equation}

\end{proof}

\subsection{Efficient sampling of $p_U$ for Pauli strings and Clifford stabilizers}\label{appsec:sample}

The main content of this subsection is to introduce methods to efficiently sample unitary $U\in \mc{E}_{MUB}$ with a probability $p_U$ defined in Eq.~\eqref{op:PU} and \eqref{op:PU1} in the main text for both Pauli strings and Clifford stabilizers.

Initially, we give an algorithm to identify the unique $U\in \mc{E}_{MUB}$ for a given Pauli operator $P\in \mb{P}^*_n$, such that there exists $\mb{m}\in\{0,1\}^n$ satisfying $U^{\dag}Z_\mb{m}U=P$. And we denote this relation as $P \blacktriangleright U$. 

\begin{algorithm}[H]
\caption{Find $U\in \mc{E}_{MUB}$ for Pauli operator $P$ with $P \blacktriangleright U$.}\label{algo:generatorU}
\begin{algorithmic}[1]
\Require
The qubit number $n$,  the non-identity Pauli $P\in \mb{P}^*_n$, and the ensemble $\mc{E}_{MUB}$.
\Ensure
MUB element $U \in \mc{E}_{MUB}$.
\If {$P\triangleright Z^{\otimes n}$} $U=I$
\Else { Decompose $P$ as $P=\bigotimes_{i=0}^{n} i^{a_i b_i} X^{a_i} Z^{b_i}$.}
\State Solve the linear equations: 
\begin{equation}
    \sum_k(\sum_j \beta_{i+j}^{(k)}a_j)v_k = b_i \text{ for all } i
    \label{lin}
\end{equation}
to obtain the parameter $v$ so that the Z-Tableau of MUB element is $[\mathbb{I},D_v]$.
\EndIf
\end{algorithmic}
\end{algorithm}

Eq. \eqref{lin} is indeed a explicit form of $D_v \mb{a} = \mb{b}$ via Eq. \eqref{betasum}, where $\mb{a}$ and $\mb{b}$ are vectors of parameters $\{a_i\}$ and $\{b_i\}$. To solve a system of linear equations, the algorithm needs $O(n^3)$ complexity.

After introducing Algorithm \ref{algo:generatorU}, we illustrate how to efficiently sample Pauli strings and Clifford stabilizers.

\textit{Pauli string.} For a Pauli string $O = \sum_{l} \alpha_{l} P_l$, where $P \in \mb{P}^*_n$ for convenience, the length of Pauli strings $l$ tends to be a constant or polynomial to $n$. Initially, the probability term $B_U$ is computed:
\begin{equation}
        B_U =  \max_{\mb{b}}|\tr(O\Phi_{U,\mb{b}})| = \max_{\mb{b}}|\sum_{\mb{m}}(-1)^{\mb{b}\cdot\mb{m}}\alpha_{\mb{m}}|\leq \sum_{P_l\blacktriangleright U}{|\alpha_l|}.
\end{equation}

Then the probability of sampling using the upper bound of $B_U$ is calculated as follows.
\begin{equation}
    p_U = \frac{\sum_{P_l\blacktriangleright U}{|\alpha_l|}}{\sum_{l}{|\alpha_l|}}.
    \label{op:Pauli_PU}
\end{equation}
The computational cost for computing $B_U$ might be exponentially large if utilizing the original formula (Eq.\eqref{op:PU}), hence the rule of efficiently sampling would be violated. In order to avoid this dilema, we utilize Eq.\eqref{op:Pauli_PU} to calculate the probability of sampling unitaries for Pauli observables. This action will undoubtedly amplify the estimation variance since the protocol is no longer optimal. But fortunately, we find that scaling variance of biased-MCM remains for estimating Pauli observables.

\begin{equation}
    \mathrm{Var}_{\text{biased}-\mc{E}_{\mathrm{MUB}}}(\hat{O})\leq\sum_{U\in\mc{E}_{\mathrm{MUB}}}\frac{\max_{\mb{b}\in{\{0,1\}}^{n}} \bk{O_0}_{U,\mb{b}}^2}{p_U} \leq (\sum_l |\alpha_l|)^2=\mc{D}(O)^2
\end{equation}
As illustrated above, the scaling variance is equal to the result of Theorem \ref{th:b-MCM}. The sampling process shows as follows. First, search all MUB elements $U$ so that there exists $P_l\blacktriangleright U$ individually using algorithm \ref{algo:generatorU}. Second, compute the probability $p_U$ of these elements using Eq.\eqref{op:Pauli_PU}, and using the probability for sampling. In conclusion, it takes $O(n^3 l)$ time to compute the probability of sampling, which can be reused for each sample process.

\textit{Clifford stabilizer.} For a Clifford stabilizer $O = V^{\dagger}|\mb{0}\rangle \langle\mb{0}|V$ with $V$ a Clifford unitary, the sampling probability shows

\begin{equation}
    \begin{split}
        p_U
        &=\frac{B_U}{\sum_{U' \in \mc{E}_{\mathrm{MUB}} }B_{U'}}\\
        &=\frac{\max_{\mb{b}\in \{0,1\}^n} \alpha_{U,\mb{b}}-2^{-n}}{1-2^{-n}}\\
        &=\frac{1}{2^n-1} \sum_{\mb{m}\neq 0}\mb{1}\{ V U^{\dagger}Z_{\mb{m}} U V^{\dagger}\triangleright Z^{\otimes n}\}\\
        &=\frac{1}{2^n-1} \sum_{\mb{m}\neq 0}\mb{1}\{ U^{\dagger}Z_{\mb{m}} U \blacktriangleright V\}
    \end{split}
\end{equation}
via the definition of 
 $p_U$ and Eq.\eqref{Equ: a_ub_cliff}. Therefore, the sampling protocol is as follows. First, sample a non-identity generator $P_V$ of Clifford $V$ with equal probability. Second, find the sole MUB element $U$ so that $P_V \blacktriangleright U$ using algorithm \ref{algo:generatorU}. Third, compute the probability 
 \begin{equation}
     p_U = \frac{2^{-r_U}-2^{-n}}{1-2^{-n}}
 \end{equation} derived from Eq.\eqref{Equ:rU}, where $r_U=\mathrm{rank}_{\mbb{F}_2}(C)$ on the binary field, and the Z-Tableau of $UV^{\dag}$ is $T_{UV^{\dagger}}=[C,D]$. In conclusion, the classical computation complexity of each sampling process is $O(n^3)$.

\end{appendix}

\end{document}